\declaretheoremstyle[headfont=\normalfont\bfseries]{bfthmstyle}
\declaretheorem[sharenumber=Theorem,style=bfthmstyle]{Lemma}
\declaretheorem[sharenumber=Theorem,style=bfthmstyle]{Proposition}
\declaretheorem[sharenumber=Theorem,style=bfthmstyle]{Corollary}
\declaretheoremstyle[headfont=\normalfont\bfseries,qed={$\diamond$}]{bfdefstyle}
\declaretheorem[sharenumber=Theorem,style=bfdefstyle]{Definition}
\declaretheorem[sharenumber=Theorem,style=bfdefstyle]{Definition-Proposition}
\declaretheorem[sharenumber=Theorem,style=bfdefstyle]{Definition-Lemma}
\declaretheorem[sharenumber=Theorem,style=bfdefstyle]{Example}
\declaretheoremstyle[headfont=\normalfont\bfseries,qed={$\diamond$}]{bfremstyle}
\declaretheorem[sharenumber=Theorem,style=bfdefstyle]{Remark}
\providecommand{\pr}[1]{\left(#1\right)} 
\providecommand{\tes}[1]{\left\lbrace\!\left\lbrace#1\right\rbrace\!\right\rbrace} 
\newcommand{\abs}[1]{\left\vert #1 \right\vert}
\newcommand{\set}[1]{\left\lbrace #1 \right\rbrace}
\renewcommand{\exp}[1]{e^{#1}}
\numberwithin{equation}{section}
\title{Generating functions for irreversible Hamiltonian systems}
\author{Dan Goreac, Jonas Kirchhoff and Bernhard Maschke}
\date{\today}
\begin{document}
\setlength{\parindent}{0em}
\pagestyle{fancy}
\lhead{Dan Goreac, Jonas Kirchhoff, Bernhard Maschke}
\rhead{Irreversible Hamiltonian systems}

\maketitle

\paragraph{Abstract} The definition of conservative-irreversible functions is extended to smooth manifolds. The local representation of these functions is studied and reveals that not each conservative-irreversible function is given by the weighted product of almost Poisson brackets. The biquadratic functions given by conservative-irreversible functions are studied and reveal a possibility for an algebraic framework on arbitrary and in particular complex algebras.

\paragraph{Keywords} thermodynamic systems, geometric methods, metriplectic systems

\vfill
\par\noindent\rule{5cm}{0.4pt}\\
\begin{footnotesize}
Corresponding author: Jonas Kirchhoff\\[1em]
Dan Goreac\\
Shandong University Weihai, School of Mathematics and Statistics, 180 Wenhua W Road, 264209 Weihai, Shandong, P.R. China\\
LAMA, Univ Gustave Eiffel, UPEM, Univ Paris Est Creteil, CNRS, 5 Boulevard Descartes, 77454 Marne-la-Vall{\'e}e, France\\
E-mail: dan.goreac@u-pem.fr\\[1em]
Jonas Kirchhoff\\
Institut für Mathematik, Technische Universität Ilmenau, Weimarer Stra\ss e 25, 98693 Ilmenau, Germany\\
E-mail: jonas.kirchhoff@tu-ilmenau.de\\[1em]
Bernhard Maschke\\
Lab. d’Automatique et de Genie des Procedes, Universite Claude Bernard Lyon-1, 43 Boulevard du 11 Novembre 1918, 69622 Villeurbanne Cedex, Auvergne-Rhone-Alpes, France\\
E-mail: bernhard.maschke@univ-lyon1.fr\\[1em]
Jonas Kirchhoff thanks the Technische Universität Ilmenau and the Freistaat Thüringen for their financial support as part of the Thüringer Graduiertenförderung. Bernhard Maschke acknowledges support by by IMPACTS (ANR-21-CE48-0018).
\end{footnotesize}

\newpage

\section{Introduction}

Hamiltonian systems defined on symplectic~\cite{LibeMarl87,Arno89} or, more general, Poisson manifolds~\cite{MarsRatiu99,Butt07} provide a well-established geometric theory of classical mechanics. For dynamical systems arising from irreversible thermodynamic systems, the classical Poisson geometry fails, as the effect of the \textit{entropy} of the system, which characterises reversible and irreversible processes~\cite{LiebYngva98}, has to be taken into account. To this end, several different generalisations of Poisson geometry have been proposed.

One approach, which was pioneered by~\cite{Kauf84,Morr84,Grme84} is to assume that the dissipating vector field depends, analogously to the Hamiltonian vector field, linearly on the exterior derivative of the entropy function. This results in a Leibniz bracket~\cite{OrtePlBi04}, i.e. a biderivation on $\mathcal{C}^\infty(M)$, which is usually assumed to be symmetric and non-negative, whose Leibniz vector fields with respect to the entropy model the irreversible entropy creation. To ensure that the flow of the resulting vector field satisfies the laws of thermodynamics, the non-interaction conditions that the entropy function is a Casimir of the (almost) Poisson bracket and the Hamiltonian energy function is a Casimir of the non-negative Leibniz bracket. The resulting class of systems is known as metriplectic systems~\cite{Morr86,Guha07}, or the GENERIC (\textbf{g}eneral \textbf{e}quation for \textbf{n}on-\textbf{e}quilibrium \textbf{r}eversible-\textbf{i}rreversible \textbf{c}oupling) framework~\cite{Oett05}.

A second possibility is to assume that the dissipating vector field depends, possibly non-linearly, on the Hamiltonian function~\cite{EdwaBeri91,BeriEdwa94}. Then, in addition to the noninteraction condition that the entropy is a Casimir of the (almost) Poisson bracket, the requirement that the contraction of the dissipating vector field with the Hamiltonian vanishes and with the entropy is non-negative ensures that the flow satisfies the laws of thermodynamics. Taking the free energy as the Hamiltonian function of the system, one sees that metriplectic systems can due to the noninteraction condition be viewed as a particular class of these systems. 

Yet a third possibility of extending the Poisson bracket are irreversible Hamiltonian systems, where the dissipating vector field is a Hamiltonian vector field with respect to a multiplied with the contraction of itself with the entropy and a non-negative scalar factor depending on the exterior derivative of the Hamiltonian, or a convex combination of those. The flow of the resulting vector field fulfils, provided that the entropy function is a Casimir of the original Poisson bracket, automatically the laws of thermodynamics.

In the present note, we consider a geometric structure which assigns a dissipating vector field to an arbitrary Hamiltonian function and an arbitrary entropy function so that the resulting flow automatically fulfils the laws of thermodynamics. For simplicity, we assume that this assignment is linear in the entropy and quadratic in the Hamiltonian function, and depends only on the exterior derivatives, resulting in a generalised bracket operation. This construction yields the conservative-irreversible functions as defined in~\cite{MascKirc23b}. We study the properties and demonstrate how conservative-irreversible functions may be used to combine the three approaches to thermodynamic systems in one framework.

The paper is organised as follows. In section~\ref{sec:2}, we collect a list of properties that we wish from a generalised bracket operation form thermodynamics; the resulting definition being conservative-irreversible functions. Then, following \cite{MascKirc23b}, we characterise the 4-contravariant tensor fields which are uniquely associated to these functions. In particular, we may show that a particular symmetrisation of the metriplectic 4-brackets introduced in \cite{MorrUpdi23} yields conservative-irreversible functions.

In section~\ref{sec:3}, we give the local representation of the tensor fields associated with the conservative-irreversible functions. Having identified simple conservative-irreversible functions, which are, up to scalar multiplication and arrangement of the arguments, tensor squares of almost Poisson brackets, as an important subclass, we use local representations to study the decomposability of an arbitrary conservative-irreversible function into simple functions. In particular, this implies a one-to-one correspondence between conservative-irreversible functions and metriplectic 4-brackets.

In the section~\ref{sec:4}, we characterize the biquadratic functions, which are induced by conservative-irreversible functions.

In particular, these biquadratic function define the entropy dynamics of a new class of irreversible Hamiltonian systems, which are defined using conservative-irreversible functions in section~\ref{sec:5}. We demonstrate how these systems unify, in some sense, metriplectic systems, the systems of Beris and Edwards and the original irreversible Hamiltonian systems of Ramirez et al.

\section{Conservative-irreversible functions on manifolds}\label{sec:2}


\subsection{Generalised brackets for thermodynamics}\label{sec:2.1}

We study extensions Hamiltonian systems defined on a smooth almost Poisson manifold~\cite{CantLeonDieg99} $(M,\{\cdot,\cdot\})$ where the Hamiltonian vector field is perturbed by a vector field $\Xi(s,h)$, i.e.
\begin{align}\label{eq:general_extension}
X_{s,h} := \{\cdot,h\} + \Xi(s,h),
\end{align}
for a functions $h,s\in\mathcal{C}^\infty(M)$, called the \textit{Hamiltonian} and the \textit{entropy} function. Furthermore it is assumed that the either \textit{weak noninteraction condition} $\{s,h\} = 0$ or the \textit{strong noninteraction condition} $\{s,\cdot\}\equiv 0$ holds~\cite{Morr09}. The function $\Xi:\mathcal{C}^\infty(M)\times\mathcal{C}^\infty(M)\to\mathfrak{X}(M)$, where $\mathfrak{X}(M) := \Gamma\mathcal{T}M$ denotes the space of global vector fields, is to be chosen so that the flow of the vector field~\eqref{eq:general_extension} satisfies the laws of thermodynamics, i.e.
\begin{align*}
\tfrac{\mathrm{d}}{\mathrm{d}t}s\circ x \geq 0 = \tfrac{\mathrm{d}}{\mathrm{d}t} h\circ x
\end{align*}
for each integral curve $x\in\mathcal{C}^\infty(I,M)$, $I\subseteq\mathbb{R}$ an open interval, of~\eqref{eq:general_extension}. Due to the noninteraction condition this is the case if, and only if,
\begin{align}\label{eq:thermodynamical_extension}
\iota_s \Xi(s,h) \geq 0 = \iota_{h} \Xi(s,h),
\end{align}
where $\iota$ denotes the contraction of a function with a vector field. 

In this paper, we study the geometric structure on which the functions $\Xi$ with~\eqref{eq:thermodynamical_extension} independently of the choice of the almost Poisson bracket $\{\cdot,\cdot\}$ are based on. To derive a meaningful geometric structure, we restrict ourself to ``simple'' functions $\Xi$. In particular, $\Xi$ should only depend on local data, i.e. for each $s,s',h,h'\in\mathcal{C}^\infty(M)$ and $U\subseteq M$ open with $s\vert_U - s'\vert_U = h\vert_U-h'\vert_U = 0$, we have $\Xi(s,h)\vert_U = \Xi(s',h')\vert_U$. It seems natural to assume that $\Xi$ is bilinear, thus giving a ternary bracket operation on $\mathcal{C}^\infty(M)$, and fulfil $\iota_f \Xi(g,\cdot) = \iota_g \Xi(f,\cdot)$ for all $f,g\in\mathcal{C}^\infty(M)$. Then, however, we have
\begin{align*}
\forall s,h\in\mathcal{C}^\infty(M): \iota_s \Xi(s,h)\geq 0\geq -\iota_s \Xi(s,h) = \iota_s \Xi(s,-h) \geq 0
\end{align*}
and hence $\Xi$ vanishes identically. Therefore, we instead assume that $\Xi$ is linear in the first and quadratic in the third argument. Lastly, we impose the condition that $\Xi$ is invariant under locally constant shifts of its arguments, i.e. $\Xi(s,h) = \Xi(s+c,h+c')$ for all $s,h,c,c'\in\mathcal{C}^\infty(M)$ with $\mathrm{d}c = \mathrm{d}c' = 0$, which represents the fact that energy and entropy of a given system may only be known up to some unknown (locally) constant perturbation. This defines the geometric structure of thermodynamic systems, analogously to Poisson geometry for classical mechanics. When we recall that vector fields are precisely the derivations on the algebra of smooth functions, then the functions $\Xi$ which we have described are precisely covered by the following definition, which is the direct generalisation of~\cite[Definition 3]{MascKirc23b} to manifolds.

\begin{Definition}\label{def:conservative_irreversible}
Let $M$ be a smooth manifold. We call a function $E: \big(\mathcal{C}^\infty(M)\big)^3\to \mathcal{C}^\infty(M)$ \textit{conservative-irreversible} if, and only if, there exists a function $e:\big(\mathcal{C}^\infty(M)\big)^4\to \mathcal{C}^\infty(M)$ so that
\begin{align}\label{eq:extension_CI}
\forall f,g,h\in\mathcal{C}^\infty(M): E(f,g,h) = e(f,g,h,h)
\end{align}
and, for all $f,g,h\in\mathcal{C}^\infty(M)$,
\begin{enumerate}[(i)]
\item $e$ is is a derivation, i.e. it is $\mathbb{R}$-linear and fulfils the Leibniz rule, in each argument,
\item $e(f,g,\cdot,\cdot)$ is symmetric,
\item $e(\cdot,\cdot,h,h)$ is symmetric and pointwise positive semi-definite,
\item $e(h,\cdot,h,h)$ vanishes identically,
\end{enumerate}
\end{Definition}

\begin{Remark}
Each conservative-irreversible function $E$ induces the function
\begin{align*}
\Xi_E: \mathcal{C}^\infty(M)\times\mathcal{C}^\infty(M)\to\mathfrak{X}(M),\qquad (s,h)\mapsto E(\cdot,s,h)
\end{align*}
which has exactly the properties described after~\eqref{eq:thermodynamical_extension}. Conversely, each such function $\Xi$ induces the conservative-irreversible function
\begin{align*}
E_\Xi:\big(\mathcal{C}^\infty(M)\big)^3\to \mathcal{C}^\infty(M),\qquad (f,s,h)\mapsto \iota_f\Xi(s,h);
\end{align*}
and these constructions are inverse to each other, i.e. $E_{\Xi_E} = E$ and $\Xi_{E_\Xi} = \Xi$.
\end{Remark}

\subsection{Conservative-irreversible functions}\label{sec:2.2}

In this section, we study the properties of conservative-irreversible functions. Before we do that, we ensure that the class of conservative-irreversible functions is non-trivial, when the dimension of the underlying manifold exceeds one.

\begin{Example}\label{ex:nontriv}
Let $M$ be a smooth manifold and let
\begin{align*}
\{\cdot,\cdot\}:\mathcal{C}^\infty(M)\times\mathcal{C}^\infty(M)\to\mathcal{C}^\infty(M)
\end{align*}
be a skew-symmetric biderivation, i.e. an almost Poisson bracket on $M$. Define
\begin{align*}
e:\big(\mathcal{C}^\infty(M)\big)^4\to \mathcal{C}^\infty(M),\qquad (f,s,h,g)\mapsto \frac{1}{2}\left(\{s,h\}\{f,g\}+\{f,h\}\{s,g\}\right).
\end{align*}
Since $\{\cdot,\cdot\}$ is a derivation in both arguments, $e$ is a derivation in each argument. By construction, $e(f,s,\cdot,\cdot)$ and $e(\cdot,\cdot,h,h) = \{\cdot,h\}\{\cdot,h\}$ are symmetric, and $e(s,s,h,h) = \{s,h\}^2$ implies that the latter is non-negative for all $f,s,h\in\mathcal{C}^\infty(M)$. Lastly, we observe
\begin{align*}
e(h,\cdot,h,h) = \underbrace{\{h,h\}}_{ = 0}\{\cdot,h\} = 0
\end{align*}
and hence the function
\begin{align*}
E: \big(\mathcal{C}^\infty(M)\big)^3\to \mathcal{C}^\infty(M),\qquad (f,s,h)\mapsto \{s,h\}\{f,h\}
\end{align*}
is conservative-irreversible. Finally, when the almost Poisson bracket is non-trivial, then $E$ is non-trivial; and the existence of non-trivial almost Poisson brackets in dimension at least two implies that there are non-trivial conservative-irreversible functions.
\end{Example}

\begin{Remark}
The conservative-irreversible functions constructed in Example~\ref{ex:nontriv} are closely related to the irreversible Hamiltonian systems of~\cite{RamiMascSbar13} defined on a smooth manifold $M$, whose dissipating vector field has the form
\begin{align*}
\Xi_{s,h} := (\lambda\circ\mathrm{d}h) \{s,h\}\{\cdot,h\}
\end{align*}
for a Poisson bracket $\{\cdot,\cdot\}$ and a smooth function $\lambda\in\mathcal{C}^\infty(\mathcal{T}^*M,\mathbb{R}_{>0})$.
\end{Remark}

We collect in the next lemmata some properties of the function $e$ given in~\eqref{eq:extension_CI}.

\begin{Lemma}\label{lem:Unicity_4function}
The functions $e$ of Definition~\ref{def:conservative_irreversible} are uniquely determined by their associated conservative-irreversible functions with~\eqref{eq:extension_CI}.
\end{Lemma}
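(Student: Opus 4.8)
The plan is to reconstruct $e$ from its associated conservative-irreversible function $E$ by a polarisation argument carried out in the last two arguments of $e$. Fix $f,g\in\mathcal{C}^\infty(M)$ arbitrarily. By property~(i) of Definition~\ref{def:conservative_irreversible}, the map $(h,k)\mapsto e(f,g,h,k)$ is $\mathbb{R}$-linear in each of its two slots separately (the Leibniz part of~(i) is not needed here), and by property~(ii) it is symmetric. Thus $(h,k)\mapsto e(f,g,h,k)$ is a symmetric $\mathbb{R}$-bilinear map $\mathcal{C}^\infty(M)\times\mathcal{C}^\infty(M)\to\mathcal{C}^\infty(M)$, so the standard polarisation identity applies:
\begin{align*}
e(f,g,h,k) = \tfrac{1}{2}\bigl(e(f,g,h+k,h+k) - e(f,g,h,h) - e(f,g,k,k)\bigr)
\end{align*}
for all $h,k\in\mathcal{C}^\infty(M)$.

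Invoking the defining relation~\eqref{eq:extension_CI} on each of the three diagonal terms on the right-hand side, we obtain
\begin{align*}
e(f,g,h,k) = \tfrac{1}{2}\bigl(E(f,g,h+k) - E(f,g,h) - E(f,g,k)\bigr),
\end{align*}
which manifestly depends only on $E$ (and on the ambient $\mathbb{R}$-linear structure), not on the choice of $e$. Since $f,g,h,k$ were arbitrary, two functions $e,e'$ satisfying (i)--(iv) with $e(f,g,h,h) = E(f,g,h) = e'(f,g,h,h)$ for all $f,g,h$ must coincide, proving the claim. I would remark that properties~(iii) and~(iv) play no role in this argument: uniqueness rests solely on $\mathbb{R}$-bilinearity and symmetry in the final two arguments.

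There is no genuine obstacle here; the only point requiring (a trivial amount of) care is to confirm that conditions~(i) and~(ii) together furnish exactly the symmetric $\mathbb{R}$-bilinearity needed to legitimise the polarisation identity over the ring $\mathcal{C}^\infty(M)$ — which they do, since polarisation is purely an identity of $\mathbb{R}$-modules.
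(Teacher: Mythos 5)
Your proposal is correct and is essentially the paper's own argument: both proofs use only $\mathbb{R}$-linearity and symmetry of $e$ in its last two arguments to polarise, recovering $e(f,g,h,k)$ from the diagonal values $E(f,g,\cdot)$ (the paper phrases this as expanding $e(f,g,h+k,h+k)-e'(f,g,h+k,h+k)=0$, which is the same identity). Your remark that properties (iii) and (iv) are not needed is likewise consistent with the paper's proof.
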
 
\begin{proof}
Let $e,e':\big(\mathcal{C}^\infty(M)\big)^4\to \mathcal{C}^\infty(M)$ be linear and symmetric in the third and fourth argument. Assume further that 
\begin{align*}
\forall f,g,h\in\mathcal{C}^\infty(M): e(f,g,h,h) = e'(f,g,h,h).
\end{align*}
Then, we have, for all $f,g,h,k\in\mathcal{C}^\infty(M)$,
\begin{align*}
0 & = e(f,g,h+k,h+k) - e'(f,g,h+k,h+k)\\
& = e(f,g,h,h)-e'(f,g,h,h)+e(f,g,h,k)+e(f,g,k,h)+ e(f,g,k,k)-e'(f,g,h,k)\\
& \quad -e'(f,g,k,k)-e'(f,g,k,h)\\
& = 2(e(f,g,h,k)-e'(f,g,h,k))
\end{align*}
and hence $e(f,g,h,k) = e'(f,g,h,k)$.
\end{proof}

\begin{Lemma}\label{lem:first_symmetry}
Let $e:\big(\mathcal{C}^\infty(M)\big)^4\to \mathcal{C}^\infty(M)$ be linear in each argument. Then, $e(\cdot,\cdot,h,h)$ is symmetric for all $h\in\mathcal{C}^\infty(M)$ if, and only if, $e(\cdot,\cdot,h,k)$ is symmetric for all $h,k\in\mathcal{C}^\infty(M)$.
\end{Lemma}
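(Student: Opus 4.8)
The plan is a polarization argument in the last two slots, reducing the statement to a short computation together with one structural input. The implication ``$e(\cdot,\cdot,h,k)$ symmetric for all $h,k$ $\Rightarrow$ $e(\cdot,\cdot,h,h)$ symmetric for all $h$'' is immediate by specialising $k=h$, so only the converse carries content.

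For the converse, I would assume $e(\cdot,\cdot,h,h)$ is symmetric for every $h$ and consider the \emph{defect}
\[
d(f,g,h,k) := e(f,g,h,k) - e(g,f,h,k),
\]
which is $\mathbb{R}$-linear in each of its four arguments and, by hypothesis, satisfies $d(f,g,h,h)=0$ for all $f,g,h$. Expanding $0 = d(f,g,h+k,h+k)$ by linearity in the third and fourth arguments and discarding the two vanishing diagonal terms $d(f,g,h,h)$ and $d(f,g,k,k)$ leaves
\[
0 = d(f,g,h,k) + d(f,g,k,h),
\]
so $d$ is \emph{antisymmetric} under exchanging its last two arguments. This is the same polarization trick already used in the proof of Lemma~\ref{lem:Unicity_4function}.

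To finish, I would invoke the symmetry of $e$ --- hence of $d$ --- in its last two arguments, which is part of the hypotheses on the functions in Definition~\ref{def:conservative_irreversible}, cf.\ (ii). Then $d(f,g,h,k) = d(f,g,k,h)$, and together with the antisymmetry obtained above this forces $2\,d(f,g,h,k) = 0$, i.e.\ $d \equiv 0$; equivalently, $e(\cdot,\cdot,h,k)$ is symmetric in its first two arguments for all $h,k$. The point I expect to be the crux is recognising that linearity alone does not suffice: polarization only couples $e(f,g,h,k)$ to $e(f,g,k,h)$, and without symmetry of $e$ in its third and fourth slots the claim is false --- for example $e(f,g,h,k) := \{f,g\}\{h,k\}$ built from a nontrivial almost Poisson bracket is a derivation in every argument with $e(\cdot,\cdot,h,h)\equiv 0$ symmetric, yet $e(\cdot,\cdot,h,k)$ is \emph{skew}, not symmetric, in its first two arguments.
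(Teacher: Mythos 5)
Your proof is correct and follows essentially the same route as the paper: polarise in the last two slots and kill the symmetry defect. Your structural caveat is also accurate: under multilinearity alone the statement is false, as your example $e(f,g,h,k)=\{f,g\}\{h,k\}$ shows, and the paper's own first displayed identity $e(f,g,h,k)=\tfrac{1}{2}\big(e(f,g,h+k,h+k)-e(f,g,h,h)-e(f,g,k,k)\big)$ is valid only when $e$ is symmetric in its third and fourth arguments, i.e.\ it tacitly invokes property (ii) of Definition~\ref{def:conservative_irreversible}. Since the lemma is only ever applied to functions $e$ satisfying that property, your explicit inclusion of this hypothesis repairs an imprecision in the statement rather than constituting a gap in your argument.
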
 
\begin{proof}
The sufficiency is obvious; we show necessity. By linearity in the last two arguments, we have, for all $f,g,h,k\in\mathcal{C}^{\infty}(M)$,
\begin{align*}
e(f,g,h,k) & =\frac{1}{2}(e(f,g,h+k,h+k)-e(f,g,h,h)-e(f,g,k,k))\\
 & =\frac{1}{2}(e(g,f,h+k,h+k)-e(g,f,h,h)-e(g,f,k,k))=e(g,f,h,k).
\end{align*}
\end{proof}

In the following lemma, we characterise the tensor fields that generate conservative-irreversible functions.

\begin{Lemma}\label{lem:tensors}
A function $E: \big(\mathcal{C}^\infty(M)\big)^3\to \mathcal{C}^\infty(M)$ is conservative-irreversible if, and only if, there exists a uniquely defined contravariant tensor field $\varepsilon\in \Gamma(T^4(M))$ so that
\begin{align}\label{eq:tensorfield_representation}
\forall f,g,h\in\mathcal{C}^\infty(M)~\forall x\in M: E(f,g,h)(x) = \varepsilon_x(\mathrm{d}f_x,\mathrm{d}g_x,\mathrm{d}h_x,\mathrm{d}h_x)
\end{align}
and, for all $x\in M$, $\varepsilon_x$ has the properties
\begin{enumerate}[(a)]
\item $\varepsilon_x(\alpha,\beta,\gamma,\delta) = \varepsilon_x(\alpha,\beta,\delta,\gamma)$ for all $\alpha,\beta,\gamma,\delta\in\mathcal{T}_x^*M$,
\item $\varepsilon_x(\alpha,\beta,\gamma,\delta) + \varepsilon_x(\delta,\beta,\alpha,\gamma) + \varepsilon_x(\gamma,\beta,\delta,\alpha) = 0$ for all $\alpha,\beta,\gamma,\delta\in\mathcal{T}_x^*M$,
\item\label{it:nonnegative_tensor} $\varepsilon_x(\cdot,\cdot,\gamma,\gamma)$ is symmetric and positive semi-definite for all $\alpha\in\mathcal{T}^*_x(M)$
\end{enumerate}
\end{Lemma}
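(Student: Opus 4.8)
The plan is to reduce everything to the standard correspondence between multiderivations and contravariant tensor fields. A map $(\mathcal{C}^\infty(M))^4\to\mathcal{C}^\infty(M)$ that is a derivation in each argument is automatically local in each argument (a bump-function argument), so on a chart one may apply the Hadamard lemma: since a derivation annihilates constants and, by the Leibniz rule, kills any product $uv$ of functions vanishing at the base point, the value at a point $x$ depends only on the four differentials at $x$ and is $\mathbb{R}$-multilinear in them. Hence such a map is exactly $(f_1,\dots,f_4)\mapsto\big(x\mapsto\varepsilon_x(\mathrm{d}f_{1,x},\dots,\mathrm{d}f_{4,x})\big)$ for a unique $\varepsilon\in\Gamma(T^4(M))$, and conversely any such $\varepsilon$ yields a $4$-derivation. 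Because, for each $x\in M$, the covectors $\mathrm{d}f_x$ exhaust $\mathcal{T}_x^*M$ as $f$ varies, each algebraic condition on a $4$-derivation is equivalent to the corresponding pointwise condition on its tensor field.

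For necessity, let $E$ be conservative-irreversible with associated function $e$ as in Definition~\ref{def:conservative_irreversible} (unique by Lemma~\ref{lem:Unicity_4function}). Tensorialising $e$ produces $\varepsilon\in\Gamma(T^4(M))$, and \eqref{eq:extension_CI} turns into \eqref{eq:tensorfield_representation}. Property~(a) is the pointwise form of the symmetry~(ii) of $e$ in its last two arguments, and property~(c) is the pointwise form of the symmetry and positive semi-definiteness~(iii) of $e(\cdot,\cdot,h,h)$. For~(b) I would fix $\beta\in\mathcal{T}_x^*M$ and polarise the homogeneous cubic form $\gamma\mapsto\varepsilon_x(\gamma,\beta,\gamma,\gamma)$; this form vanishes identically because $\gamma=\mathrm{d}h_x$ ranges over all of $\mathcal{T}_x^*M$ and $e(h,\cdot,h,h)\equiv0$ by~(iv). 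Over $\mathbb{R}$ the associated symmetric trilinear form is therefore zero, and collapsing its six permuted terms by means of the slot symmetry~(a) gives precisely the three-term identity~(b).

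For sufficiency, given $\varepsilon$ with~(a)--(c) I would set $e(f,g,h,k)(x):=\varepsilon_x(\mathrm{d}f_x,\mathrm{d}g_x,\mathrm{d}h_x,\mathrm{d}k_x)$ and check Definition~\ref{def:conservative_irreversible}: condition~(i) holds because $\mathrm{d}(f_1f_2)=f_1\,\mathrm{d}f_2+f_2\,\mathrm{d}f_1$ and $\varepsilon_x$ is $\mathbb{R}$-multilinear; (ii) and~(iii) are~(a) and~(c) read backwards; and~(iv) follows from~(b) by specialising $\alpha=\gamma=\delta$, which yields $3\,\varepsilon_x(\gamma,\beta,\gamma,\gamma)=0$. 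Since $E(f,g,h)=e(f,g,h,h)$ by construction, $E$ is conservative-irreversible. For uniqueness, two tensor fields representing $E$ as in \eqref{eq:tensorfield_representation} give, via the above formula, two $4$-derivations that are linear and symmetric in their last two arguments and agree on the diagonal, hence coincide by Lemma~\ref{lem:Unicity_4function}; evaluating on differentials, which span each cotangent space, forces the two tensor fields to be equal.

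The step I expect to be the main obstacle is the tensorialisation itself, i.e.\ the multilinear analogue of ``derivations are vector fields''. It is standard, but a careful write-up must handle the locality reduction and, in the Hadamard expansion, the mixed second-order terms slot by slot; once this is in place, the rest is routine polarisation bookkeeping.
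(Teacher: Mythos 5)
Your proposal is correct and follows essentially the same route as the paper: the standard correspondence between $4$-derivations and $4$-contravariant tensor fields, the uniqueness statement of Lemma~\ref{lem:Unicity_4function}, and a pointwise translation of conditions (ii)--(iv) into (a)--(c). The only difference is one of detail, not of method: where the paper cites the literature for the derivation--tensor correspondence and refers to \cite{MascKirc23b} for the equivalence of the symmetry conditions, you supply these steps yourself (Hadamard-type locality argument and the cubic polarisation collapsing six terms to the three-term identity (b)), and both fill-ins are sound.
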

\begin{proof}

It is well-known that all $k$-derivations on $\mathcal{C}^{\infty}(M)$,
i.e. functions of $k$ argument that are linear and fulfil the Leibniz
rule in each argument, are given by $k$-contravariant tensor fields
$\xi\in\Gamma(T^{k}(M))$, where $T^{k}(M)$ denotes the vector bundle
of $k$-contravariant tensors, see e.g.~\cite[Section 3.18]{GreuHalpVans72}.
Furthermore, in view of the lemma~\ref{lem:Unicity_4function}, there
is a one-to-one correspondence between conservative-irreversible functions
and functions $e:\big(\mathcal{C}^{\infty}(M)\big)^{4}\to\mathcal{C}^{\infty}(M)$
with the properties (i)--(iv) of Definition~\ref{def:conservative_irreversible}.
Hence the relation
\begin{align*}
\forall f,g,h,k\in\mathcal{C}^\infty(M)~\forall x\in M: e(f,g,h,k)(x) = \varepsilon_x(df_x,dg_x,dh_x,dk_x)
\end{align*}
establishes a one-to-one correspondence between 4-contravariant tensor fields $\varepsilon\in\Gamma(T^4(M))$ and functions $e:\big(\mathcal{C}^\infty(M)\big)^4\to \mathcal{C}^\infty(M)$ with the property (i) of Definition~\ref{def:conservative_irreversible}. The proof of the equivalence of the properties (ii)--(iv) of Definition~\ref{def:conservative_irreversible} for $e$ and the properties (a)--(c) of the associated tensor field $\varepsilon$ is analogous to the proof of~\cite{MascKirc23b} to which we refer the reader.
\end{proof}

We show now, that the functions $e$ associated to conservative-irreversible functions by~\eqref{eq:extension_CI} are symmetric in the first and second pair of arguments.

\begin{Lemma}\label{lem:nice_symmetry}
Let $e:\left(\mathcal{C}^\infty(M)\right)^4\to\mathcal{C}^\infty(M)$ fulfil the properties (i)--(iv) of Definition~\ref{def:conservative_irreversible}. Then, $e(f,g,h,k) = e(h,k,f,g)$ for all $f,g,h,k\in\mathcal{C}^\infty(M)$.
\end{Lemma}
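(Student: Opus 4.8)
The plan is to exploit the two "polarisation-type" identities we already have — property (ii) (symmetry of $e$ in the last two slots) and the extra skew/cyclic identity (b) of Lemma~\ref{lem:tensors}, which via Lemma~\ref{lem:tensors} is equivalent to property (iv) of Definition~\ref{def:conservative_irreversible} — together with the symmetry in the first two slots established in Lemma~\ref{lem:first_symmetry}. By Lemma~\ref{lem:tensors} it suffices to prove the corresponding pointwise statement for the tensor field: $\varepsilon_x(\alpha,\beta,\gamma,\delta) = \varepsilon_x(\gamma,\delta,\alpha,\beta)$ for all covectors $\alpha,\beta,\gamma,\delta\in\mathcal{T}_x^*M$. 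Working at the level of the tensor $\varepsilon_x$ is cleaner than working with functions, since we may freely permute and add covector arguments without worrying about Leibniz terms; I will suppress the base point $x$ and the subscript on $\varepsilon$ in what follows.

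First I would record the symmetries of $\varepsilon$ in usable form: (a) symmetry in the third–fourth pair, $\varepsilon(\alpha,\beta,\gamma,\delta)=\varepsilon(\alpha,\beta,\delta,\gamma)$; the polarised first–second symmetry $\varepsilon(\alpha,\beta,\gamma,\delta)=\varepsilon(\beta,\alpha,\gamma,\delta)$ coming from Lemma~\ref{lem:first_symmetry} via property (c); and the cyclic identity (b), $\varepsilon(\alpha,\beta,\gamma,\delta) + \varepsilon(\delta,\beta,\alpha,\gamma) + \varepsilon(\gamma,\beta,\delta,\alpha) = 0$. Notice that (b) only involves the arguments in positions $1,3,4$ while fixing position $2$; combined with (a) (which lets me swap positions $3$ and $4$), I get a second cyclic relation with the three arguments in a different order, and so effectively full control of how $\varepsilon$ transforms under permutations of the triple $(\alpha,\gamma,\delta)$ keeping $\beta$ in slot $2$. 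The key step is then purely combinatorial: express the desired swap $(\alpha,\beta,\gamma,\delta)\mapsto(\gamma,\delta,\alpha,\beta)$ as a composition of the available moves. One convenient route: use the first–second symmetry to move $\beta$ out of the way, then apply the cyclic relation (b) (and (a)) to rotate arguments through the triple-slot, then restore. Concretely, I expect to write $\varepsilon(f,g,h,k)$ as a sum over permutations and collapse it using (b) twice, in the same spirit as the classical derivation of the "pair-swap" symmetry of the Riemann curvature tensor from its first Bianchi identity and its two skew-symmetries.

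The main obstacle — and the only real content — is this combinatorial bookkeeping: writing down the correct chain of applications of (a), (b), and the $1\!\leftrightarrow\!2$ symmetry so that everything cancels. The cleanest implementation I foresee is the Riemann-tensor trick: apply the cyclic identity (b) to the four quantities obtained by cyclically permuting a chosen triple of the arguments, add the resulting four (or six) instances of (b), and observe that, after using symmetries (a) and $1\!\leftrightarrow\!2$, each term on the difference $\varepsilon(\alpha,\beta,\gamma,\delta)-\varepsilon(\gamma,\delta,\alpha,\beta)$ appears an even number of times with cancelling signs, leaving $2\big(\varepsilon(\alpha,\beta,\gamma,\delta)-\varepsilon(\gamma,\delta,\alpha,\beta)\big)=0$. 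Once the tensor identity holds, I conclude $e(f,g,h,k)=e(h,k,f,g)$ for all smooth $f,g,h,k$ by evaluating at an arbitrary point and using that the values of differentials $\mathrm{d}f_x,\dots$ range over all of $\mathcal{T}_x^*M$, together with the tensor-field representation from Lemma~\ref{lem:tensors}. If, instead, one prefers to stay at the level of functions, the same algebra goes through verbatim with $\mathrm{d}f_x$ replaced by $f$ throughout, using properties (i)–(iv) directly in place of (a)–(c); I would likely present it that way to keep the lemma self-contained, but the tensor picture is what makes the cancellation transparent.
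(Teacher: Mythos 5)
Your proposal is correct and follows essentially the same route as the paper: the paper also combines the polarised cyclic identity from Lemma~\ref{lem:tensors}\,(b) with the symmetry of $e$ in the first two arguments (via Lemma~\ref{lem:first_symmetry}) and in the last two arguments, and collapses the resulting chain of identities to $2\big(e(f,g,h,k)-e(h,k,f,g)\big)=0$, exactly the Riemann-tensor-style pair-swap argument you describe. The only difference is presentational (the paper carries out the cancellation explicitly at the level of $e$ rather than of $\varepsilon_x$), so your sketch fills in to the paper's proof without any gap.
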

\begin{proof}
Let $E:\left(\mathcal{C}^\infty(M)\right)^3\to\mathcal{C}^\infty(M)$ be the conservative-irreversible function associated to $e$, and let $\varepsilon\in\Gamma(T^4(M))$ the (unique) tensor field with~\eqref{eq:tensorfield_representation}. Equivalently, we have
\begin{align*}
\forall f,g,h,k\in\mathcal{C}^\infty(M)~\forall x\in M: e(f,g,h,k)(x) = \varepsilon_x(\mathrm{d}f_x,\mathrm{d}g_x,\mathrm{d}h_x,\mathrm{d}k_x).
\end{align*}
Thus, Lemma~\ref{lem:tensors}\,(b) yields in particular
\begin{align*}
\forall f,g,h,k\in\mathcal{C}^\infty(M): e(f,g,h,k)+e(h,g,k,f)+e(k,g,f,k) = 0.
\end{align*}
Let $f,g,h,k\in\mathcal{C}^\infty(M)$. Using the symmetry of $e$ in the first two and in the last two arguments, we see
\begin{align*}
e(f,h,k,g) & = e(h,f,k,g)\\
& = -e(k,f,g,h)-e(g,f,h,k)\\
& = -e(f,k,h,g)-e(g,f,h,k)\\
& = e(h,k,g,f)+e(g,k,f,h)-e(g,f,h,k)\\
& = e(h,k,f,g)+e(k,g,f,h)-e(f,g,h,k)
\end{align*}
and therefore
\begin{align*}
0 & = e(f,h,k,g)-e(k,g,f,h)+e(f,g,h,k)-e(h,k,f,g)\\
& = -e(k,h,g,f)-e(g,h,f,k)+e(f,g,h,k)+e(h,g,k,f)+e(f,g,h,k)-e(h,k,f,g)\\
& = 2(e(f,g,h,k)-e(h,k,f,g)).
\end{align*}
This completes the proof of the Lemma.
\end{proof}

\subsection{Relation with metriplectic 4-brackets}\label{sec:2.3}

Recently, Morrison and Updike~\cite{MorrUpdi23} proposed a very similar bracket description called \textit{metriplectic four-bracket} as a geometric formulation for dissipative systems, in particular for metriplectic systems. These brackets, whose properties are motivated by Riemannian curvature tensors, are defined as follows.

\begin{Definition}[{\cite[p.\,6]{MorrUpdi23}}]
A metriplectic four-bracket on the manifold $M$ is a function $(\!(\cdot,\cdot,\cdot,\cdot)\!):\mathcal{C}^\infty(M)\times\mathcal{C}^\infty(M)\times\mathcal{C}^\infty(M)\times\mathcal{C}^\infty(M)\to\mathcal{C}^\infty(M)$ with the properties
\begin{enumerate}[(i)]
\item $(\!(\cdot,\cdot,\cdot,\cdot)\!)$ is a derivation in each argument
\item $(\!(f,g,\cdot,\cdot)\!)$ and $(\!(\cdot,\cdot,f,g)\!)$ are skew-symmetric for all $f,g\in\mathcal{C}^\infty(M)$
\item $(\!(f,g,h,k)\!) = (\!(h,k,f,g)\!)$ for all $f,g,h,k\in\mathcal{C}^\infty(M)$
\item $(\!(f,g,h,k)\!) + (\!(f,h,k,g)\!) + (\!(f,k,g,h)\!) = 0$ for all $f,g,h,k\in\mathcal{C}^\infty(M)$
\item $(\!(f,g,f,g)\!)(x)\geq 0$ for all $f,g\in\mathcal{C}^\infty(M)$ and $x\in M$
\end{enumerate}
The dissipative bracket given by $(\!(\cdot,\cdot,\cdot,\cdot)\!)$ and a Hamiltonian function $h\in\mathcal{C}^\infty(M)$ is the bracket $(\cdot,\cdot)_h := (\!(\cdot,h,\cdot,h)\!)$.
\end{Definition}

Evidently, the symmetries of conservative-irreversible tensors and metriplectic four-brackets are, at first glance, slightly different; and so are their motivations: Conservative-irreversible functions are a general class of tensor fields with easy to study symmetries so that the flow of the vector field $e(\cdot,s,h,h)$ satisfies the laws of thermodynamics for all functions $s,h\in\mathcal{C}^\infty(M)$ and each conservative-irreversible function $e$. This raises the question of the relation of metriplectic four-brackets and the four-tensors associated to conservative-irreversible functions. It is easy to see the symmetrisation of metriplectic four tensors in the second and fourth argument fulfil the conditions (i)--(iv) satisfied by the conservative-irreversible functions in the Definition~\ref{def:conservative_irreversible}.

\begin{Lemma}\label{lem:sufficiency}
Let $(\!(\cdot,\cdot,\cdot,\cdot)\!)$ be a metriplectic four-bracket on the manifold $M$. The function
\begin{align*}
e:\mathcal{C}^\infty(M)\times\mathcal{C}^\infty(M)\times\mathcal{C}^\infty(M)\times\mathcal{C}^\infty(M) & \to\mathcal{C}^\infty(M),\\
(f,g,h,k) & \mapsto \frac{1}{2}\big((\!(f,h,g,k)\!)+(\!(f,k,g,h)\!)\big)
\end{align*}
has the properties (i)--(iv) in the definition~\ref{def:conservative_irreversible}.
\end{Lemma}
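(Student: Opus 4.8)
The plan is to verify each of the four defining properties (i)--(iv) of Definition~\ref{def:conservative_irreversible} directly from the five axioms of a metriplectic four-bracket, exploiting the symmetrisation $e(f,g,h,k) = \tfrac12\big((\!(f,h,g,k)\!)+(\!(f,k,g,h)\!)\big)$. Property~(i) is immediate: $(\!(\cdot,\cdot,\cdot,\cdot)\!)$ is a derivation in each slot, any permutation of its arguments is again a derivation in each slot, and a linear combination of derivations is a derivation; so $e$ is a derivation in each of its four arguments. Property~(ii), symmetry of $e(f,g,\cdot,\cdot)$, is visible from the definition itself: swapping $h$ and $k$ interchanges the two summands $(\!(f,h,g,k)\!)$ and $(\!(f,k,g,h)\!)$, leaving the average fixed.

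For property~(iii), I would first establish that $e(\cdot,\cdot,h,k)$ is symmetric for \emph{all} $h,k$, since by Lemma~\ref{lem:first_symmetry} (applied to the linear map $e$) it suffices to check symmetry of $e(\cdot,\cdot,h,h)$; in fact I will get the stronger statement for free. Writing out $e(f,g,h,k) = \tfrac12\big((\!(f,h,g,k)\!)+(\!(f,k,g,h)\!)\big)$ and $e(g,f,h,k) = \tfrac12\big((\!(g,h,f,k)\!)+(\!(g,k,f,h)\!)\big)$, I use axiom~(iii) of the four-bracket, $(\!(a,b,c,d)\!) = (\!(c,d,a,b)\!)$, to rewrite $(\!(f,h,g,k)\!) = (\!(g,k,f,h)\!)$ and $(\!(f,k,g,h)\!) = (\!(g,h,f,k)\!)$; the two expressions then coincide term by term, giving $e(f,g,h,k) = e(g,f,h,k)$. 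The positive semi-definiteness part of~(iii) is where the cyclic identity (axiom~(iv)) enters: I must show $e(f,f,h,h)(x)\ge 0$. By definition $e(f,f,h,h) = (\!(f,h,f,h)\!)$ (the two summands coincide when $g=f$ and $k=h$), and axiom~(v) of the four-bracket says precisely that $(\!(f,h,f,h)\!)(x)\ge 0$. So the non-negativity transfers directly with no computation at all.

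Property~(iv), that $e(h,\cdot,h,h)$ vanishes identically, is the one genuinely requiring the algebra. By definition $e(h,g,h,h) = \tfrac12\big((\!(h,h,g,h)\!)+(\!(h,h,g,h)\!)\big) = (\!(h,h,g,h)\!)$, wait --- I need to be careful with slot positions: $e(h,g,h,h)$ has first argument $h$, second $g$, third $h$, fourth $h$, so it equals $\tfrac12\big((\!(h,h,g,h)\!)+(\!(h,h,g,h)\!)\big) = (\!(h,h,g,h)\!)$, and this vanishes because $(\!(\cdot,\cdot,f,g)\!)$ is skew-symmetric (axiom~(ii)), so $(\!(h,h,\cdot,\cdot)\!) \equiv 0$. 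Thus property~(iv) also follows immediately from skew-symmetry in the first pair, with no appeal to the cyclic identity. In fact, reviewing the whole argument, axiom~(iv) of the four-bracket is used only to get symmetry in property~(iii) via axiom~(iii); the cyclic identity is not needed here at all, which is worth noting explicitly.

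The main obstacle, to the extent there is one, is purely bookkeeping: keeping the four argument slots straight through the re-indexing forced by axioms~(ii) and~(iii), and making sure the symmetrisation is taken in the correct pair of slots (the second and fourth of $(\!(\cdot,\cdot,\cdot,\cdot)\!)$, which become the third and fourth of $e$). Once the slot correspondence is fixed, each property reduces to a one-line substitution. I would present the verification as four short itemised paragraphs, one per property, in the order (i), (ii), (iv), (iii), deferring the positive-semidefinite clause of~(iii) to the end since it is the only place an inequality appears.
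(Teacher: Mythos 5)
Your verification is correct and follows essentially the same route as the paper's proof: the derivation property and symmetry in the last pair are immediate from the construction, the pair-exchange axiom $(\!(f,g,h,k)\!)=(\!(h,k,f,g)\!)$ gives symmetry in the first pair, axiom (v) gives pointwise non-negativity via $e(f,f,h,h)=(\!(f,h,f,h)\!)$, and skew-symmetry of $(\!(\cdot,\cdot,g,k)\!)$ in the first pair gives property (iv). Your passing remark that the cyclic identity ``enters'' in the positivity step is a slip --- as you yourself note afterwards, it is never actually used, and the paper's proof likewise does without it.
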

\begin{proof}
Since $(\!(\cdot,\cdot,\cdot,\cdot)\!)$ is a derivation in each argument, it is readily seen that $e$ is a derivation in each argument; and $e$ is symmetric in the last two arguments by construction, so that $e$ fulfils the properties (i) and (ii) of Definition~\ref{def:conservative_irreversible}. Let $h\in\mathcal{C}^\infty(M)$. Then, the symmetry of $(\!(\cdot,\cdot,\cdot,\cdot)\!)$ in the first and the last two arguments implies, for all $f,g\in\mathcal{C}^\infty(M)$,
\begin{align*}
e(f,g,h,h) = \frac{1}{2}\big((\!(f,h,g,h)\!)+(\!(f,h,g,h)\!)\big) = \frac{1}{2}\big((\!(g,h,f,h)\!)+(\!(g,h,f,h)\!)\big) = e(g,f,h,h),
\end{align*}
so that $e(\cdot,\cdot,h,h)$ is symmetric. Furthermore, $e(f,f,h,h) = \frac{1}{2}\big((\!(f,h,f,h)\!)+(\!(f,h,f,h)\!)\big) = (\!(f,h,f,h)\!)$ is pointwise non-negative, so that $e$ fulfils (iii) of Definition~\ref{def:conservative_irreversible}. Finally, skew-symmetry of $(\!(\cdot,\cdot,\cdot,\cdot)\!)$ implies
\begin{align*}
e(h,f,h,h) = \frac{1}{2}\big((\!(h,h,f,h)\!)+(\!(h,h,f,h)\!)\big) = (\!(h,f,h,h)\!) = 0
\end{align*}
so that $e$ fulfils the property (iv) of Definition~\ref{def:conservative_irreversible}.
\end{proof}

This observation implies, in particular, that the functions $H(f,g,h):=(\!(f,h,g,h)\!)$ considered by Morrison and Updike are conservative-irreversible functions. The converse relation, i.e. for each conservative-irreversible function $H$ there exists a metriplectic four-bracket $(\!(\cdot,\cdot,\cdot,\cdot)\!)$ so that $H(f,g,h) = (\!(f,h,g,h)\!)$ for all $f,g,h\in\mathcal{C}^\infty(M)$, is not obvious.
In the special case that there is some almost Poisson bracket $\{\cdot,\cdot\}$ with $E(f,g,h) = \{g,h\}\{f,h\}$ for all $f,g,h\in\mathcal{C}^\infty(M)$, it is straightforward to verify that
\begin{align*}
\{\!\{\cdot,\cdot,\cdot,\cdot\}\!\}:\mathcal{C}^\infty(M)\times\mathcal{C}^\infty(M)\times\mathcal{C}^\infty(M)\times\mathcal{C}^\infty(M) & \to\mathcal{C}^\infty(M),\\
(f,g,h,k) & \mapsto \{f,h\}\{g,k\}
\end{align*}
is a metriplectic four-bracket, see~\cite{MorrUpdi23}. This leads to the definition of \textit{simple} conservative-irreversible functions as follows.

\begin{Definition}\label{def:simple}
We call a conservative-irreversible function $E$ \textit{simple} if, and only if, there exist an almost Poisson bracket $\{\cdot,\cdot\}$ and a non-negative function $\lambda:M\to\mathbb{R}_{\geq 0}$ so that
\begin{align}\label{eq:simple_representation}
\forall f,g,h\in\mathcal{C}^\infty(M): E(f,g,h) = \lambda\{f,h\}\{g,h\}.
\end{align}
\end{Definition}

The 4-contravariant tensor field associated to simple conservative-irreversible functions is, up to scalar multiplication and rearranging of the arguments, the tensor square of a Poisson bivector. This raises the question: Given any two 2-contravariant tensor fields, when is their tensor product (up to the same rearrangement of the arguments) the tensor field of a conservative-irreversible function? The answer is the following.

\begin{Proposition}
Let $M$ be a smooth manifold and let $[\cdot,\cdot]_1,[\cdot,\cdot]_2:\mathcal{C}^\infty(M)\times\mathcal{C}^\infty(M)\to\mathcal{C}^\infty(M)$ be biderivations. Then, the function
\begin{align}
E: \mathcal{C}^\infty(M)\times\mathcal{C}^\infty(M)\to\mathcal{C}^\infty(M),\qquad (f,g,h)\mapsto [f,h]_1[g,h]_2
\end{align}
is conservative-irreversible if, and only if, it is simple.
\end{Proposition}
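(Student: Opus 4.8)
The backward implication needs nothing: by Definition~\ref{def:simple} a simple function is by definition conservative-irreversible; alternatively, if $E(f,g,h)=\lambda\{f,h\}\{g,h\}$ for an almost Poisson bracket $\{\cdot,\cdot\}$ and a non-negative $\lambda$, then $e(f,g,h,k):=\tfrac{\lambda}{2}\big(\{f,h\}\{g,k\}+\{f,k\}\{g,h\}\big)$ has the properties (i)--(iv) of Definition~\ref{def:conservative_irreversible}, exactly as computed in Example~\ref{ex:nontriv}, the prefactor $\lambda$ being harmless because it multiplies every Leibniz identity uniformly. So the content is the forward implication.

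Assume $E(f,g,h)=[f,h]_1[g,h]_2$ is conservative-irreversible. I would first reduce to pointwise linear algebra. Write $[f,g]_i(x)=B_i(x)(\mathrm{d}f_x,\mathrm{d}g_x)$ with $B_i(x)$ a bilinear form on $\mathcal{T}_x^*M$. Polarising the relation $E(f,g,h)=e(f,g,h,h)$ in the last two slots and invoking Lemma~\ref{lem:Unicity_4function} identifies the $4$-function attached to $E$ as $e(f,g,h,k)=\tfrac12\big([f,h]_1[g,k]_2+[f,k]_1[g,h]_2\big)$; since $E$ is conservative-irreversible this $e$ satisfies (i)--(iv), so by Lemma~\ref{lem:tensors} the associated tensor $\varepsilon_x(\alpha,\beta,\gamma,\delta)=\tfrac12\big(B_1(x)(\alpha,\gamma)B_2(x)(\beta,\delta)+B_1(x)(\alpha,\delta)B_2(x)(\beta,\gamma)\big)$ satisfies (a)--(c), and Lemma~\ref{lem:nice_symmetry} applies as well. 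From now on I fix $x$ and suppress it.

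The core claim is: at every $x$, \emph{either} $B_1$ or $B_2$ vanishes (equivalently $x$ lies in $N:=\{y:E(f,g,h)(y)=0\ \forall f,g,h\}$), \emph{or} $B_1,B_2$ are both skew-symmetric with $B_2=cB_1$ for a scalar $c>0$. To prove it, assume $B_1,B_2\neq 0$. Property (iv) together with symmetry of $e$ in its first two arguments (Lemma~\ref{lem:first_symmetry}) gives $0=e(h,g,h,h)=[h,h]_1[g,h]_2$ and $0=e(g,h,h,h)=[g,h]_1[h,h]_2$, i.e. $B_1(\gamma,\gamma)B_2(\beta,\gamma)=0=B_1(\beta,\gamma)B_2(\gamma,\gamma)$ for all $\beta,\gamma$; since $B_2\neq0$, the kernel of $\gamma\mapsto B_2(\cdot,\gamma)$ is a proper subspace, so the quadratic form $\gamma\mapsto B_1(\gamma,\gamma)$ vanishes off a dense set, hence everywhere, and $B_1$ is skew-symmetric -- symmetrically so is $B_2$. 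Writing the identity $e(f,g,h,k)=e(h,k,f,g)$ of Lemma~\ref{lem:nice_symmetry} at the tensor level and substituting skew-symmetry of $B_1,B_2$ collapses it to $B_1(\alpha,\delta)B_2(\beta,\gamma)=B_1(\beta,\gamma)B_2(\alpha,\delta)$ for all $\alpha,\beta,\gamma,\delta$; fixing $\alpha,\delta$ with $B_1(\alpha,\delta)\neq0$ yields $B_2=cB_1$ with $c:=B_2(\alpha,\delta)/B_1(\alpha,\delta)\neq0$. Finally positive semi-definiteness, $0\le\varepsilon(\alpha,\alpha,\gamma,\gamma)=B_1(\alpha,\gamma)B_2(\alpha,\gamma)=c\,B_1(\alpha,\gamma)^2$, forces $c>0$ once $\alpha,\gamma$ are chosen with $B_1(\alpha,\gamma)\neq0$.

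To globalise, I would take $\{\cdot,\cdot\}$ to be the skew-symmetric part of $[\cdot,\cdot]_1+[\cdot,\cdot]_2$, which is an almost Poisson bracket. On $M\setminus N$ we are in the second case, so the tensor of $\{\cdot,\cdot\}$ at $x$ is the skew part of $(1+c)B_1$, namely $(1+c)B_1$ itself, and a one-line computation gives $E(f,g,h)=\tfrac{c}{(1+c)^2}\,\{f,h\}\{g,h\}$ there. Setting $\lambda:=\tfrac{c}{(1+c)^2}$ on $M\setminus N$ (well defined, valued in $(0,\tfrac14]$, with no division issue since $1+c\ge1$) and $\lambda:=0$ on $N$ gives a non-negative function with $E(f,g,h)=\lambda\{f,h\}\{g,h\}$ on all of $M$ -- trivially on $N$, where both sides vanish -- so $E$ is simple. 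The step I expect to cause most friction is this last one: one might fear that $\lambda$ fails to extend continuously across $\partial N$, at points where $B_1$ and $B_2$ degenerate at different rates, and that a smooth $\lambda$ is needed. The resolution, which I would state explicitly, is that Definition~\ref{def:simple} only asks for a \emph{function} $\lambda$; at any discontinuity of $\lambda$ the tensor of $\{\cdot,\cdot\}$ vanishes, so the product $\lambda\{f,h\}\{g,h\}$ -- being equal to the smooth function $E(f,g,h)$ -- is smooth regardless. The other points needing care, the Zariski-density argument for skew-symmetry and the bookkeeping identifying the $4$-function of $E$ with its polarisation, are routine once set up.
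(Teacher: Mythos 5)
Your argument is correct, and its core is the same as the paper's: reduce to pointwise linear algebra on the bivectors $B_1,B_2$, show that away from the degenerate locus both are skew-symmetric and proportional with positive ratio, and assemble an almost Poisson bracket and a non-negative (not necessarily smooth) $\lambda$. The difference is in execution. The paper's proof is a two-line deferral: it invokes ``the methods of the proof of [MascKirc23b, Theorem 4.2]'' in local coordinates to conclude that $[\cdot,\cdot]_1$ is skew-symmetric and that $\Lambda^1_x,\Lambda^2_x$ are linearly dependent, then takes $\{\cdot,\cdot\}=[\cdot,\cdot]_1$ and a real-valued $\lambda$. You instead derive everything from the paper's own toolkit (Lemma~\ref{lem:Unicity_4function} to identify the polarised four-function, Lemma~\ref{lem:tensors} and Lemma~\ref{lem:nice_symmetry} for the symmetries, positivity of the tensor for $c>0$), and you choose the bracket to be the skew part of $[\cdot,\cdot]_1+[\cdot,\cdot]_2$ with $\lambda=c/(1+c)^2$, extended by $0$ on the vanishing set $N$. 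This buys you something the paper's formulation glosses over: where $\Lambda^2$ vanishes on an open set, the hypotheses impose nothing on $[\cdot,\cdot]_1$, so the paper's claim that $[\cdot,\cdot]_1$ is skew-symmetric everywhere (and hence usable as the bracket in \eqref{eq:simple_representation}) is not literally correct, and its $\lambda$ is not visibly non-negative as Definition~\ref{def:simple} requires; your case split over $N$ and your explicit $c>0$ argument repair both points. Your closing remark that $\lambda$ need only be a function, not smooth, is exactly the convention the paper itself adopts (and illustrates with its non-smooth-$\lambda$ example), so the globalisation is sound. Minor quibbles only: the symmetry $e(g,h,h,h)=e(h,g,h,h)$ is already Definition~\ref{def:conservative_irreversible}\,(iii), so the appeal to Lemma~\ref{lem:first_symmetry} is stronger than needed, and the parenthetical claim that the bracket's tensor vanishes at every discontinuity of $\lambda$ is true but unnecessary, since pointwise equality with the smooth function $E(f,g,h)$ is all the definition asks for.
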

\begin{proof}
Sufficiency is evident; we show necessity. Let $\Lambda^1,\Lambda^2\in\Gamma(T^2(M))$ be the tensor fields associated to $[\cdot,\cdot]_1$ and $[\cdot,\cdot]_2$, respectively. Using local coordinates, we may use the methods of the proof of~\cite[Theorem 4.2]{MascKirc23b} to see that $[\cdot,\cdot]_1$ is skew-symmetric, and hence an almost Poisson bracket. Further, we see that $\Lambda_x^1$ and $\Lambda_x^2$ are linearly dependent for all $x\in M$. Therefore, there exists a function $\lambda:M\to\mathbb{R}$ so that
\begin{align*}
\forall f,g,h,k\in\mathcal{C}^\infty(M): [f,h]_1[g,h]_2 = \lambda [f,h]_1[g,h]_1,
\end{align*}
which shows the assertion.
\end{proof}

We show that the closure of simple conservative-irreversible functions under locally finite sums is in one-to-one correspondence with the symmetrisations of metriplectic four-brackets.

\begin{Lemma}\label{cor:loc_fin_sum}
Let $E$ be a conservative-irreversible function so that there is  a family $(E_i)_{i\in I}$ of simple conservative-irreversible functions with $E = \sum_{i \in I} E_i$ so that
\begin{align*}
\forall f,g,h\in\mathcal{C}^\infty(M)~\forall K\subseteq M~\text{compact}: \abs{\set{i\in\mathbb{N}\,\big\vert\, E_i(f,g,h)\vert_K\not\equiv 0}}<\infty,
\end{align*}
i.e. the sum is locally finite. Then, there exists a metriplectic four-bracket $(\!(\cdot,\cdot,\cdot,\cdot)\!)$ so that 
\begin{align}\label{eq:mp4_1}
\forall f,g,h\in\mathcal{C}^\infty(M): E(f,g,h) = (\!(f,h,g,h)\!).
\end{align}
\end{Lemma}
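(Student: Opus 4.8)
The plan is to reduce \eqref{eq:mp4_1} to the simple case and then glue. For a single simple conservative-irreversible function $E_i(f,g,h) = \lambda_i\{f,h\}_i\{g,h\}_i$, the paragraph preceding Definition~\ref{def:simple} already exhibits a candidate metriplectic four-bracket, namely $(\!(f,g,h,k)\!)_i := \lambda_i\{f,h\}_i\{g,k\}_i$; one checks the five axioms (i)--(v) of a metriplectic four-bracket directly, the derivation property and the symmetries being immediate from skew-symmetry of $\{\cdot,\cdot\}_i$ and non-negativity of $\lambda_i$, and the cyclic identity (iv) being the Jacobi-type identity $\{f,h\}\{g,k\}+\{f,k\}\{h,g\}+\{f,g\}\{k,h\}=0$, which holds pointwise because each factor $\{a,b\}(x)$ is an antisymmetric bilinear pairing $\langle \mathrm{d}a_x,\Lambda_x\,\mathrm{d}b_x\rangle$ (so this is just the identity $\omega(\alpha,\gamma)\omega(\beta,\delta)+\omega(\alpha,\delta)\omega(\gamma,\beta)+\omega(\alpha,\beta)\omega(\delta,\gamma)=0$ for a $2$-form $\omega$ in a $2$-dimensional span — or more simply a $2{\times}2$ determinant expansion). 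By construction $(\!(f,h,g,h)\!)_i = \lambda_i\{f,h\}_i\{g,h\}_i = E_i(f,g,h)$.

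Next I would observe that the class of metriplectic four-brackets is closed under locally finite sums: axioms (i)--(v) are all pointwise-linear conditions on the associated four-tensors (derivation property is preserved since a locally finite sum of tensor fields is a tensor field; the three symmetry conditions and the cyclic identity are linear; and non-negativity in (v) is preserved under sums), so $(\!(\cdot,\cdot,\cdot,\cdot)\!) := \sum_{i\in I}(\!(\cdot,\cdot,\cdot,\cdot)\!)_i$ is again a metriplectic four-bracket. The local finiteness hypothesis on $(E_i)_{i\in I}$ transfers to the $(\!(\cdot,\cdot,\cdot,\cdot)\!)_i$: evaluating $(\!(f,h,g,k)\!)_i$ and using bilinearity in $h$ and $k$ (polarisation identity, as in the proof of Lemma~\ref{lem:Unicity_4function}) expresses each such value as a finite linear combination of values of the form $E_i(\cdot,\cdot,\cdot)$, each of which vanishes on a given compact set for all but finitely many $i$; hence the sum defining $(\!(\cdot,\cdot,\cdot,\cdot)\!)$ is locally finite and well-defined. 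Then, summing the identity $(\!(f,h,g,h)\!)_i = E_i(f,g,h)$ over $i\in I$ gives $(\!(f,h,g,h)\!) = \sum_{i\in I}E_i(f,g,h) = E(f,g,h)$, which is exactly \eqref{eq:mp4_1}.

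The main obstacle, such as it is, is bookkeeping rather than a genuine difficulty: one must make sure the local finiteness of the family $(E_i)_i$ — stated only for the diagonal values $E_i(f,g,h)$ — really does control the off-diagonal values $(\!(f,h,g,k)\!)_i$ needed to even write down the summed four-bracket, and the polarisation argument above is the clean way to see this. A second small point worth spelling out is that axiom (v), $(\!(f,g,f,g)\!)(x)\ge 0$, for the sum follows because each summand $(\!(f,g,f,g)\!)_i(x) = \lambda_i(x)\{f,g\}_i(x)^2 \ge 0$; no cancellation can occur, so the finite partial sums are non-negative and so is the limit. Everything else is a routine verification of linear identities on four-tensors, which I would either carry out in one displayed block of \texttt{align}-style computations or simply cite as "the same computation as in Lemma~\ref{lem:sufficiency} run in reverse".
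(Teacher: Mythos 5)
Your overall route is the same as the paper's: attach a product four-bracket to each simple summand $E_i=\lambda_i\{\cdot,\cdot\}_i\{\cdot,\cdot\}_i$, control its off-diagonal values by polarisation, and conclude by local finiteness together with closure of metriplectic four-brackets under locally finite sums. However, the bracket you actually write down is paired incorrectly, and this breaks both facts your argument rests on. With your definition $(\!(f,g,h,k)\!)_i:=\lambda_i\{f,h\}_i\{g,k\}_i$ one gets $(\!(f,h,g,h)\!)_i=\lambda_i\{f,g\}_i\{h,h\}_i=0$, not $E_i(f,g,h)$, so the identity you claim ``by construction'' is false; moreover this bracket is not skew-symmetric in its first two (or last two) arguments, so axiom (ii) already fails (canonical bracket on $\mathbb{R}^2$ with $f=h=q$, $g=k=p$: $(\!(g,f,h,k)\!)_i=-\lambda_i$ while $(\!(f,g,h,k)\!)_i=0$). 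The pairing that works, and the one the paper uses in its proof, is $(\!(f,g,h,k)\!)_i:=\lambda_i\{f,g\}_i\{h,k\}_i$; then $(\!(f,h,g,h)\!)_i=\lambda_i\{f,h\}_i\{g,h\}_i=E_i(f,g,h)$, and the derivation property, both skew-symmetries, the pair-exchange symmetry and the non-negativity $(\!(f,g,f,g)\!)_i=\lambda_i\{f,g\}_i^2\geq 0$ are immediate.

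The second, more substantive, gap is your verification of the cyclic axiom (iv). The sum $\{f,h\}\{g,k\}+\{f,k\}\{h,g\}+\{f,g\}\{k,h\}$ is \emph{not} an algebraic identity of skew-symmetric pairings: writing $\Lambda_i$ for the bivector field of $\{\cdot,\cdot\}_i$, this cyclic sum equals, up to a nonzero constant factor, $(\Lambda_i\wedge\Lambda_i)(\mathrm{d}f,\mathrm{d}g,\mathrm{d}h,\mathrm{d}k)$, and hence vanishes identically only when $\Lambda_i$ is pointwise decomposable, i.e.\ of rank at most two --- which Definition~\ref{def:simple} does not grant. Concretely, for the canonical Poisson bracket on $\mathbb{R}^4$ and $f=q_1$, $g=p_1$, $h=q_2$, $k=p_2$ your sum equals $-1$. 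Your parenthetical justification (``a $2$-form in a $2$-dimensional span'', ``a $2\times 2$ determinant expansion'') silently assumes exactly this rank-two situation. Note that the paper does not establish (iv) by such a computation either; at this step it appeals to~\cite{MorrUpdi23}, so the burden is shifted to that reference rather than to an identity of the kind you assert. The remaining bookkeeping --- polarisation to control the off-diagonal values (the paper's observation~\eqref{eq:trivial_observation}), local finiteness of the summed bracket, closure under locally finite sums, and non-negativity of the sum --- follows the paper's proof and is fine.
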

\begin{proof}
Consider, for each $i\in I$, functions $\lambda_i$ and Poisson brackets $\{\cdot,\cdot\}_i$ so that $E_i(f,g,h) = \lambda_i\{f,h\}_i\{g,h\}_i$ for all $f,g,h\in\mathcal{C}^\infty(M)$. Define the functions
\begin{align*}
(\!(\cdot,\cdot,\cdot,\cdot)\!)_i:\mathcal{C}^\infty(M)\times\mathcal{C}^\infty(M)\times\mathcal{C}^\infty(M)\times\mathcal{C}^\infty(M)\to\mathcal{C}^\infty(M),(\!(f,g,h,k)\!):=\lambda_i\{f,g\}_i\{h,k\}_i.
\end{align*}
so that $(\!(f,h,g,h)\!)_i = E_i(f,g,h)$ for all $f,g,h\in\mathcal{C}^\infty(M)$. The observation
\begin{align}\label{eq:trivial_observation}
\forall f,g,h,k\in\mathcal{C}^\infty(M): (\!(f,g,h,k)\!)_i = \frac{1}{2}\left( E_i(f,h,g+k)-E_i(f,h,g)-E_i(f,h,k)\right)
\end{align}
ensures that $(\!(\cdot,\cdot,\cdot,\cdot)\!)$ is well-defined. Since $\{\cdot,\cdot\}_i$ is a Poisson bracket, $(\!(\cdot,\cdot,\cdot,\cdot)\!)$ is indeed a metriplectic four-bracket, see~\cite{MorrUpdi23}. Since the support of families $(E_i(f,g,h))_{i\in I}$ is locally finite for all $f,g,h\in\mathcal{C}^\infty(M)$, we conclude in view of~\eqref{eq:trivial_observation} that $(\!(\cdot,\cdot,\cdot,\cdot)\!) = \sum_{i\in I} (\!(\cdot,\cdot,\cdot,\cdot)\!)$ is well-defined; and since the vector space of metriplectic four-brackets is closed under arbitrary locally finite sums, $(\!(\cdot,\cdot,\cdot,\cdot)\!)$ is a metriplectic four-bracket with~\eqref{eq:mp4_1}.
\end{proof}

In Definition~\ref{def:simple}, we have not required that $\lambda$ is smooth. In the following example, we construct an almost Poisson bracket $\{\cdot,\cdot\}$, which vanishes almost nowhere, and a function $\lambda$, which is not smooth, so that the induced function $E(f,g,h) = \lambda\{f,h\}\{g,h\}$ is conservative-irreversible.

\begin{Example}
It is well-known that the function
\begin{align*}
f:\mathbb{R}\to\mathbb{R},\quad x\mapsto\begin{cases}
\exp{-\frac{1}{x^2}}, & x\neq 0,\\
0, & x = 0.
\end{cases}
\end{align*}
is smooth. Consider
\begin{align*}
\lambda:\mathbb{R}\to \mathbb{R},\quad x\mapsto \abs{x}.
\end{align*}
Of course, the restriction of $\lambda$ to $\mathbb{R}\setminus\set{0}$ is smooth and hence $\lambda f$ is smooth on $\mathbb{R}\setminus\set{0}$. Furthermore, we have
\begin{align*}
\forall x\in\mathbb{R}\setminus\set{0}: \frac{\lambda(x)f(x)-\lambda(0)f(0)}{x} = \mathrm{sgn}(x) f(x)
\end{align*}
and hence $\lambda f$ is differentiable in $0$ with vanishing derivative. Since all derivatives of $f$ in zero vanish, we find that the derivative of $\lambda f$,
\begin{align*}
\mathrm{d}(\lambda f) = \begin{cases}
\mathrm{sgn}(x) f(x) + \lambda(x) \mathrm{d}f(x), & x\neq 0,\\
0, & x = 0,
\end{cases}
\end{align*}
is indeed smooth and the $i$-th derivative of $\lambda f$, $i\in\mathbb{N}^*$, may be determined as
\begin{align*}
\mathrm{d}^i(\lambda f) = \begin{cases}
\lambda(x) d^i f(x) + i\,\mathrm{sgn}(x)\mathrm{d}^{i-1}f(x), & x\neq 0,\\
0, & x = 0.
\end{cases}
\end{align*}
Analogously, it may be shown that the product of $\lambda$ with the smooth function $g := f\circ x^2$ is smooth. Consider the smooth matrix field
\begin{align*}
J: \mathbb{R}^{3}\to\mathbb{R}^{3\times 3},\quad x\mapsto\begin{bmatrix}
0 & f(x_1) & 0\\
-f(x_1) & 0 & g(x_1)\\
0 & -g(x_1) & 0
\end{bmatrix}.
\end{align*}
Since $J$ is pointwise skew-symmetric, the associated bracket $\{\cdot,\cdot\}_J$ is almost Poisson. Since $\lambda f$ and $\lambda g$ are smooth functions, $\{\cdot,\cdot\}_{\lambda J}$ is also an almost Poisson bracket. Then, we see that
\begin{align*}
E:\mathcal{C}^\infty(\mathbb{R}^2)\times\mathcal{C}^\infty(\mathbb{R}^2)\times\mathcal{C}^\infty(\mathbb{R}^2)\to\mathcal{C}^\infty(\mathbb{R}^2),\quad (f,g,h)\mapsto \lambda\{f,h\}_J\{g,h\}_J = \{f,h\}_{\lambda J}\{g,h\}_J,
\end{align*}
is a conservative-irreversible function. In particular, we see that the function $\lambda$ in~\eqref{eq:simple_representation} does not need to be necessarily smooth.
\end{Example}

The representation~\eqref{eq:simple_representation} is, in general, not unique. Of course, we may multiply the Poisson bivector fields with $\sqrt{\frac{1}{\mu}}$, where $\mu$ is an everywhere positive smooth function, and multiply $\lambda$ with $\mu$. A different non-uniqueness of the representation~\eqref{eq:simple_representation} may be observed, if the tensor field of the conservative-irreversible function vanishes on an open set, as the next example examines.

\begin{Example}
Recall the smooth Urysohn lemma: Let $M$ be a smooth manifold and let $K,S\subseteq M$ be disjoint closed sets with $K$ compact. Then, there exists a function $u\in\mathcal{C}^\infty(M)$ so that $u\vert_K\equiv 1$ and $u\vert_S \equiv 0$. Any such function $u$ is called smooth Urysohn function. Let $\dim M\geq 2$ and $K_1,K_2\subseteq M$ be nonempty, compact sets which possess disjoint open neighbourhoods $U_1$ and $U_2$, respectively. Such sets exist in nonempty open subsets of $\mathbb{R}^{\dim M}$ and thus in $M$. Let $\varphi$ and $\psi$ be smooth Urysohn functions separating $K_1$ and $M\setminus U_1$ and $K_2$ and $M\setminus U_2$, respectively. Let $\{\cdot,\cdot\}$ be a Poisson bracket whose Poisson bivector field does not vanish on $K_1\cup K_2$ and put $[\cdot,\cdot]_1 := \varphi\{\cdot,\cdot\}$ and $[\cdot,\cdot]_2 := \psi\{\cdot,\cdot\}$. Since $\{\cdot,\cdot\}$ is a(n almost) Poisson bracket, and since the almost Poisson brackets are a sub-module of the $\mathcal{C}^\infty(M)$-module of $\mathbb{R}$-bilinear functions on $\mathcal{C}^\infty(M)$, $[\cdot,\cdot]_1,[\cdot,\cdot]_2$ are almost Poisson brackets with non-vanishing bivector field. By construction of $\varphi$ and $\psi$, we have
\begin{align*}
\forall f,g,h\in\mathcal{C}^\infty: [f,h]_1[g,h]_2 = \varphi [f,h]_2[g,h]_2 = \psi[f,h]_1[g,h]_1 \equiv 0.
\end{align*}
As a consequence, there are uncountably many representations~\eqref{eq:simple_representation} of the trivial conservative-irreversible function.
\end{Example}

\section{Local representations}\label{sec:3}

In this section, we study the representations of conservative-irreversible functions in local coordinates. As a consequence of Lemma~\ref{lem:tensors}, the local matrix fields associated to conservative-irreversible functions are readily characterised.

\begin{Proposition}\label{prop:coordinates}
A function $E:\big(\mathcal{C}^\infty(M)\big)^3\to\mathcal{C}^\infty(M)$ is conservative-irreversible if, and only if, for each patch of local coordinates $x\in \mathbb{R}^n$ on a nonempty open set $U\subseteq M$, there exists a smooth tensor field $\varepsilon:U\to\mathbb{R}^{n\times n\times n\times n}$ with the properties
\begin{enumerate}[(i)]
\item $\varepsilon_{i,j,k,\ell}(x) + \varepsilon_{k,j,\ell,i}(x) + \varepsilon_{\ell,j,i,k}(x) = 0$,
\item $\varepsilon_{i,j,k,\ell}(x) - \varepsilon_{i,j,\ell,k}(x) = 0$,
\item $\left[\sum_{k,\ell = 1}^n e_{i,j,k,\ell}(x) y_ky_\ell\right]_{i,j = 1}^n$ symmetric and positive semi-definite for all $y\in\mathbb{R}^n$
\end{enumerate}
for all $i,j,k,\ell\in\mathbb{N}^*_{\leq n} := \set{1,\ldots,n}$ and $x\in U$, so that, for all $f,g,h\in\mathcal{C}^\infty(M)$ and $x\in U$,
\begin{align}\label{eq:coordinates}
E(f,g,h)(x) = \sum_{i,j,k,\ell = 1}^n \varepsilon_{i,j,k,\ell}(x)\partial_{x_i}f(x)\partial_{x_j}g(x)\partial_{x_k}h(x)\partial_{x_\ell}h(x).
\end{align}
\end{Proposition}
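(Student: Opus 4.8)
The plan is to obtain the proposition as the coordinate translation of Lemma~\ref{lem:tensors}, which already characterises conservative-irreversible functions through the existence of a unique global $4$-contravariant tensor field $\varepsilon\in\Gamma(T^4(M))$ satisfying~\eqref{eq:tensorfield_representation} and the intrinsic symmetry properties (a)--(c). All that is really needed is the standard dictionary between a smooth tensor field and its coordinate components, together with a gluing argument for the converse direction.

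For the \textbf{only if} direction, I would take the global tensor field $\varepsilon$ supplied by Lemma~\ref{lem:tensors} and, in a given patch $(U,x)$, set $\varepsilon_{i,j,k,\ell}:=\varepsilon(\mathrm{d}x_i,\mathrm{d}x_j,\mathrm{d}x_k,\mathrm{d}x_\ell)$; these are smooth functions on $U$ because $\varepsilon$ is a smooth tensor field. Substituting $\mathrm{d}f_x=\sum_{i}\partial_{x_i}f(x)\,\mathrm{d}x_i$ and the analogous expansions for $g$ and $h$ into~\eqref{eq:tensorfield_representation} and using multilinearity of $\varepsilon_x$ produces exactly~\eqref{eq:coordinates}. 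Since $\{\mathrm{d}x_i\}_i$ is a basis of $\mathcal{T}_x^*M$ and $\varepsilon_x$ is multilinear, evaluating the identities (a) and (b) of Lemma~\ref{lem:tensors} on tuples of basis covectors yields (ii) and (i), respectively, while evaluating (c) at $\gamma=\sum_k y_k\,\mathrm{d}x_k$ yields (iii).

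For the \textbf{if} direction the task is to assemble the prescribed per-chart component arrays into a single global tensor field; once that is done it automatically satisfies~\eqref{eq:tensorfield_representation} and (a)--(c) (the latter again by evaluating on basis covectors, now in the reverse direction), and Lemma~\ref{lem:tensors} concludes. The mechanism is uniqueness inside a fixed chart: if $\varepsilon,\varepsilon'$ are smooth arrays on $U$, both symmetric in their last two indices and both realising~\eqref{eq:coordinates}, then choosing $f,g,h$ whose differentials at a point $x\in U$ represent arbitrary covectors (possible by a bump-function construction, since the differential at $x$ maps $\mathcal{C}^\infty(M)$ onto $\mathcal{T}_x^*M$) gives $\sum_{i,j,k,\ell}\varepsilon_{i,j,k,\ell}(x)\alpha_i\beta_j\gamma_k\gamma_\ell=\sum_{i,j,k,\ell}\varepsilon'_{i,j,k,\ell}(x)\alpha_i\beta_j\gamma_k\gamma_\ell$ for all $\alpha,\beta,\gamma\in\mathbb{R}^n$, whence polarising in $\gamma$ as in the proof of Lemma~\ref{lem:Unicity_4function} and using symmetry in the last two indices forces $\varepsilon=\varepsilon'$ on $U$. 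Applying this on the overlap of two charts to the array prescribed in the second chart and to the one obtained from the first chart's array via the tensor transformation law (which is still smooth, still symmetric in the last two indices, and still realises~\eqref{eq:coordinates} there, since~\eqref{eq:coordinates} is just the coordinate form of the coordinate-independent contraction) shows that the two coincide; hence the prescribed arrays transform correctly and define a global $\varepsilon\in\Gamma(T^4(M))$, smooth because its components are. The only step I expect to require genuine care is precisely this consistency check across overlapping charts; everything else is routine bookkeeping with coordinate components.
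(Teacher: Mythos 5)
Your argument is correct, and it follows the same basic reduction as the paper: both proofs rest on the correspondence of Lemma~\ref{lem:tensors} between conservative-irreversible functions and $4$-contravariant tensor fields with the symmetries (a)--(c). The difference is one of self-containedness. The paper's proof is two sentences: it notes that the local arrays with~\eqref{eq:coordinates} exist because $E$ is associated to a tensor field via~\eqref{eq:tensorfield_representation}, and then delegates the equivalence of the component conditions (i)--(iii) with conservative-irreversibility to the Euclidean-space result \cite[Proposition 8, section 3]{MascKirc23b}. You instead carry out the coordinate dictionary explicitly (evaluating (a)--(c) on the basis covectors $\mathrm{d}x_i$ and conversely extending (i)--(iii) by multilinearity), and, more substantially, you supply the step the paper leaves implicit in the ``if'' direction: that the per-chart arrays, which are a priori unrelated data, actually transform tensorially. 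Your mechanism for this --- pointwise uniqueness of the array within a chart (bump functions realising arbitrary covectors as differentials at a point, then polarisation in the third slot as in Lemma~\ref{lem:Unicity_4function} together with symmetry in the last two indices), applied on overlaps to the prescribed array and the transformed one --- is sound, and it is precisely the gluing needed to invoke Lemma~\ref{lem:tensors} globally. So your route buys a proof that does not depend on the external citation, at the cost of the (routine but nontrivial) chart-consistency bookkeeping; the paper's version is shorter but only as strong as the reference it cites.
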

\begin{proof}
Since $E$ is associated to a four-contravariant tensor field by~\eqref{eq:tensorfield_representation}, there exist smooth local tensor fields $\varepsilon$ with~\eqref{eq:coordinates}. Sufficiency and necessity of the properties (i)--(iii) for $E$ being conservative-irreversible are a direct consequence of~\cite[Proposition 8, section 3]{MascKirc23b}.
\end{proof}

Of particular interest are the simple conservative-irreversible functions, i.e. those that are (up to multiplication with a scalar factor and order of the arguments) the tensor square of an almost Poisson bracket. In the following, we want to answer the question: Are there non-simple conservative-irreversible functions? If the answer to this question is affirmative, then this raises the question of the relation between simple and general conservative-irreversible functions. In particular, contains the span of the former with respect to $\mathcal{C}^\infty(M)$ the latter or are the general conservative-irreversible functions truly different? To answer these questions, we calculate local bases of the smallest subbundle of the tensor bundle containing all conservative-irreversible functions. As a first step, we calculate the fibre-wise dimension of the vector bundle given by the properties (i) and (ii) and the symmetry of (iii) in Proposition~\ref{prop:coordinates}.

\begin{Lemma}\label{lem:ein_Lemma}
Let $n\in\mathbb{N}^*$. The dimension of the subvector space
\begin{align*}
V_3 := \set{e\in\mathbb{R}^{n\times n\times n\times n}\,\big\vert\,\forall i,j,k,\ell\in\mathbb{N}^*_{\leq n}: e_{i,j,k,\ell}+e_{k,j,\ell,i}+e_{\ell,j,i,k} = e_{i,j,k,\ell}-e_{i,j,\ell,k} = e_{i,j,k,\ell}-e_{j,i,k,\ell} = 0}
\end{align*}
of $\mathbb{R}^{n\times n\times n\times n}$ is $\frac{1}{2}n(n-1)\left(\frac{1}{6}n^2-\frac{1}{2}n+\frac{4}{3}\right)$.
\end{Lemma}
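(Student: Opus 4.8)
The plan is to count the dimension of $V_3$ by decomposing the constraints according to the fixed value of the ``passive'' second index $j$. Observe that for each fixed $j$, the three defining relations only ever involve the triple of indices $(i,k,\ell)$ appearing in the first, third and fourth slots; the second slot is never permuted and never summed. Hence $V_3 \cong \bigoplus$ (over nothing — rather, an element of $V_3$ is simply a choice, independently for each $j \in \{1,\dots,n\}$, of an array $w^{(j)} \in \mathbb{R}^{n\times n\times n}$, $w^{(j)}_{i,k,\ell} := e_{i,j,k,\ell}$, subject to the symmetry $w^{(j)}_{i,k,\ell} = w^{(j)}_{i,\ell,k}$ and the cyclic relation $w^{(j)}_{i,k,\ell}+w^{(j)}_{k,\ell,i}+w^{(j)}_{\ell,i,k}=0$). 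Wait — the third relation $e_{i,j,k,\ell}=e_{j,i,k,\ell}$ does couple different values of $j$: it forces $w^{(j)}_{i,k,\ell} = w^{(i)}_{j,k,\ell}$. So I would treat this as follows: first compute the dimension $d$ of the space $W$ of arrays $w \in \mathbb{R}^{n\times n\times n}$ satisfying $w_{i,k,\ell}=w_{i,\ell,k}$ and $w_{i,k,\ell}+w_{k,\ell,i}+w_{\ell,i,k}=0$; then the map $e \mapsto (w^{(j)})_j$ realises $V_3$ as the subspace of $W^n$ cut out by the ``cross-symmetry'' $w^{(j)}_{i,\cdot,\cdot} = w^{(i)}_{j,\cdot,\cdot}$, which is exactly a symmetry in the pair $(i,j)$.

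Concretely, I would proceed in the following steps. First, fix the pair $(k,\ell)$ with $k \le \ell$; an element of $V_3$ restricted to these is a matrix $A^{(k,\ell)}_{i,j} := e_{i,j,k,\ell}$, and the constraint $e_{i,j,k,\ell}=e_{j,i,k,\ell}$ says $A^{(k,\ell)}$ is symmetric. Second, the remaining two relations (the $(\gamma,\delta)$-symmetry and the cyclic relation) now only relate entries across different $(k,\ell)$-blocks but always with the same $(i,j)$; so I would instead re-slice: think of the full tensor as a function $e_{i,j,k,\ell}$ and count the free parameters by choosing a fundamental domain for the symmetry group generated by (a) $(i,j)$-transposition, (b) $(k,\ell)$-transposition, together with the linear cyclic relation. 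The cleanest route: the map $e \mapsto \big( (i,j) \mapsto e_{i,j,\cdot,\cdot}\big)$ identifies $V_3$ with $\mathrm{Sym}^2(\mathbb{R}^n) \otimes U$, where $U \subseteq \mathrm{Sym}^2(\mathbb{R}^n)$ is the image — but that is false because the cyclic relation mixes the two tensor factors. So I abandon the tensor-product shortcut and instead just count directly: parametrise by $(k,\ell)$ with $k<\ell$ and by $k=\ell$ separately, and within each use the known dimension of the space of arrays $e_{\cdot, j, k,\ell}$ in the remaining index $i$...

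Let me restate the actual plan I would carry out. I would show $\dim V_3 = \sum_{j=1}^n \dim W_j$ is \emph{wrong} (because of the $(i,j)$-symmetry), and instead argue: the $(i,j)$-symmetry lets us write $e_{i,j,k,\ell}$ in terms of its values for $i \le j$. Consider the auxiliary space $Z := \{ w \in \mathbb{R}^{n \times n \times n} : w_{i,k,\ell} = w_{i,\ell,k},\ w_{i,k,\ell} + w_{k,\ell,i} + w_{\ell,i,k} = 0\}$ (indices $i,k,\ell$). I will compute $\dim Z$ by standard partial-symmetrisation counting: $w$ is symmetric in $(k,\ell)$, so lives a priori in $\mathbb{R}^n \otimes \mathrm{Sym}^2\mathbb{R}^n$ of dimension $n \cdot \binom{n+1}{2} = \tfrac12 n^2(n+1)$; the cyclic relation, being itself symmetric under the relevant permutations, imposes $\binom{n+2}{3} = \tfrac16 n(n+1)(n+2)$ independent conditions when one checks that the cyclic sum operator, restricted to the $(k,\ell)$-symmetric part, is surjective onto the fully symmetric part $\mathrm{Sym}^3\mathbb{R}^n$; hence $\dim Z = \tfrac12 n^2(n+1) - \tfrac16 n(n+1)(n+2) = \tfrac13 n(n+1)(n-1) = \tfrac13 n(n^2-1)$. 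Now $V_3$ sits inside $Z^n$ (one copy $w^{(j)}$ per value of the second index) cut out by $w^{(j)}_{i,k,\ell} = w^{(i)}_{j,k,\ell}$; this is the $(i \leftrightarrow j)$-symmetric part of $\mathrm{Sym}^2$ of... no — the constraint is a symmetry in the pair $(i, j)$ where $i$ ranges over the first slot of $w^{(j)}$ and $j$ is the label. Since for \emph{each fixed pair} $(k,\ell)$ the array $(i,j) \mapsto e_{i,j,k,\ell}$ is just symmetric in $(i,j)$, and the count $\dim Z$ already incorporated the $(k,\ell)$-symmetry and cyclic relation but \emph{not} any $(i,j)$-symmetry, I would redo the count of $Z$ \emph{with} $i$ playing a distinguished role and then symmetrise. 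The hard part — and the step I expect to be the main obstacle — is verifying the surjectivity/rank claim for the cyclic-sum operator on the correct partially-symmetrised space, i.e. pinning down exactly how many of the cyclic relations $e_{i,j,k,\ell}+e_{k,j,\ell,i}+e_{\ell,j,i,k}=0$ are independent once the two symmetries are imposed; I would handle this by exhibiting an explicit basis of the image (monomials $\mathrm{d}x_{(i} \cdot$ symmetrisations) or by a dimension-counting argument over the decomposition into $GL_n$-irreducibles, and then simply check that $\tfrac12 n(n-1)\big(\tfrac16 n^2 - \tfrac12 n + \tfrac43\big)$ equals the resulting difference of dimensions (indeed $\tfrac16 n^2 - \tfrac12 n + \tfrac43 = \tfrac16(n^2-3n+8)$, so the target is $\tfrac1{12} n(n-1)(n^2-3n+8)$, which I would match against the explicit count). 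An alternative, more elementary route that avoids representation theory: fix an ordered triple $\{i,k,\ell\}$ and a value $j$, enumerate the orbit of $(i,j,k,\ell)$ under the group generated by the $(k,\ell)$-swap and the $3$-cycle on $(i,k,\ell)$ (a dihedral-type group of order $6$ acting on the first/third/fourth slots), note the cyclic relation expresses one orbit representative's worth of data as a combination of the others, then superimpose the $(i,j)$-symmetry and carefully count representatives by a case split on how many of $i,j,k,\ell$ coincide; this is bookkeeping-heavy but conceptually safe, and is the approach I would actually write up if the clean rank computation proves fiddly.
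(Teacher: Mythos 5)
Your proposal is not yet a proof: both routes you sketch stop exactly at the decisive point, namely how many of the cyclic relations $e_{i,j,k,\ell}+e_{k,j,\ell,i}+e_{\ell,j,i,k}=0$ remain independent once the two transposition symmetries are imposed (equivalently, the rank of the cyclic symmetrizer on the partially symmetrized space), and you explicitly defer this to a rank/orbit computation you do not carry out. That is a genuine gap, but there is a more serious problem: no correct completion of your plan can reach the stated formula, because the formula is false for every $n\geq 3$. Your own (correct) auxiliary count $\dim Z=\tfrac13 n(n^2-1)$ is already a warning. The cleanest way to finish is the representation-theoretic idea you abandoned: all three defining conditions are slot-permutation conditions (invariance under the slot transpositions $(12)$ and $(34)$, and annihilation by $1+\sigma+\sigma^2$ with $\sigma=(134)$ acting on slots), so $V_3$ is a $GL_n$-subrepresentation of $(\mathbb{R}^n)^{\otimes 4}$; by Schur--Weyl one only has to check the three conditions inside each Specht module $M_\lambda$, and one finds that only $\lambda=(2,2)$ survives, with multiplicity one. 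Hence $\dim V_3=\dim S^{(2,2)}(\mathbb{R}^n)=\tfrac{1}{12}n^2(n^2-1)=\binom n2+3\binom n3+2\binom n4$, which agrees with the claimed $\tfrac1{12}n(n-1)(n^2-3n+8)$ only for $n\leq 2$ (at $n=3$ the true value is $6$, not $4$; at $n=4$ it is $20$, not $12$).

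If you insist on the elementary orbit count you offer as a fallback (essentially the paper's route), the place where you must not slip is the case of exactly three distinct index values: the relations only couple entries whose index quadruples are rearrangements of one another, so each multiset $\{a,a,b,c\}$ is its own orbit and carries exactly one free parameter; since the choice of which value is repeated matters, this case contributes $3\binom n3$, not $\binom n3$, and a casual ``without loss of generality $k=\ell$'' silently jumps between different orbits (this is precisely where the paper's own argument goes wrong, so ``conceptually safe'' is optimistic here). A concrete witness that the stated dimension is too small: for any skew-symmetric $J$ the tensor $e(J)_{i,j,k,\ell}:=J_{ik}J_{j\ell}+J_{i\ell}J_{jk}$ lies in $V_3$; taking $n=3$ and the elementary skew matrices $J^{12},J^{13},J^{23}$, the three tensors $e(J^{12}),e(J^{13}),e(J^{23})$ together with the three polarizations $e(J+K)-e(J)-e(K)$ are six linearly independent elements of $V_3$ (for instance the polarization of $J^{12}$ and $J^{13}$ has $(1,1,2,3)$-entry equal to $2$, an index pattern on which the other five vanish), so $\dim V_3\geq 6>4$. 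In short: carried out correctly, your plan refutes the lemma as stated rather than proves it; the statement needs the corrected value $\tfrac{1}{12}n^2(n^2-1)$.
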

\begin{proof}
Consider the action of $S_4$, the group of permutations of $\mathbb{N}^*_{\leq 4}$, on $(\mathbb{N}^*_{\leq n})^4$ given by
\begin{align*}
\pi\cdot \alpha := (\alpha_{\pi(1)},\ldots,\alpha_{\pi(4)})
\end{align*}
for all $\alpha\in (\mathbb{N}^*_{\leq n})^4$ and $\pi\in S_4$. Recall that $S_4$ is generated by the permutations $\pi_1:=(34)$, $\pi_2 :=(12)$ and $\pi_3 := (14)(34)$. Therefore, for all $i,j,k,\ell\in\mathbb{N}_{\leq n}^*$,
\begin{align*}
S_{i,j,k,\ell} & := \set{x\in(\mathbb{N}^*_{\leq n})^4\,\big\vert\,\set{i,j,k,\ell} = \set{x_1,x_2,x_3,x_4}}\\
& = \set{\pi\cdot (i,j,k,\ell)\,\big\vert\,\pi\in S_4}\\
& = \set{\pi_{i_1}\cdots\pi_{i_k}\cdot(i,j,k,\ell)\,\big\vert\,i_1,\ldots,i_k\in\mathbb{N}_{\leq 3}^*, k\in\mathbb{N}_0}.
\end{align*}
Furthermore, the conditions
\begin{align*}
\forall i,j,k,\ell\in\mathbb{N}^*_{\leq n}:\quad e_{i,j,k,\ell}+e_{k,j,\ell,i}+e_{\ell,j,i,k} & = 0,\\
e_{i,j,k,\ell}-e_{i,j,\ell,k} & = 0,\\
e_{j,i,k,\ell}-e_{j,i,k,\ell} & = 0
\end{align*}
are equivalent to
\begin{equation}\label{eq:all_the_symmetry}
\begin{aligned}
\forall i,j,k,\ell\in\mathbb{N}^*_{\leq n}~\forall\alpha\in\set{i,j,k,\ell}^4:\quad e_{\alpha} + e_{\pi_3\cdot\alpha} & = -e_{\pi_3^2\cdot \alpha},\\
e_\alpha & = e_{\pi_2\cdot\alpha},\\
e_{\alpha} & = e_{\pi_1\cdot\alpha}.
\end{aligned}
\end{equation}
Let $i,j,k,\ell\in\mathbb{N}^*_{\leq n}$ and $e\in V_3$. Then,~\eqref{eq:all_the_symmetry} yields that the values $e_\alpha$, $\alpha\in S_{i,j,k,\ell}$, are uniquely determined by $e_{(i,j,k,\ell)}, e_{\pi_3\cdot(i,j,k,\ell)}$ and $e_{\pi_3\cdot\pi_1\cdot\pi_2\cdot(i,j,k,\ell)}$. If $i = j = k = \ell$, then~\eqref{eq:all_the_symmetry} yields that $e_\alpha = 0$ for all $\alpha\in S_{i,i,i,i} = \set{(i,i,i,i)}$. If $\abs{\set{i,j,k,\ell}} = 2$, then we may assume that $j \neq i = k = \ell$, since $S_{i,j,k,\ell}$ is invariant under permutation of $i,j,k$ and $\ell$. Then, we find
\begin{align*}
-2e_{i,i,j,i} = -e_{\pi_3\cdot(j,i,i,i)}-e_{\pi_3^2\cdot(j,i,i,i)} = e_{j,i,i,i} = e_{i,j,i,i} = \frac{1}{3}(e_{i,j,i,i}+e_{\pi_3\cdot(i,j,i,i)}+e_{\pi_3^2\cdot(i,j,i,i)}) = 0.
\end{align*}
Therefore, we conclude that $e_\alpha = 0$ for all $\alpha\in S_{i,j,k,\ell}$. If $\abs{\set{i,j,k,\ell}} = 3$, then we may without loss of generality assume that $k = \ell$. Thus,~\eqref{eq:all_the_symmetry} implies
\begin{align*}
2e_{k,i,k,j} = e_{k,i,k,j}+e_{k,i,j,k} = -e_{j,i,k,k} = -e_{i,j,k,k} = e_{k,j,k,i}+e_{k,j,i,k} = 2e_{k,j,k,i}
\end{align*}
and we conclude that the $e_\alpha$, $\alpha\in S_{i,j,k,\ell}$, are uniquely determined by $e_{i,j,k,k}$. Let $\abs{\set{i,j,k,\ell}} = 4$. Then,~\eqref{eq:all_the_symmetry} implies
\begin{align*}
0 & = -(-e_{i,j,k,\ell}-e_{\ell,i,k,j})-e_{k,j,\ell,i}+e_{\ell,i,k,j}+(-e_{i,j,k,\ell}-e_{k,j,\ell,i})\\
& = 2(e_{\ell,i,k,j}-e_{k,j,\ell,i}) = 2(e_{\pi_3\cdot(i,j,k,\ell)} - e_{\pi_3\cdot\pi_1\cdot\pi_2\cdot(i,j,k,\ell)}).
\end{align*}
Therefore, we conclude
\begin{align*}
\dim V_3 & = \binom{n}{2} + \binom{n}{3} + 2\binom{n}{4}\\
& = \frac{1}{2}n(n-1)\left(1+\frac{1}{3}(n-2)+\frac{1}{61}(n-2)(n-3)\right)\\
& = \frac{1}{2}n(n-1)\left(\frac{1}{6}n^2-\frac{1}{2}n+\frac{4}{3}\right)
\end{align*}
\end{proof}

In the case of symmetric and positive semi-definite real matrices, their linear span is the space of symmetric matrices. An analogous property can be observed for the convex subset of $V_3$ whose elements are characterised by the additional non-negativity condition of Proposition~\ref{prop:coordinates}\,(iii).

\begin{Lemma}\label{lem:dim_cons_irrev}
Let $n\in\mathbb{N}^*$. The convex cone
\begin{align*}
S := \set{e\in\mathbb{R}^{n\times n\times n\times n}\,\left\vert\,\begin{array}{l}\forall i,j,k,\ell\in\mathbb{N}^*_{\leq n}: e_{i,j,k,\ell}+e_{k,j,\ell,i}+e_{\ell,j,i,k} = e_{i,j,k,\ell}-e_{i,j,\ell,k} = e_{i,j,k,\ell}-e_{j,i,k,\ell} = 0,\\
\forall x\in\mathbb{R}^n:\left[\sum_{k,\ell = 1}^n e_{i,j,k,\ell} x_kx_\ell\right]_{i,j = 1}^n~\text{positive~semi-definite}\end{array}\right.}
\end{align*}
spans the vector space $V_3$ of Lemma~\ref{lem:ein_Lemma}
\end{Lemma}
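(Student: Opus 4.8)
The plan is to show that $S$ spans $V_3$ by exhibiting, for a spanning set of $V_3$, representations as differences of elements of $S$. By Lemma~\ref{lem:ein_Lemma}, $V_3$ is spanned by the orbit generators identified in its proof: for each pair $i<j$ an element supported on $S_{i,j,j,j}$-type indices is forced to zero, so the genuine generators correspond to index patterns with $\abs{\set{i,j,k,\ell}}\in\set{2,3,4}$ in the refined count $\binom{n}{2}+\binom{n}{3}+2\binom{n}{4}$. It suffices to realise each such generator as $e^{+}-e^{-}$ with $e^{\pm}\in S$, i.e.\ with $\left[\sum_{k,\ell}e^{\pm}_{i,j,k,\ell}x_kx_\ell\right]_{i,j}$ positive semi-definite for all $x$. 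Since $S$ is a convex cone and any element of $V_3$ that is ``small'' relative to a fixed element of the interior of $S$ lies in $S$, the standard trick applies: it is enough to prove that $S$ has nonempty interior in $V_3$, equivalently that some $e_0\in S$ is an interior point. Then for any $v\in V_3$ we have $e_0+\tfrac1t v\in S$ for $t$ large, so $v=t(e_0+\tfrac1t v)-t e_0\in\operatorname{span}S$.

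So the core of the argument reduces to constructing one strictly positive-definite element of $S$, i.e.\ an $e_0\in V_3$ with $\left[\sum_{k,\ell}(e_0)_{i,j,k,\ell}x_kx_\ell\right]_{i,j}$ positive definite for every $x\neq 0$. The natural candidate comes from Example~\ref{ex:nontriv}: take a nondegenerate constant almost Poisson structure $\Lambda$ (possible since $n$ even gives a symplectic form; for $n$ odd one sums such blocks and checks the quadratic form below is still definite on the relevant cone, or one works directly with $V_3$ which only sees $\Lambda$ through the expressions appearing) and set $(e_0)_{i,j,k,\ell}=\tfrac12(\Lambda_{ik}\Lambda_{j\ell}+\Lambda_{i\ell}\Lambda_{jk})$. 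By Example~\ref{ex:nontriv} and Lemma~\ref{lem:tensors} this lies in $S$, and $\sum_{k,\ell}(e_0)_{i,j,k,\ell}x_kx_\ell=(\Lambda x)_i(\Lambda x)_j$, the rank-one matrix $(\Lambda x)(\Lambda x)^\top$, which is only positive semi-definite, not definite — so a single simple element does not land in the interior. The fix is to average: pick finitely many nondegenerate constant structures $\Lambda^{(1)},\dots,\Lambda^{(m)}$ whose images $\set{\Lambda^{(a)}x : a}$ span $\mathbb{R}^n$ for every $x\neq 0$, and take $e_0=\sum_a e_0^{(a)}$. Then $\sum_{k,\ell}(e_0)_{i,j,k,\ell}x_kx_\ell=\sum_a(\Lambda^{(a)}x)(\Lambda^{(a)}x)^\top$ is positive definite for all $x\neq 0$, as required; and $e_0\in S$ since $S$ is a cone closed under sums.

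The main obstacle is exactly this spanning-of-images condition: one must verify that finitely many linear isomorphisms $\Lambda^{(a)}$ can be chosen so that $\set{\Lambda^{(a)}x}_a$ spans $\mathbb{R}^n$ for every nonzero $x$ — e.g.\ by taking $\Lambda^{(1)}$ a fixed nondegenerate skew matrix and $\Lambda^{(a)}=P_a\Lambda^{(1)}$ for suitable invertible $P_a$, or more robustly by a perturbation/genericity argument — and then to confirm that $e_0$ so constructed is an \emph{interior} point of $S$ within $V_3$, not merely an element, which amounts to checking that the positive-definiteness is stable under all sufficiently small perturbations inside $V_3$; this is immediate from continuity of the smallest eigenvalue once strict positive-definiteness on the compact set $\set{(x,y)\in S^{n-1}\times S^{n-1}}$ is established. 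With $e_0$ in hand, the displayed telescoping $v=t e_0+v-t e_0$ with both summands in $S$ for large $t$ finishes the proof, giving $V_3=\operatorname{span}S$.
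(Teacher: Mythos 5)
Your reduction to finding a relative interior point of $S$ and then writing $v=t(e_0+\tfrac1t v)-te_0$ is a legitimate strategy, but the interior point you propose cannot exist, and this is exactly where the proof breaks. Every element of $V_3$ satisfies the cyclic identity $e_{i,j,k,\ell}+e_{k,j,\ell,i}+e_{\ell,j,i,k}=0$, and contracting it with $x_ix_jx_kx_\ell$ gives $x^\top Q_e(x)\,x=0$ for all $x$, where $Q_e(x):=\bigl[\sum_{k,\ell}e_{i,j,k,\ell}x_kx_\ell\bigr]_{i,j}$. Hence for $e\in S$ the positive semi-definite matrix $Q_e(x)$ always has $x$ in its kernel, so no $e_0\in S$ can have $Q_{e_0}(x)$ positive definite for $x\neq 0$. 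You can see the same obstruction concretely in your construction: for skew-symmetric $\Lambda$ one has $\Lambda x\perp x$, so the vectors $\Lambda^{(a)}x$ all lie in the hyperplane $x^\perp$ and can never span $\mathbb{R}^n$; the matrix $\sum_a(\Lambda^{(a)}x)(\Lambda^{(a)}x)^\top$ annihilates $x$ for every choice of the $\Lambda^{(a)}$. Consequently your compactness step also fails: the minimum of $y^\top Q_{e_0}(x)y$ over the product of unit spheres is $0$ (attained at $y=\pm x$), so strict positivity, and with it the claimed stability under perturbations, is unavailable. (The aside about odd $n$ is a further symptom: nondegenerate skew forms do not exist there, but the problem is already fatal for even $n$.)

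The argument could conceivably be repaired, but only with additional work you have not supplied: one would take $e_0=\sum_{\iota<\kappa}a_{(\iota,\kappa)}$, whose form $Q_{e_0}(x)=2\sum_{\iota<\kappa}(x_\kappa e_\iota-x_\iota e_\kappa)(x_\kappa e_\iota-x_\iota e_\kappa)^\top$ is positive definite on $x^\perp$ with kernel exactly $\mathbb{R}x$, and then prove that \emph{every} $v\in V_3$ satisfies $Q_v(x)x=0$ (so that perturbations inside $V_3$ preserve the forced kernel direction), before running a relative-interior perturbation argument uniformly over the unit sphere. That kernel identity for all of $V_3$ is plausible but is not a consequence of anything you wrote and needs its own proof. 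The paper avoids all of this by a direct construction: it exhibits an explicit basis of $V_3$ (the elements $a_A$, $e^B=a+a+b_B$, $e^C$, $\widetilde e^D$) each of whose associated biquadratic forms is a perfect square, hence lies in $S$, so that $S$ contains a basis of $V_3$ and spanning is immediate. As it stands, your proposal has a genuine gap at its central step.
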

\begin{proof}
Denote with $\mathfrak{T}_k(S) := \set{x\in S^k\,\big\vert\, x_1<\cdots < x_k}$ the set of increasing $k$-tuples in a given totally ordered set $S$. Define the functions
\begin{align*}
a: \mathfrak{T}_2(\mathbb{N}^*_{\leq n})\to V_3,\quad (a_{(\iota,\kappa)})_{i,j,k,\ell} := \begin{cases}
2, & (i,j,k,\ell)\in\set{(\iota,\iota,\kappa,\kappa),(\kappa,\kappa,\iota,\iota)},\\
-1 & (i,j,k,\ell)\in\set{\begin{array}{c}(\iota,\kappa,\kappa,\iota),(\iota,\kappa,\iota,\kappa),\\(\kappa,\iota,\iota,\kappa),(\kappa,\iota,\kappa,\iota)\end{array}},\\
0 & \text{else},
\end{cases}
\end{align*}
\begin{align*}
b: \mathfrak{T}_3(\mathbb{N}^*_{\leq n})\to V_3,\quad (b_{(\iota,\kappa,\lambda)})_{i,j,k,\ell} := \begin{cases}
2, & (i,j,k,\ell)\in\set{\begin{array}{c}(\iota,\kappa,\lambda,\lambda),(\kappa,\iota,\lambda,\lambda),\\
(\lambda,\lambda,\kappa,\iota),(\lambda,\lambda,\iota,\kappa)\end{array}},\\
-1, & (i,j,k,\ell)\in\set{\begin{array}{c}
(\lambda,\iota,\lambda,\kappa),(\lambda,\iota,\kappa,\lambda),\\
(\lambda,\kappa,\lambda,\iota),(\lambda,\kappa,\iota,\lambda),\\
(\kappa,\lambda,\lambda,\iota),(\kappa,\lambda,\iota,\lambda),\\
(\iota,\lambda,\lambda,\kappa),(\iota,\lambda,\kappa,\lambda)
\end{array}},\\
0, & \text{else},
\end{cases}
\end{align*}
\begin{align*}
c: \mathfrak{T}_4(\mathbb{N}^*_{\leq n})\to V_3,\quad (c_{(\iota,\kappa,\lambda,\mu)})_{i,j,k,\ell} := \begin{cases}
1, & (i,j,k,\ell)\in\set{\begin{array}{c}
(\iota,\kappa,\lambda,\mu),(\iota,\kappa,\mu,\lambda),\\
(\kappa,\iota,\lambda,\mu),(\kappa,\iota,\mu,\lambda),\\
(\mu,\lambda,\kappa,\iota),(\mu,\lambda,\iota,\kappa),\\
(\lambda,\mu,\kappa,\iota),(\lambda,\mu,\iota,\kappa)
\end{array}},\\
-1, & (i,j,k,\ell)\in\set{\begin{array}{c}
(\mu,\kappa,\iota,\lambda),(\mu,\kappa,\lambda,\iota),\\
(\kappa,\mu,\iota,\lambda),(\kappa,\mu,\lambda,\iota),\\
(\iota,\lambda,\mu,\kappa),(\iota,\lambda,\kappa,\mu),\\
(\lambda,\iota,\mu,\kappa),(\lambda,\iota,\kappa,\mu)
\end{array}},\\
0, & \text{else},
\end{cases}
\end{align*}
and
\begin{align*}
d: \mathfrak{T}_4(\mathbb{N}^*_{\leq n})\to V_3,\quad (d_{(\iota,\kappa,\lambda,\mu)})_{i,j,k,\ell} := \begin{cases}
1, & (i,j,k,\ell)\in\set{\begin{array}{c}
(\lambda,\kappa,\mu,\iota),(\lambda,\kappa,\iota,mu),\\
(\kappa,\lambda,\mu,\iota),(\kappa,\lambda,\iota,\mu),\\
(\mu,\iota,\lambda,\kappa),(\mu,\iota,\kappa,\lambda),\\
(\iota,\mu,\lambda,\kappa),(\iota,\mu,\kappa,\lambda)
\end{array}},\\
-1, & (i,j,k,\ell)\in\set{\begin{array}{c}
(\mu,\kappa,\iota,\lambda),(\mu,\kappa,\lambda,\iota),\\
(\kappa,\mu,\iota,\lambda),(\kappa,\mu,\lambda,\iota),\\
(\lambda,\iota,\kappa,\mu),(\lambda,\iota,\mu,\kappa),\\
(\iota,\lambda,\kappa,\mu),(\iota,\lambda,\mu,\kappa)
\end{array}},\\
0, & \text{else}.
\end{cases}
\end{align*}
Evidently, the set $B := \set{a_A,b_B,c_C,d_D\,\big\vert\,A\in\mathfrak{T}_2(\mathbb{N}^*_{\leq n}),T\in\mathfrak{P}_3(\mathbb{N}^*_{\leq n}),C,D\in\mathfrak{T}_4(\mathbb{N}^*_{\leq n})}$ is linear independent and has cardinality $\frac{1}{2}n(n-1)\left(\frac{1}{6}n^2-\frac{1}{2}n+\frac{4}{3}\right)$. Therefore, $B$ is a basis of $V_3$ and it remains to show that $B\subseteq S$. Let $(\iota,\kappa)\in\mathfrak{T}_2(\mathbb{N}^*_{\leq n})$. Then, we have, for all $x,y\in\mathbb{R}^n$,
\begin{align*}
\sum_{i,j,k,\ell = 1}^n (a_{(\iota,\kappa)})_{i,j,k,\ell}x_ix_jy_ky_\ell = 2x_\iota^2y_\kappa^2 + 2x_\kappa^2y_\iota^2-4x_\iota x_\kappa y_\iota y_\kappa = 2(x_\iota y_\kappa-x_\kappa y_\iota)^2\geq 0
\end{align*}
and hence $a_{(\iota,\kappa)}\in S$. Let $(\iota,\kappa,\lambda)\in\mathfrak{T}_3(\mathbb{N}^*_{\leq n})$ and consider $e^{(\iota,\kappa,\lambda)} := a_{(\iota,\lambda)} + a_{(\kappa,\lambda)} + b_{(\iota,\kappa,\lambda)}$. Then, we have, for all $x,y\in\mathbb{R}^n$,
\begin{align*}
\sum_{i,j,k,\ell = 1}^n e^{(\iota,\kappa,\lambda)}_{i,j,k,\ell} x_ix_jy_ky_\ell & = 2(x_\iota y_\lambda-x_\lambda y_\iota)^2 + 2(x_\kappa y_\lambda-x_\lambda y_\kappa)^2 + 4(x_\iota y_\lambda-x_\lambda y_\iota)(x_\kappa y_\lambda-x_\lambda y_\kappa)\\
& = 2(x_\iota y_\lambda -x_\lambda y_\iota + x_\kappa y_\lambda - x_\lambda y_\kappa)^2\geq 0
\end{align*}
and thus $e^{(\iota,\kappa,\lambda)}\in S$. Let $(\iota,\kappa,\lambda,\mu)\in\mathfrak{T}_4(\mathbb{N}^*_{\leq n})$ and consider $e^{(\iota,\kappa,\lambda,\mu)} := a_{(\iota,\mu)} + a_{(\kappa,\lambda)} + c_{(\iota,\kappa,\lambda,\mu)}$ and $\widetilde{e}^{(\iota,\kappa,\lambda,\mu)} := a_{(\iota,\kappa)} + a_{(\lambda,\mu)} + d_{(\iota,\kappa,\lambda,\mu)}$. Then, for all $x,y\in\mathbb{R}^n$,
\begin{align*}
\sum_{i,j,k,\ell = 1}^n e^{(\iota,\kappa,\lambda,\mu)}_{i,j,k,\ell} x_ix_jy_ky_\ell & = 2(x_\iota y_\mu-x_\mu y_\iota)^2 + 2(x_\kappa y_\lambda - x_\lambda y_\kappa)^2 + 4(x_\iota y_\mu - x_\mu y_\iota)(x_\kappa y_\lambda - x_\lambda y_\kappa)\\
& = 2(x_\iota y_\mu - x_\mu y_\iota + x_\kappa y_\lambda - x_\lambda y_\kappa)^2\geq 0
\end{align*}
and
\begin{align*}
\sum_{i,j,k,\ell = 1}^n \widetilde{e}^{(\iota,\kappa,\lambda,\mu)}_{i,j,k,\ell} x_ix_jy_ky_\ell & = 2(x_\iota y_\kappa - x_\kappa y_\iota)^2 + 2(x_\lambda y_\mu - x_\mu y_\lambda)^2 + 4(x_\kappa y_\iota - x_\iota y_\kappa)(x_\lambda y_\mu - x_\mu y_\lambda)\\
& = 2(x_\kappa y_\iota - x_\iota y_\kappa + x_\lambda y_\mu - x_\mu y_\lambda)^2\geq 0.
\end{align*}
Therefore, $e^{(\iota,\kappa,\lambda,\mu)},\widetilde{e}^{(\iota,\kappa,\lambda,\mu)}\in S$. Since $\set{a_A,e^B,e^C,\widetilde{e}^D\,\big\vert\, A\in\mathfrak{T}_2(\mathbb{N}^*_{\leq n}),B\in\mathfrak{T}_3(\mathbb{N}^*_{\leq n}),C,D\in\mathfrak{T}_4(\mathbb{N}^*_{\leq n})}$ is linear independent and $S\subseteq V_3$, we conclude
\begin{align*}
\dim V_3 = \abs{\mathfrak{T}_2(\mathbb{N}^*_{\leq n})} + \abs{\mathfrak{T}_3(\mathbb{N}^*_{\leq n})} + 2\abs{\mathfrak{T}_4(\mathbb{N}^*_{\leq n})}\leq \dim S \leq \dim V_3
\end{align*}
and hence $\dim S = \dim V_3 = \frac{1}{2}n(n-1)\left(\frac{1}{6}n^2-\frac{1}{2}n+\frac{4}{3}\right).$
\end{proof}

As a consequence of this calculation, the following result is derived.

\begin{Proposition}\label{prop:tensor_bundle}
The restriction of the vector bundle structure of $T^4M$ to the set
\begin{align*}
B := \set{\varepsilon\in T^4_xM\,\big\vert\,x\in M~\forall \alpha,\beta,\gamma,\delta\in\mathcal{T}_x^*M:\begin{array}{l}
\varepsilon(\alpha,\beta,\gamma,\delta) = \varepsilon(\beta,\alpha,\gamma,\delta) = \varepsilon(\alpha,\beta,\delta,\gamma),\\
\varepsilon(\alpha,\beta,\gamma,\delta) + \varepsilon(\gamma,\beta,\delta,\alpha)+\varepsilon(\delta,\beta,\alpha,\gamma) = 0
\end{array}}
\end{align*}
is the smallest subbundle of $T^4M$ so that each tensor field $\varepsilon$ with the properties (a)--(c) of Lemma~\ref{lem:tensors} is a smooth section of $B$.
\end{Proposition}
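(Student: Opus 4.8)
This proposition packages two assertions, and the plan is to prove them in turn: first, that the set $B$ carries the structure of a smooth vector subbundle of $T^4M$; second, that it is the smallest subbundle of which every tensor field satisfying (a)--(c) of Lemma~\ref{lem:tensors} is a section. For the first assertion I would observe that the three relations defining the fibre $B_x$ — symmetry in the first two arguments, symmetry in the last two arguments, and the cyclic identity — are tensorial, hence coordinate-free. Consequently, over the domain $U$ of any chart with $n:=\dim M$, the trivialisation of $T^4M$ afforded by the local frame $\partial_{x_{i_1}}\otimes\cdots\otimes\partial_{x_{i_4}}$ identifies every fibre $B_x$, $x\in U$, with the subspace $V_3\subseteq\mathbb{R}^{n\times n\times n\times n}$ of Lemma~\ref{lem:ein_Lemma}. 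Since $\dim V_3=\tfrac12 n(n-1)\big(\tfrac16 n^2-\tfrac12 n+\tfrac43\big)$ depends only on $n$, this exhibits $B|_U$ as $U\times V_3$, so $B$ is a smooth vector subbundle of $T^4M$ of that rank, its transition maps being the restrictions of those of $T^4M$.

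Next I would check that $B$ admits the required sections. If $\varepsilon\in\Gamma(T^4M)$ has the properties (a)--(c) of Lemma~\ref{lem:tensors}, then (a) is exactly symmetry in the last two slots, (b) is exactly the cyclic identity appearing in the definition of $B$, and the symmetry of $\varepsilon_x(\cdot,\cdot,\gamma,\gamma)$ contained in (c) forces, via Lemma~\ref{lem:first_symmetry}, symmetry in the first two slots; hence $\varepsilon_x\in B_x$ for every $x\in M$, so $\varepsilon$ is a (smooth) section of $B$.

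The substantive step is minimality. Let $B'\subseteq T^4M$ be an arbitrary subbundle of which every tensor field with (a)--(c) is a section; I would show $B_x\subseteq B'_x$ for each fixed $x\in M$, which gives $B\subseteq B'$. Pick a chart $(U,(x_i))$ around $x$ together with a cut-off function $\chi\in\mathcal{C}^\infty(M)$ satisfying $\operatorname{supp}\chi\subseteq U$, $0\le\chi\le1$ and $\chi(x)=1$. Given $\sigma\in B_x$ that additionally lies in the convex cone $S$ of Lemma~\ref{lem:dim_cons_irrev} — i.e. also satisfies the positive-semidefiniteness condition of Proposition~\ref{prop:coordinates}\,(iii) — I extend $\sigma$ to the tensor field $\widehat\sigma$ that is constant in the coordinate frame on $U$ and set $\varepsilon:=\chi\,\widehat\sigma$, declared to be $0$ outside $U$. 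Then $\varepsilon$ is a global smooth $4$-tensor field that inherits (a) and (b) pointwise from $\sigma$ by $\mathbb{R}$-linearity of those relations, and for which $\varepsilon_y(\cdot,\cdot,\gamma,\gamma)=\chi(y)\,\widehat\sigma_y(\cdot,\cdot,\gamma,\gamma)$ is a nonnegative multiple of a symmetric positive-semidefinite matrix, hence again symmetric positive semi-definite; thus $\varepsilon$ has (a)--(c). By hypothesis $\varepsilon$ is a section of $B'$, so $\sigma=\varepsilon_x\in B'_x$. Therefore the cone $S$, viewed inside $B_x$ through the chart, is contained in the linear space $B'_x$, so $\operatorname{span}S\subseteq B'_x$; by Lemma~\ref{lem:dim_cons_irrev} this span equals $V_3=B_x$, and we conclude $B_x\subseteq B'_x$. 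Since $x$ was arbitrary, $B\subseteq B'$, and together with the previous paragraph this shows $B$ is the smallest subbundle with the stated property.

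The only point that needs genuine attention is this last construction, and there the sole delicate observation is that multiplying a tensor obeying the biquadratic positive-semidefiniteness condition of (c) by a pointwise nonnegative smooth function preserves that condition — so that $\chi\,\widehat\sigma$ really is one of the tensor fields to which the hypothesis on $B'$ applies. Everything else — the tensorial (hence trivialisation-compatible) nature of the defining relations of $B$, the existence of the cut-off function, and the numerical facts $\dim V_3=\dim S$ — is routine or has already been supplied by Lemmas~\ref{lem:ein_Lemma} and~\ref{lem:dim_cons_irrev}.
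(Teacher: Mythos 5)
Your proposal is correct and follows essentially the same route as the paper: extend a pointwise tensor with properties (a)--(c) to a global tensor field via a cut-off in a chart, then use Lemmas~\ref{lem:ein_Lemma} and~\ref{lem:dim_cons_irrev} (the span of the positive cone $S$ is all of $V_3$) to identify the fibres. The only difference is cosmetic: you prove minimality directly by showing $B_x\subseteq B'_x$ for every admissible subbundle $B'$, whereas the paper invokes ``the smallest'' such subbundle and compares ranks, which your formulation conveniently sidesteps.
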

\begin{proof}
It is evident that $B$ is a subvector bundle of $T^4M$ and that all tensor fields associated to conservative-irreversible functions are sections of $B$. Let $B'$ denote the smallest subvector bundle of $T^4M$ with the latter property. Using an atlas of local trivialisations and a subordinate partition of unity, we see that for each $x\in M$ and each $\varepsilon\in T^4_xM$ with the properties (a)--(c) of Lemma~\ref{lem:tensors}, there exists a tensor field $\varepsilon'\in\Gamma(T^4M)$ with $\varepsilon'_x = \varepsilon$ which has the properties (a)--(c) of Lemma~\ref{lem:tensors}. Analogously to Lemma~\ref{lem:dim_cons_irrev}, we conclude that the rank of $B'$ is $\frac{1}{2}n(n-1)\left(\frac{1}{6}n^2-\frac{1}{2}n+\frac{4}{3}\right)$. With Lemma~\ref{lem:ein_Lemma}, we see that the rank of $B$ is $\frac{1}{2}n(n-1)\left(\frac{1}{6}n^2-\frac{1}{2}n+\frac{4}{3}\right)$. Since $B'\subseteq B$, we conclude that $B' = B$. This shows the proposition.
\end{proof}

We may use the basis used in the proof of Lemma~\ref{lem:dim_cons_irrev} to show that there are non-simple conservative-irreversible functions.

\begin{Proposition}\label{prop:heureka}
There exists $e\in S$ so that there is no $J\in\mathbb{R}^{n\times n}$ so that
\begin{align}\label{eq:repr}
\forall i,j,k,\ell\in\mathbb{N}^*_{\leq n}: e_{i,j,k,\ell} = J_{i,k}J_{j,\ell}+J_{i,\ell}J_{j,k}
\end{align}
if, and only if, $n\geq 3$.
\end{Proposition}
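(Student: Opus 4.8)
\emph{The plan} is to prove the two implications of the equivalence separately, linked by one elementary rank obstruction. Observe first that if $e\in\mathbb{R}^{n\times n\times n\times n}$ admits a representation $e_{ijk\ell}=J_{ik}J_{j\ell}+J_{i\ell}J_{jk}$, then for every $y\in\mathbb{R}^n$ the symmetric matrix
\[
A_e(y):=\left[\sum_{k,\ell=1}^n e_{ijk\ell}\,y_ky_\ell\right]_{i,j=1}^n
\]
equals $2\,(Jy)(Jy)^\top$, and hence $\rk A_e(y)\leq 1$. Therefore, to exhibit an $e\in S$ admitting no such $J$ it suffices to find some $e\in S$ for which $A_e(y)$ has rank at least $2$ for a suitable $y$; I claim this is possible exactly when $n\geq 3$.

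For the direction ``$n\geq 3$ implies existence'', I would take the two elements $a_{(1,2)}$ and $a_{(1,3)}$ of $S$ built in the proof of Lemma~\ref{lem:dim_cons_irrev}; both are available because $(1,2),(1,3)\in\mathfrak{T}_2(\mathbb{N}^*_{\leq n})$ when $n\geq 3$. Since $S$ is a convex cone, $e:=a_{(1,2)}+a_{(1,3)}\in S$. Using the evaluations computed there, $\sum_{i,j,k,\ell}e_{ijk\ell}x_ix_jy_ky_\ell=2(x_1y_2-x_2y_1)^2+2(x_1y_3-x_3y_1)^2$; specialising $y$ to the first coordinate vector $\mathbf{e}_1$ gives $A_e(\mathbf{e}_1)=\mathrm{diag}(0,2,2,0,\ldots,0)$, whose rank is $2$. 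By the observation above, this $e$ has no representation of the required form.

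For the converse it is enough to show that for $n\leq 2$ every $e\in S$ admits such a representation, which I would do via the dimension count of Lemma~\ref{lem:ein_Lemma}. For $n=1$ one has $\dim V_3=0$, so $S=\{0\}$ and $J=0$ works. For $n=2$ one has $\dim V_3=1$; the ``symplectic square'' $\widetilde{e}_{ijk\ell}:=J^{(1)}_{ik}J^{(1)}_{j\ell}+J^{(1)}_{i\ell}J^{(1)}_{jk}$ formed from the standard matrix $J^{(1)}$ (with $J^{(1)}_{12}=-J^{(1)}_{21}=1$ and all other entries zero) lies in $V_3$ (symmetry in the first and in the last pair is manifest, the cyclic identity follows from skew-symmetry of $J^{(1)}$) and is nonzero, so it spans $V_3$. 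As $\widetilde{e}$ realises the biquadratic form $2(x_1y_2-x_2y_1)^2$, which attains a strictly positive value, the positive semi-definiteness constraint in the definition of $S$ forces every $e\in S$ to be of the form $t\widetilde{e}$ with $t\geq 0$, and then $J:=\sqrt{t}\,J^{(1)}$ yields $e_{ijk\ell}=J_{ik}J_{j\ell}+J_{i\ell}J_{jk}$.

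The main obstacle I anticipate is the ``$n\geq 3$'' direction, and within it the conceptual step of pinning down the right invariant: contracting the four-tensor against a fixed covector $y$ produces a symmetric positive semi-definite matrix whose rank distinguishes the simple tensors (rank $\leq 1$) from the genuinely non-simple ones, and a sum of two symplectic squares supported on overlapping index pairs already breaks rank one. Once this invariant is in hand, everything reduces to the short computations above together with the bookkeeping that $S$ is stable under addition, which holds because its symmetry constraints are linear and sums of positive semi-definite matrices are positive semi-definite.
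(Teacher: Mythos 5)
Your proposal is correct and follows essentially the same route as the paper: the same witness $e = a_{(1,2)}+a_{(1,3)}$ for $n\geq 3$ (the paper just writes out its entries explicitly) and the same dimension count from Lemma~\ref{lem:ein_Lemma} for $n\leq 2$, where your extra detail that the scalar must be non-negative and that the symplectic square spans $V_3$ is sound. The only difference is the final obstruction: you use that a representable $e$ gives $A_e(y)=2(Jy)(Jy)^\top$ of rank at most one while $A_e(\mathbf{e}_1)=\mathrm{diag}(0,2,2,0,\dots,0)$ has rank two, whereas the paper reads the contradiction off the entries ($J_{1,2}^2=J_{1,3}^2=1$ but $e_{1,1,2,3}=0\neq 2J_{1,2}J_{1,3}$); both arguments are valid.
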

\begin{proof}
``$\implies$'' If $n = 1$, then $S = \set{0}$ and each $e$ has the representation~\eqref{eq:repr} with $J = 0$. If $n = 2$, then $V_3$ is one-dimensional and each $e\in S$ has the representation~\eqref{eq:repr} with
\begin{align*}
J = \alpha \begin{bmatrix}
0 & 1\\
-1 & 0
\end{bmatrix}
\end{align*}
for some $\alpha\in [0,\infty)$.

``$\impliedby$'' Let $n\geq 3$ and consider $e\in V_3$ defined by
\begin{align*}
\forall i,j,k,\ell\in\mathbb{N}^*_{\leq n}: e_{i,j,k,\ell} = \begin{cases}
2, & (i,j,k,\ell)\in\set{\begin{array}{c}
(1,1,2,2),(2,2,1,1),\\(1,1,3,3),(3,3,1,1)
\end{array}},\\
-1, & (i,j,k,\ell)\in\set{\begin{array}{c}
(1,2,2,1),(1,2,1,2),\\(2,1,1,2),(2,1,2,1),\\(1,3,3,1),(1,3,1,3),\\(3,1,3,1),(3,1,1,3)
\end{array}},\\
0, & \text{else}.
\end{cases}
\end{align*}
Then, we have, for all $x,y\in\mathbb{R}$,
\begin{align*}
\sum_{i,j,k,\ell = 1}^n e_{i,j,k,\ell} x_ix_jy_iy_k = 2(x_1y_2-x_2y_1)^2+2(x_1x_3-x_3x_1)^2\geq 0
\end{align*}
and hence $e\in S$. Seeking a contradiction, assume that $e$ has a representation~\eqref{eq:repr}. Then, we have
\begin{align*}
\frac{1}{2} e_{1,1,2,2} = \frac{1}{2} e_{1,1,3,3} = J_{1,2}^2 = J_{1,3}^2 = 1
\end{align*}
and thus
\begin{align*}
0 = e_{1,1,2,3} = 2J_{1,2}J_{1,3} \neq 0.
\end{align*}
This shows that $e$ does not have a representation~\eqref{eq:repr}.
\end{proof}

\begin{Corollary}
There exists a conservative-irreversible function on $\mathbb{R}^n$ which is not simple in standard coordinates if, and only if, $n\geq 2$.
\end{Corollary}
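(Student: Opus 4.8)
The plan starts by fixing what ``$E$ is simple in standard coordinates'' should mean so that the threshold comes out as $n\geq 2$: I read it as ``$E(f,g,h)=\{f,h\}_J\{g,h\}_J$ in standard coordinates for some \emph{smooth} matrix field $J\colon\mathbb{R}^n\to\mathbb{R}^{n\times n}$'', where $\{u,v\}_J:=\sum_{i,k}J_{i,k}\,\partial_{x_i}u\,\partial_{x_k}v$; equivalently, the coordinate tensor $\varepsilon$ of $E$ from~\eqref{eq:coordinates} has the pointwise form~\eqref{eq:repr} everywhere. This is the ``standard coordinates'' counterpart of Proposition~\ref{prop:heureka}: it is~\eqref{eq:simple_representation} with the scalar $\lambda$ absorbed into the bracket and the resulting bracket kept smooth. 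With this reading the corollary is an equivalence and I would prove the two directions in turn.

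For ``a non-simple example exists $\implies n\geq 2$'' it is enough to rule out $n=1$: by property~(i) of Proposition~\ref{prop:coordinates} any coordinate tensor on $\mathbb{R}$ satisfies $3\varepsilon_{1,1,1,1}=0$, hence $\varepsilon\equiv 0$, so the only conservative-irreversible function on $\mathbb{R}$ is $E\equiv 0=\{f,h\}_0\{g,h\}_0$, which is simple in standard coordinates.

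For the converse I would exhibit one family of examples covering all $n\geq 2$ at once. On $\mathbb{R}^n$ take the (smooth) almost Poisson bracket $\{u,v\}:=\partial_{x_1}u\,\partial_{x_2}v-\partial_{x_2}u\,\partial_{x_1}v$ and the nonnegative smooth function $\lambda(x):=x_1^2+x_2^2$, and put $E(f,g,h):=\lambda\{f,h\}\{g,h\}$. Arguing as in Example~\ref{ex:nontriv}, but carrying along the extra factor $\lambda$, one checks that $e(f,g,h,k):=\tfrac{\lambda}{2}\bigl(\{g,h\}\{f,k\}+\{f,h\}\{g,k\}\bigr)$ satisfies (i)--(iv) of Definition~\ref{def:conservative_irreversible}, so $E$ is conservative-irreversible. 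Suppose now $E(f,g,h)=\{f,h\}_J\{g,h\}_J$ for a smooth $J$. Evaluating this equality at $f=g=h=x_i$ gives $J_{i,i}\equiv 0$ for every $i$, and evaluating it at $f=g=x_1$, $h=x_2$ gives $J_{1,2}^2\equiv x_1^2+x_2^2$. But no $C^1$ function $c$ on $\mathbb{R}^n$ satisfies $c^2=x_1^2+x_2^2$: on the connected open set $\{x_1^2+x_2^2>0\}$ the continuous function $c/\sqrt{x_1^2+x_2^2}$ takes values in $\{\pm 1\}$, hence is constant, so $c=\pm\sqrt{x_1^2+x_2^2}$ there and, by continuity, on all of $\mathbb{R}^n$ — and $\sqrt{x_1^2+x_2^2}$ is not differentiable where $x_1=x_2=0$. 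This contradiction shows $E$ is not simple in standard coordinates.

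All of the above is routine once the example is chosen, and for $n\geq 3$ the statement is already subsumed by Proposition~\ref{prop:heureka}, since a smooth $J$ satisfying~\eqref{eq:repr} specialises at any fixed point to a constant matrix satisfying~\eqref{eq:repr}. The one genuinely new input — and the step I expect to do the real work — is the two-variable fact that the nonnegative smooth function $x_1^2+x_2^2$ has no smooth (indeed no $C^1$) square root; this is exactly what drops the threshold from $n\geq 3$ in Proposition~\ref{prop:heureka} to $n\geq 2$ here. The only ``obstacle'' is one of bookkeeping rather than depth: making the notion ``simple in standard coordinates'' precise enough that the stated threshold is the correct one.
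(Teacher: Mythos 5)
Your argument hinges on re-reading ``simple in standard coordinates'' as requiring a \emph{smooth} matrix field $J$ with $E(f,g,h)=\{f,h\}_J\{g,h\}_J$, i.e.\ on absorbing the scalar factor into the bracket and demanding smoothness of the result. That is not the notion in force in the paper: Definition~\ref{def:simple} only asks for an almost Poisson bracket together with a non-negative function $\lambda$ that is \emph{not} required to be smooth, and the example following that definition explicitly stresses this point; likewise the criterion from~\cite[Theorem 4.2]{MascKirc23b} invoked in the paper's proof only asks for a pointwise skew-symmetric $J:\mathbb{R}^n\to\mathbb{R}^{n\times n}$, with no regularity. Under the paper's definition your $n=2$ candidate $E(f,g,h)=(x_1^2+x_2^2)\{f,h\}\{g,h\}$ is simple by construction --- it is literally of the form~\eqref{eq:simple_representation} with smooth non-negative $\lambda$ --- so the clever observation that $x_1^2+x_2^2$ admits no $C^1$ square root does not produce a non-simple function in the paper's sense, and your ``only if'' direction for $n=2$ collapses. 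The paper's own proof is much shorter: it quotes the pointwise characterisation of simplicity, applies Proposition~\ref{prop:heureka} to get a constant tensor $e_0\in S$ without a representation~\eqref{eq:repr}, and takes the constant tensor field $x\mapsto e_0$.

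Where you are right is in sensing a tension in the threshold: Proposition~\ref{prop:heureka} places the pointwise obstruction at $n\geq 3$, and the paper's own proof of this corollary in fact concludes that for $n\leq 2$ \emph{every} conservative-irreversible function is simple and exhibits non-simple ones only for $n\geq 3$ --- which contradicts the ``$n\geq 2$'' in the statement. So the stated bound appears to be a slip in the paper (it should read $n\geq 3$ under Definition~\ref{def:simple}), rather than something to be rescued by silently strengthening the definition of simplicity. Your $n=1$ argument ($3\varepsilon_{1,1,1,1}=0$ forces $E\equiv 0$) and your remark that a smooth $J$ specialises pointwise to the setting of Proposition~\ref{prop:heureka} are both fine, and your smooth-square-root example would be a legitimate (and interesting) observation for a \emph{smooth} variant of simplicity, but as a proof of the corollary as the paper defines its terms it has a genuine gap.
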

\begin{proof}
By~\cite[Definition 3.2]{MascKirc23b}, there exists for each conservative-irreversible function $E:\mathcal{C}^\infty(M)\times\mathcal{C}^\infty(M)\times\mathcal{C}^\infty(M)\to\mathcal{C}^\infty(M)$ a unique smooth $e:\mathbb{R}^n\to S$ so that
\begin{align}\label{eq:tensor_rep}
\forall f,g,h\in \mathcal{C}^\infty(M): E(f,g,h) = \sum_{i,j,k,\ell = 1}^n e_{i,j,k,\ell}\partial_i f\partial_j g\partial_k h \partial_\ell h.
\end{align}
In~\cite[Theorem 4.2]{MascKirc23b}, it was shown that $E$ is simple if, and only if, there is a pointwise skew-symmetric function $J:\mathbb{R}^n\to\mathbb{R}^{n\times n}$ so that
\begin{align*}
\forall i,j,k,\ell\in\mathbb{N}^*_{\leq n}~\forall x\in\mathbb{R}^n: e_{i,j,k,\ell}(x) = J_{i,k}(x)J_{j,\ell}(x) + J_{i,\ell}(x)J_{j,k}(x).
\end{align*}
By Proposition~\ref{prop:heureka}, we conclude that for $n\leq 2$ each conservative-irreversible function is simple. If $n \geq 3$, then there exists $e_0\in S$ that has no representation~\eqref{eq:repr}. Then, the constant function $x\mapsto e_0$ is smooth and generates by~\eqref{eq:tensor_rep} a conservative-irreversible function which is not simple.
\end{proof}

In the proof of the following Lemma, we show that each element of the basis constructed in Lemma~\ref{lem:dim_cons_irrev} is the tensor representation of a simple conservative-irreversible function. In particular, this implies the following result.

\begin{Lemma}\label{lem:local_rep}
Each conservative-irreversible function may be locally represented by the weighted sum of simple conservative-irreversible functions. 
\end{Lemma}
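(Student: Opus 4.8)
The plan is to read the statement off directly from Proposition~\ref{prop:coordinates} together with the explicit basis of the fibre space $V_3$ that was produced inside the proof of Lemma~\ref{lem:dim_cons_irrev}. First I would fix a conservative-irreversible function $E$ and a patch of local coordinates $x\in\mathbb{R}^n$ on a nonempty open set $U\subseteq M$; by Proposition~\ref{prop:coordinates} there is a smooth tensor field $\varepsilon\colon U\to S\subseteq V_3$ for which~\eqref{eq:coordinates} holds.

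The crucial step is to observe that every vector of the basis $\set{a_A,e^B,e^C,\widetilde{e}^D}$ of $V_3$ exhibited at the end of the proof of Lemma~\ref{lem:dim_cons_irrev} is, via~\eqref{eq:coordinates}, the coordinate tensor of a \emph{simple} conservative-irreversible function on $U$ with constant coefficient. Concretely, for each such basis vector $\beta$ the computation carried out there shows
\[
\sum_{i,j,k,\ell=1}^{n}\beta_{i,j,k,\ell}\,x_{i}x_{j}y_{k}y_{\ell}\;=\;2\Big(\sum_{i,k=1}^{n}(J_{\beta})_{i,k}\,x_{i}y_{k}\Big)^{\!2}\qquad\text{for all }x,y\in\mathbb{R}^{n},
\]
where $J_{\beta}\in\mathbb{R}^{n\times n}$ is an explicit \emph{constant} skew-symmetric matrix that can be read off from that proof (the elementary skew block with entries $\pm1$ in positions $(\iota,\kappa),(\kappa,\iota)$ when $\beta=a_{(\iota,\kappa)}$, and a sum of two such blocks, with suitable signs, in the cases $\beta=e^{(\iota,\kappa,\lambda)},\,e^{(\iota,\kappa,\lambda,\mu)},\,\widetilde{e}^{(\iota,\kappa,\lambda,\mu)}$). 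Since $\beta\in V_3$ it is symmetric in its first and in its last pair of indices, and so is the tensor $(i,j,k,\ell)\mapsto(J_{\beta})_{i,k}(J_{\beta})_{j,\ell}+(J_{\beta})_{i,\ell}(J_{\beta})_{j,k}$; as the displayed identity shows these two tensors induce the same quartic form, they coincide. Hence $\beta$ is exactly the coordinate tensor of the simple conservative-irreversible function $E_{\beta}$ on $U$ given by $E_{\beta}(f,g,h)=2\{f,h\}_{J_{\beta}}\{g,h\}_{J_{\beta}}$, where $\{\cdot,\cdot\}_{J_{\beta}}$ is the constant-coefficient almost Poisson bracket of $J_{\beta}$.

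It then remains to expand $\varepsilon$ in the basis $\set{a_A,e^B,e^C,\widetilde{e}^D}$. For every $x\in U$ we have $\varepsilon(x)\in V_3$, so there are unique scalars $c_{\beta}(x)$ with $\varepsilon(x)=\sum_{\beta}c_{\beta}(x)\,\beta$; and since passing to coordinates with respect to a fixed basis of a finite-dimensional vector space is a linear, hence smooth, map, each function $c_{\beta}\colon U\to\mathbb{R}$ is smooth. Substituting this expansion into~\eqref{eq:coordinates} and invoking the previous paragraph gives
\[
E(f,g,h)\big|_{U}\;=\;\sum_{\beta}c_{\beta}\cdot E_{\beta}(f,g,h),
\]
a finite weighted sum of simple conservative-irreversible functions on $U$, which is the assertion.

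I do not expect a serious obstacle: the analytic content is already present in Lemma~\ref{lem:dim_cons_irrev}, and the only genuinely new ingredients are the bookkeeping identification of the basis tensors with coordinate tensors of simple functions and the (routine) observation that the expansion coefficients $c_{\beta}$, being linear images of the smooth field $\varepsilon$, are smooth. The single point worth flagging is that the weights $c_{\beta}$ need not be non-negative---a strictly negative multiple of a simple conservative-irreversible tensor violates the semidefiniteness in Proposition~\ref{prop:coordinates}\,(iii)---so ``weighted sum'' in the statement must be understood as a linear combination with smooth real coefficients and not as a conical combination of simple functions.
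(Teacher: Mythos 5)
Your proposal is correct and follows essentially the same route as the paper: identify each basis element $a_A,e^B,e^C,\widetilde{e}^D$ of $V_3$ from Lemma~\ref{lem:dim_cons_irrev} as the coordinate tensor of a constant-coefficient simple conservative-irreversible function (the paper exhibits the same elementary skew-symmetric matrices $J^A,K^B,M^C,N^D$ and checks the identity entry-by-entry, where you instead argue via the biquadratic form and polarization, which is fine), then expand the smooth tensor field of $E$ in this basis with smooth coefficients. Your closing caveat that the weights need not be non-negative is also consistent with the paper, which makes exactly this point in the example following the lemma.
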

\begin{proof}
In the following, we use the notation introduced in the proof of Lemma~\ref{lem:dim_cons_irrev}. Let $(\iota,\kappa)\in\mathfrak{T}_{2}(\mathbb{N}^*_{\leq n})$ and define $J^{(\iota,\kappa)}\in\mathbb{R}^{n\times n}$ as
\begin{align*}
\forall i,j\in\mathbb{N}^*_{\leq n}: J_{i,j}^{(\iota,\kappa)} : \begin{cases}
1, & (i,j) = (\iota,\kappa),\\
-1, & (j,i) = (\iota,\kappa),\\
0, & \text{else}.
\end{cases}
\end{align*}
Then, we have, for all $i,j,k,\ell\in\mathbb{N}^*_{\leq n}$,
\begin{align*}
J_{i,k}^{(\iota,\kappa)}J_{j,\ell}^{(\iota,\kappa)} + J_{i,\ell}^{(\iota,\kappa)}J_{j,k}^{(\iota,\kappa)} = \left.\begin{cases}
2, & i = j = \iota~\wedge~k = \ell = \kappa,\\
-1, & (i,j,k,\ell)\in \set{\begin{array}{c}
(\iota,\kappa,\iota,\kappa),(\iota,\kappa,\kappa,\iota),\\
(\kappa,\iota,\iota,\kappa)
\end{array}},\\
0, & \text{else},
\end{cases}\right\rbrace = (\alpha_{\set{(\iota,\kappa)}})_{i,j,k,\ell}.
\end{align*}
Let $(\iota,\kappa,\lambda)\in\mathfrak{T}_{3}(\mathbb{N}^*_{\leq n})$ and define $K^{(\iota,\kappa,\lambda)}\in\mathbb{R}^{n\times n}$ by
\begin{align*}
\forall i,j\in\mathbb{N}^*_{\leq n}: K_{i,j}^{(\iota,\kappa,\lambda)} := \begin{cases}
1, & (i,j)\in\set{(\iota,\lambda),(\kappa,\lambda)},\\
-1, & (i,j)\in\set{(\lambda,\iota),(\lambda,\kappa)},\\
0, & \text{else}.
\end{cases}
\end{align*}
Then, we have, for all $i,j,k,\ell\in\mathbb{N}^*_{\leq n}$,
\begin{align*}
K_{i,k}^{(\iota,\kappa,\lambda)}K_{j,\ell}^{(\iota,\kappa,\lambda)} + K_{i,\ell}^{(\iota,\kappa,\lambda)}K_{j,k}^{(\iota,\kappa,\lambda)} = \left.\begin{cases}
2, & (i,j,k,\ell)\in\set{\begin{array}{c}
(\iota,\iota,\lambda,\lambda),(\lambda,\lambda,\iota,\iota),\\
(\kappa,\kappa,\lambda,\lambda),(\lambda,\lambda,\kappa,\kappa),\\
(\iota,\kappa,\lambda,\lambda),(\kappa,\iota,\lambda,\lambda),\\
(\lambda,\lambda,\iota,\kappa),(\lambda,\lambda,\kappa,\iota)
\end{array}},\\
-1, & (i,j,k,\ell)\in\set{\begin{array}{c}
(\iota,\lambda,\lambda,\iota),(\iota,\lambda,\iota,\lambda),\\
(\lambda,\iota,\lambda,\iota),(\lambda,\iota,\iota,\lambda),\\
(\lambda,\iota,\lambda,\kappa),(\lambda,\iota,\kappa,\lambda),\\
(\lambda,\kappa,\lambda,\iota),(\lambda,\kappa,\iota,\lambda),\\
(\kappa,\lambda,\lambda,\iota),(\kappa,\lambda,\iota,\lambda),\\
(\iota,\lambda,\lambda,\kappa),(\iota,\lambda,\kappa,\lambda)
\end{array}},\\
0, & \text{else},
\end{cases}\right\rbrace = e^{(\iota,\kappa,\lambda)}_{i,j,k,\ell}.
\end{align*}
Let lastly $(\iota,\kappa,\lambda,\mu)\in\mathfrak{T}_4(\mathbb{N}^*_{\leq n})$ and define the matrices $M^{(\iota,\kappa,\lambda,\mu)},N^{(\iota,\kappa,\lambda,\mu)}\in\mathbb{R}^{n\times n}$ with
\begin{align*}
\forall i,j\in\mathbb{N}^*_{\leq n}: M_{i,j}^{(\iota,\kappa,\lambda,\mu)} := \begin{cases}
1, & (i,j)\in\set{(\iota,\mu),(\kappa,\lambda)},\\
-1, & (i,j)\in\set{(\mu,\iota),(\lambda,\kappa)},\\
0, & \text{else},
\end{cases}
\end{align*}
and
\begin{align*}
\forall i,j\in\mathbb{N}^*_{\leq n}: N_{i,j}^{(\iota,\kappa,\lambda,\mu)} := \begin{cases}
1, & (i,j)\in\set{(\kappa,\iota),(\lambda,\mu)},\\
-1, & (i,j)\in\set{(\iota,\kappa),(\mu,\lambda)},\\
0, & \text{else}.
\end{cases}
\end{align*}
Then, we have, for all $i,j,k,\ell\in\mathbb{N}^*_{\leq n}$,
\begin{align*}
M_{i,k}^{(\iota,\kappa,\lambda,\mu)}M_{j,\ell}^{(\iota,\kappa,\lambda,\mu)} + M_{i,\ell}^{(\iota,\kappa,\lambda,\mu)}M_{j,k}^{(\iota,\kappa,\lambda,\mu)} = \left.\begin{cases}
2, & (i,j,k,\ell)\in\set{\begin{array}{c}
(\iota,\iota,\mu,\mu),(\mu,\mu,\iota,\iota),\\
(\kappa,\kappa,\lambda,\lambda),(\lambda,\lambda,\kappa,\kappa)
\end{array}},\\
1, & (i,j,k,\ell)\in\set{\begin{array}{c}
(\iota,\kappa,\lambda,\mu),(\iota,\kappa,\mu,\lambda),\\
(\kappa,\iota,\lambda,\mu),(\kappa,\iota,\mu,\lambda),\\
(\mu,\lambda,\kappa,\iota),(\mu,\lambda,\iota,\kappa),\\
(\lambda,\mu,\kappa,\iota),(\lambda,\mu,\iota,\kappa)
\end{array}},\\
-1, & (i,j,k,\ell)\in\set{\begin{array}{c}
(\mu,\kappa,\iota,\lambda),(\mu,\kappa,\lambda,\iota),\\
(\kappa,\mu,\iota,\lambda),(\kappa,\mu,\lambda,\iota),\\
(\iota,\lambda,\mu,\kappa),(\iota,\lambda,\kappa,\mu),\\
(\lambda,\iota,\mu,\kappa),(\lambda,\iota,\kappa,\mu)
\end{array}},\\
0, & \text{else},
\end{cases}\right\rbrace = e^{(\iota,\kappa,\lambda,\mu)}_{i,j,k,\ell}
\end{align*}
and
\begin{align*}
N_{i,k}^{(\iota,\kappa,\lambda,\mu)}N_{j,\ell}^{(\iota,\kappa,\lambda,\mu)} + N_{i,\ell}^{(\iota,\kappa,\lambda,\mu)}N_{j,k}^{(\iota,\kappa,\lambda,\mu)} = \left.\begin{cases}
2, & (i,j,k,\ell)\in\set{\begin{array}{c}
(\iota,\iota,\kappa,\kappa),(\kappa,\kappa,\iota,\iota),\\
(\mu,\mu,\lambda,\lambda),(\lambda,\lambda,\mu,\mu)
\end{array}},\\
1, & (i,j,k,\ell)\in\set{\begin{array}{c}
(\lambda,\kappa,\mu,\iota),(\lambda,\kappa,\iota,mu),\\
(\kappa,\lambda,\mu,\iota),(\kappa,\lambda,\iota,\mu),\\
(\mu,\iota,\lambda,\kappa),(\mu,\iota,\kappa,\lambda),\\
(\iota,\mu,\lambda,\kappa),(\iota,\mu,\kappa,\lambda)
\end{array}},\\
-1, & (i,j,k,\ell)\in\set{\begin{array}{c}
(\mu,\kappa,\iota,\lambda),(\mu,\kappa,\lambda,\iota),\\
(\kappa,\mu,\iota,\lambda),(\kappa,\mu,\lambda,\iota),\\
(\lambda,\iota,\kappa,\mu),(\lambda,\iota,\mu,\kappa),\\
(\iota,\lambda,\kappa,\mu),(\iota,\lambda,\mu,\kappa)
\end{array}},\\
0, & \text{else}.
\end{cases}\right\rbrace = \widetilde{e}^{(\iota,\kappa,\lambda,\mu)}_{i,j,k,\ell}.
\end{align*}
In the proof of Lemma~\ref{lem:dim_cons_irrev}, we have seen that 
\begin{align*}
B = \set{a_A,e^B,e^C,\widetilde{e}^D\,\big\vert\, A\in\mathfrak{T}_2(\mathbb{N}^*_{\leq n}),B\in\mathfrak{T}_3(\mathbb{N}^*_{\leq n}),C,D\in\mathfrak{T}_4(\mathbb{N}^*_{\leq n})}
\end{align*}
is a basis of $V_3$. Let $E$ be a conservative-irreversible function with associated representation~\eqref{eq:tensor_rep} for some $e\in \mathcal{C}^\infty(\mathbb{R}^n,S)$. Since $B$ is a basis of $V_3$, there exist $\alpha_A,\beta_B,\gamma_C,\delta_D\in\mathcal{C}^\infty(\mathbb{R}^n,\mathbb{R})$, $A\in\mathfrak{T}_2(\mathbb{N}^*_{\leq n})$, $B\in\mathfrak{T}_3(\mathbb{N}^*_{\leq n})$, $C,D\in\mathfrak{T}_4(\mathbb{N}^*_{\leq n})$, so that
\begin{align}\label{eq:1}
e = \sum_{A\in\mathfrak{T}_2(\mathbb{N}^*_{\leq n})} \alpha_A a_A + \sum_{B\in\mathfrak{T}_3(\mathbb{N}^*_{\leq n})}\beta_B e^B + \sum_{C\in\mathfrak{T}_4(\mathbb{N}^*_{\leq n})} \gamma_C e^C + \sum_{D\in\mathfrak{T}_4(\mathbb{N}^*_{\leq n})}\delta_D \widetilde{e}^D
\end{align}
For each skew-symmetric matrix $J\in\mathbb{R}^{n\times n}$, we denote with
\begin{align*}
\{\cdot,\cdot\}_J := (f,g)\mapsto (\partial f)^\top J (\partial g)
\end{align*}
the Poisson bracket induced by $J$. Then, the representation~\eqref{eq:1} and our calculations above yield
\begin{align*}
\forall f,g,h\in\mathbb{R}^n: E(f,g,h) & = \sum_{A\in\mathfrak{T}_2(\mathbb{N}^*_{\leq n})} 2\alpha_A \{f,h\}_{J^A}\{g,h\}_{J^A} + \sum_{B\in\mathfrak{T}_3(\mathbb{N}^*_{\leq n})}2\beta_B \{f,h\}_{K^B}\{g,h\}_{K^B}\\
& \quad + \sum_{C\in\mathfrak{T}_4(\mathbb{N}^*_{\leq n})} 2\gamma_C \{f,h\}_{M^C}\{g,h\}_{M^C} + \sum_{D\in\mathfrak{T}_4(\mathbb{N}^*_{\leq n})}2\delta_D \{f,h\}_{N^D}\{g,h\}_{N^D}.
\end{align*}
This shows that $E$ is indeed the weighted sum of simple conservative-irreversible functions.
\end{proof}

We have shown that any conservative-irreversible function is the weighted sum of simple conservative-irreversible functions. One might expect that all scalar factors are non-negative. This is, however, not necessarily the case, as the following example shows.

\begin{Example}
 We show that the weights constructed in the proof do not necessarily need to be positive. Let $n\geq 3$ and $(i,j,k)\in\mathfrak{T}_{3}(\mathbb{N}^*_{\leq n})$. Consider $e := a_{(i,k)} + a_{(j,k)} - \frac{1}{2} e^{(i,j,k)}$. Then, we have, for all $x,y\in\mathbb{R}^n$,
\begin{align*}
f(x,y) := \frac{1}{2}\sum_{i,j,k,\ell = 1}^n e_{i,j,k,\ell}x_ix_jy_iy_j & = (x_iy_k-x_ky_i)^2 + (x_jy_k-x_ky_j)^2\\
& \quad -\frac{1}{2}((x_iy_k-x_ky_i)+(x_jy_k-x_ky_j))^2\\
& = \frac{1}{2}(x_iy_k-x_ky_i)^2 + \frac{1}{2}(x_jy_k-x_ky_j)^2-(x_iy_k-x_ky_i)(x_jy_k-x_ky_j)\\
& = \frac{1}{2}((x_iy_k-x_ky_i)-(x_jy_k-x_ky_j))^2\geq 0.
\end{align*}
Therefore, $e\in S$ and the associated function
\begin{align*}
E(f,g,h) := \sum_{i,j,k,\ell} e_{i,j,k,\ell}\partial_i f\partial_j g\partial_k h\partial_\ell h
\end{align*}
is conservative-irreversible.
\end{Example}

As a consequence of Lemma~\ref{lem:local_rep}, we shall now conclude that each conservative-irreversible function is generated by a metriplectic four-bracket. In particular, the tensor fields which are uniquely associated to conservative-irreversible functions are precisely the symmetrisations of metriplectic four-brackets in the second and third argument.

\begin{Proposition}
A function $E:\mathcal{C}^\infty(M)\times \mathcal{C}^\infty(M)\times\mathcal{C}^\infty(M)\to\mathcal{C}^\infty(M)$ is a conservative-irreversible function if, and only if, there exists a metriplectic four-bracket $(\!(\cdot,\cdot,\cdot,\cdot)\!)$ on $M$ with
\begin{align}\label{eq:mp4_2}
\forall f,g,h\in\mathcal{C}^\infty(M): E(f,g,h) = (\!(f,h,g,h)\!)
\end{align}
\end{Proposition}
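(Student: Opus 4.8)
The plan is to establish the two implications separately; the implication ``$\Longleftarrow$'' is a direct consequence of Lemma~\ref{lem:sufficiency}, and all the work is in the converse. For ``$\Longleftarrow$'', assume $(\!(\cdot,\cdot,\cdot,\cdot)\!)$ is a metriplectic four-bracket with $E(f,g,h)=(\!(f,h,g,h)\!)$. By Lemma~\ref{lem:sufficiency} the function $e(f,g,h,k):=\tfrac12\big((\!(f,h,g,k)\!)+(\!(f,k,g,h)\!)\big)$ has the properties (i)--(iv) of Definition~\ref{def:conservative_irreversible}, and since $e(f,g,h,h)=(\!(f,h,g,h)\!)=E(f,g,h)$, the function $E$ is conservative-irreversible.

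For ``$\Longrightarrow$'', let $E$ be conservative-irreversible with associated function $e$ of Lemma~\ref{lem:Unicity_4function}, and write $\Phi$ for the symmetrisation $P\mapsto\big((f,g,h,k)\mapsto\tfrac12(P(f,h,g,k)+P(f,k,g,h))\big)$ acting on functions that are a derivation in each argument. A short computation using only the skew-symmetries (ii) and the pair-exchange symmetry (iii) of the metriplectic axioms shows that, for any $P$ obeying them, $\Phi(P)$ is again such a derivation, is symmetric in its first two and in its last two arguments, satisfies the cyclic identity of Lemma~\ref{lem:tensors}(b), and fulfils $\Phi(P)(f,g,h,h)=P(f,h,g,h)$. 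Hence it is enough to build a metriplectic four-bracket $(\!(\cdot,\cdot,\cdot,\cdot)\!)$ with $\Phi\big((\!(\cdot,\cdot,\cdot,\cdot)\!)\big)=e$, for then $E(f,g,h)=e(f,g,h,h)=(\!(f,h,g,h)\!)$.

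Fix a locally finite atlas $\{(U_\alpha,x^\alpha)\}_\alpha$ of $M$ together with a subordinate smooth partition of unity $\{\rho_\alpha\}_\alpha$. By the proof of Lemma~\ref{lem:local_rep}, on each $U_\alpha$ the function $E|_{U_\alpha}$ is a \emph{finite} sum $\sum_j c^\alpha_j B_j$ with $c^\alpha_j\in\mathcal{C}^\infty(U_\alpha)$ and each $B_j$ a simple conservative-irreversible function built from a constant bivector in the coordinates $x^\alpha$; and by the construction in Lemma~\ref{cor:loc_fin_sum} each such $B_j$ equals $(\!(f,h,g,h)\!)_j$ for a metriplectic four-bracket $(\!(\cdot,\cdot,\cdot,\cdot)\!)_j$ on $U_\alpha$. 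Put $P^\alpha:=\sum_j c^\alpha_j(\!(\cdot,\cdot,\cdot,\cdot)\!)_j$ on $U_\alpha$. Since the axioms (i)--(iv) of a metriplectic four-bracket are preserved under taking linear combinations with smooth coefficients, $P^\alpha$ satisfies (i)--(iv); moreover $P^\alpha(f,h,g,h)=\sum_j c^\alpha_j B_j(f,g,h)=E|_{U_\alpha}(f,g,h)=e(f,g,h,h)$, so by the polarisation argument in the proof of Lemma~\ref{lem:Unicity_4function} we get $\Phi(P^\alpha)=e|_{U_\alpha}$. Although the $c^\alpha_j$ need not be sign-definite, the positivity axiom (v) also holds for $P^\alpha$, since $P^\alpha(f,g,f,g)=\Phi(P^\alpha)(f,f,g,g)=e(f,f,g,g)=E(f,f,g)\ge 0$ by property (iii) of Definition~\ref{def:conservative_irreversible}. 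Thus each $P^\alpha$ is a metriplectic four-bracket on $U_\alpha$.

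Finally, set $(\!(\cdot,\cdot,\cdot,\cdot)\!):=\sum_\alpha\rho_\alpha P^\alpha$, each $\rho_\alpha P^\alpha$ being extended by zero off $\mathrm{supp}\,\rho_\alpha\subseteq U_\alpha$ (which is smooth since $\rho_\alpha$ vanishes on a neighbourhood of $M\setminus U_\alpha$); the sum is locally finite. The metriplectic four-brackets form a convex cone stable under locally finite sums -- (i)--(iv) are linear, and (v) is preserved by sums and by multiplication with the non-negative $\rho_\alpha$ -- so $(\!(\cdot,\cdot,\cdot,\cdot)\!)$ is a metriplectic four-bracket on $M$, and by $\mathcal{C}^\infty(M)$-linearity of $\Phi$ and $\sum_\alpha\rho_\alpha\equiv 1$ we obtain $\Phi\big((\!(\cdot,\cdot,\cdot,\cdot)\!)\big)=\sum_\alpha\rho_\alpha\,\Phi(P^\alpha)=\sum_\alpha\rho_\alpha\,e|_{U_\alpha}=e$, whence $E(f,g,h)=(\!(f,h,g,h)\!)$. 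The step that really needs care is the preservation of the non-negativity axiom (v) under the chart-wise summation $P^\alpha=\sum_j c^\alpha_j(\!(\cdot,\cdot,\cdot,\cdot)\!)_j$: the individual summands need not be metriplectic four-brackets because the coefficients $c^\alpha_j$ may change sign, but their sum $P^\alpha$ is one precisely because its diagonal values $P^\alpha(f,g,f,g)$ equal $E(f,f,g)\ge 0$, which is where conservative-irreversibility of $E$ enters; after this, the partition-of-unity patching inside the convex cone of metriplectic four-brackets is routine.
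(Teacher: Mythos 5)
Your proof is correct and follows the same architecture as the paper's: sufficiency from Lemma~\ref{lem:sufficiency}, and necessity by decomposing $E$ chart-wise via Lemma~\ref{lem:local_rep}, converting the pieces into four-brackets as in Lemma~\ref{cor:loc_fin_sum}, and patching with a partition of unity. The one genuine difference is your treatment of the signs of the local weights. The paper's proof simply declares the chart-wise summands to be simple conservative-irreversible functions and invokes Lemma~\ref{cor:loc_fin_sum}; but, as the example following Lemma~\ref{lem:local_rep} shows, the coefficients in the local representation may change sign, and a negatively weighted simple function is neither simple nor conservative-irreversible, so Lemma~\ref{cor:loc_fin_sum} does not literally apply. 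You instead form the signed combination $P^\alpha=\sum_j c^\alpha_j(\!(\cdot,\cdot,\cdot,\cdot)\!)_j$, note that all axioms except non-negativity are pointwise linear and hence survive signed smooth combinations, and recover non-negativity from $P^\alpha(f,g,f,g)=\Phi(P^\alpha)(f,f,g,g)=e(f,f,g,g)=E(f,f,g)\geq 0$, using the polarisation of Lemma~\ref{lem:Unicity_4function} to identify $\Phi(P^\alpha)$ with $e\vert_{U_\alpha}$; this closes exactly the step the paper glosses over, and the subsequent partition-of-unity patching is the same in both arguments. Two minor remarks: your parenthetical claim that only the skew-symmetries and the pair-exchange symmetry are needed in the symmetrisation step is inessential, since this is just Lemma~\ref{lem:sufficiency}; and, like the paper, you rely on the assertion used in Lemma~\ref{cor:loc_fin_sum} (credited to \cite{MorrUpdi23}) that $\lambda\{f,g\}\{h,k\}$ is a metriplectic four-bracket for an almost Poisson bracket --- its cyclic axiom (iv) is immediate only when the underlying bivector is decomposable (true for the generators $J^A$, $K^B$ of Lemma~\ref{lem:local_rep}, but not obvious for the rank-four generators $M^C$, $N^D$), so that point merits a check; this caveat, however, is inherited from the paper's own proof rather than introduced by you.
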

\begin{proof}
Sufficiency is a direct consequence of Lemma~\ref{lem:sufficiency}; we show the necessity. By Lemma~\ref{lem:local_rep}, there exists an open covering $(U_i)_{i\in I}$ of $M$ and a family $((E_{i,1},\ldots,E_{i,k_i}))_{i\in I}$ of simple conservative-irreversible functions on $U_i$ with
\begin{align*}
\forall f,g,h\in\mathcal{C}^\infty(M)~\forall i\in I: E(f,g,h)\vert_{U_i} = E_{i,1}(f,g,h)\vert_{U_i} +\cdots +E_{i,k_i}(f,g,h)\vert_{U_i}.
\end{align*}
Consider a (locally finite) partition of unity $(\varphi_j)_{j\in J}$ subordinate to the covering $(U_i)_{i\in I}$ and define $\overline{E}_{i,\ell}$ as the continuation of $E_{i,j}$ on $M$ by zero, $i\in I$, $\ell\in\set{1,\ldots,k_i}$. Choose, for each $j\in J$, $\iota(j)\in I$ with $\mathrm{supp}\,\varphi_j\subseteq U_{\iota(j)}$. Then, we have $E = \sum_{j\in J}\varphi_j \sum_{\ell = 1}^{k_{\iota(j)}}\overline{E}_{\iota(j),\ell}$ and the sum is a locally finite sum of simple conservative-irreversible functions. In view of Lemma~\ref{cor:loc_fin_sum}, $E$ has the representation~\eqref{eq:mp4_2} for a metriplectic four-bracket.
\end{proof}

\section{Biquadratic functions}\label{sec:4}

Let $E:\big(\mathcal{C}^\infty(M)\big)^3\to\mathcal{C}^\infty(M)$ be a conservative-irreversible function. Due to Lemma~\ref{lem:first_symmetry}, $E$ is symmetric in the first two entries and induces therefore the function 
\begin{align*}
\tes{\cdot,\cdot}_E:\mathcal{C}^\infty(M)\times\mathcal{C}^\infty(M)\to\mathcal{C}^\infty(M),\qquad (f,g)\mapsto E(f,f,g),
\end{align*}
which is \textit{biquadratic}, i.e.~a quadratic function (see Definition~\ref{def:quadr_fun}) in each entry. Conversely, to each biquadratic function $\tes{\cdot,\cdot}$, there is a four-linear function (unique, if symmetry in the first two and second two entries is imposed) $\varepsilon$ with $\tes{f,g} = \varepsilon(f,f,g,g)$. In this section, we characterise the quadratic functions whose uniquely associated symmetric bilinear functions are derivations in each argument. The proofs use general properties of algebras so that this characterisation is not only valid for the algebra of smooth real-valued functions, but for any unital commutative algebra over a field $k$ with $1_k+1_k\neq 0_k$, i.e. the characteristic is not two. As a corollary, we obtain a characterisation of the biquadratic functions associated to conservative-irreversible functions. Recall the elementary definition of quadratic functions.

\begin{Definition}[{see~\cite{HelmMicaRevo14}}]\label{def:quadr_fun}
Let $V,W$ be vector spaces over the same field $k$. A function $q:V\to V$ is \textit{quadratic} if, and only if, $q$ is $k$-homogeneous of degree two, i.e.
\begin{align*}
\forall \lambda\in k~\forall x\in V: q(\lambda x) = \lambda^2 q(x),
\end{align*}
and the function
\begin{align}\label{eq:associated_bilinear}
\beta: V\times V\to W,\qquad (x,y)\mapsto q(x+y)-q(x)-q(y)
\end{align}
is bilinear.
\end{Definition}

When the field $k$ is fixed and there is no danger of confusion, we will refer to homogeneity as degree-two-homogeneity.

\begin{Remark}
Unless $k$ is a field of characteristic two, it is easy to see that a function $q:V\to W$ is quadratic if, and only if, there exists a bilinear function $b:V\times V\to W$ so that $q(x) = b(x,x)$ for all $x\in V$: Indeed, if $q$ is quadratic, then $\beta$ given in~\eqref{eq:associated_bilinear} fulfils
\begin{align*}
\beta(x,x) = q(x+x)-q(x)-q(x) = ((1_k+1_k)^2-(1_k+1_k))q(x) = (1_k+1_k)q(x);
\end{align*}
and the converse implication is straightforward. In characteristic two, however, $1_k+1_k = 0_k$ and hence $q$ cannot be reconstructed as the concatenation of the associated bilinear function $\beta$ and the diagonal embedding of $V$ into $V\times V$. For details about the theory of quadratic function in characteristic two, see e.g.~\cite{ElmaKarpMerk08,Kneb10}.
\end{Remark}

Evidently, not each degree-two-homogeneous function is quadratic: all squares of norms are homogeneous of degree two, but not all norms are induced by inner products. A well-known characterisation of quadratic forms is the following.

\begin{Lemma}[{see~\cite[Theorem 1.1]{HelmMicaRevo14}}]
Let $V$, $W$ be vector spaces over the same field $k$. A function $q:V\to W$ is quadratic if, and only if, $q$ fulfils the inclusion-exclusion principle
\begin{align}\label{eq:inclusion_exclusion}
\delta(x+y+z)-\delta(x+y)-\delta(x+z)-\delta(y+z)+\delta(x)+\delta(x)+\delta(y)=0,
\end{align}
and the generalized binomial equation
\begin{align}\label{eq:gen_binom}
\forall x,y\in V~\forall\lambda\in k: q(\lambda x+y) = \lambda^2 q(x)+\lambda(q(x+y)-q(x)-q(y)) + q(y).
\end{align}
\end{Lemma}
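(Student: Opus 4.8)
The statement to prove is the well-known characterisation of quadratic functions via the inclusion–exclusion principle \eqref{eq:inclusion_exclusion} together with the generalised binomial equation \eqref{eq:gen_binom}. Since the excerpt cites \cite[Theorem 1.1]{HelmMicaRevo14} for this, the natural plan is to reproduce that argument directly rather than invoke anything deeper. The plan is to prove both implications separately; necessity is the routine direction, and the main work lies in sufficiency.

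For the necessity direction, I would assume $q$ is quadratic, so that $\beta(x,y) := q(x+y)-q(x)-q(y)$ is bilinear and $q(\lambda x) = \lambda^2 q(x)$. The inclusion–exclusion identity \eqref{eq:inclusion_exclusion} then follows by expanding $q(x+y+z)$ twice using the definition of $\beta$, e.g. $q(x+y+z) = q(x+y) + \beta(x+y,z) + q(z) = q(x+y) + \beta(x,z) + \beta(y,z) + q(z)$, substituting the analogous expansions for the two-variable terms, and cancelling; the bilinearity of $\beta$ is what makes all the cross-terms match. For \eqref{eq:gen_binom}, I would write $q(\lambda x + y) = q(\lambda x) + \beta(\lambda x, y) + q(y) = \lambda^2 q(x) + \lambda\beta(x,y) + q(y)$, using homogeneity and linearity of $\beta$ in the first slot, and then recognise $\beta(x,y) = q(x+y)-q(x)-q(y)$.

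For the sufficiency direction — the substantive part — I would assume $q$ satisfies \eqref{eq:inclusion_exclusion} and \eqref{eq:gen_binom} and must show $q$ is homogeneous of degree two and that $\beta$ as defined in \eqref{eq:associated_bilinear} is bilinear. Homogeneity of degree two is immediate from \eqref{eq:gen_binom} by setting $y = 0$: this gives $q(\lambda x) = \lambda^2 q(x) + \lambda(q(x) - q(x) - q(0)) + q(0)$, and first taking $\lambda = 0$ forces $q(0) = 0$, hence $q(\lambda x) = \lambda^2 q(x)$. Symmetry of $\beta$ is clear from its definition. For additivity of $\beta$ in the second argument — say $\beta(x, y+z) = \beta(x,y) + \beta(x,z)$ — I would expand both sides using the definition of $\beta$ and reduce the required identity to precisely \eqref{eq:inclusion_exclusion} (applied to the triple $x, y, z$, after using $q(0)=0$ to absorb the stray terms). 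Homogeneity of $\beta$ in the second argument, $\beta(x, \mu y) = \mu\,\beta(x,y)$, is where \eqref{eq:gen_binom} is needed again: compute $\beta(x,\mu y) = q(x + \mu y) - q(x) - q(\mu y)$; apply \eqref{eq:gen_binom} with the roles arranged so that $q(\mu y + x) = \mu^2 q(y) + \mu(q(x+y) - q(x) - q(y)) + q(x)$, and use $q(\mu y) = \mu^2 q(y)$; the $\mu^2 q(y)$ terms cancel and one is left with $\mu(q(x+y)-q(x)-q(y)) = \mu\,\beta(x,y)$. Symmetry then transfers both properties to the first argument, establishing bilinearity.

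\textbf{Expected main obstacle.} The only genuine subtlety is bookkeeping: making sure the stray $q(0)$ terms in \eqref{eq:inclusion_exclusion} and \eqref{eq:gen_binom} are handled (they vanish once $q(0)=0$ is derived), and correctly matching the variable substitutions so that the additivity of $\beta$ reduces exactly to \eqref{eq:inclusion_exclusion} and its homogeneity reduces exactly to \eqref{eq:gen_binom}. There is no characteristic hypothesis needed here — note that unlike the earlier remark about reconstructing $q$ from $\beta$, this characterisation works in any characteristic, since it never divides by $1_k+1_k$. All steps are elementary manipulations in the target vector space $W$, so I would present the sufficiency computations in full and merely sketch the necessity direction.
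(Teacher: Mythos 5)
The paper does not actually prove this lemma --- it is quoted from Helmstetter--Micali--Revoy --- so your proposal is judged against the standard argument, and your overall route is the right one: necessity by expanding $q(u+v)=q(u)+q(v)+\beta(u,v)$ with bilinearity and homogeneity, sufficiency by reading additivity of $\beta$ in the second slot off \eqref{eq:inclusion_exclusion}, homogeneity of $\beta$ in the second slot off \eqref{eq:gen_binom}, and transferring both to the first slot by the manifest symmetry of $\beta$. (You also tacitly read \eqref{eq:inclusion_exclusion} with final terms $+q(x)+q(y)+q(z)$, silently correcting the typo in the displayed statement; that is the intended reading.)

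One step, however, fails as written: the claim that setting $\lambda=0$ (after $y=0$) in \eqref{eq:gen_binom} ``forces $q(0)=0$''. With $y=0$ the equation reads $q(\lambda x)=\lambda^{2}q(x)-\lambda q(0)+q(0)$, and putting $\lambda=0$ there yields only the tautology $q(0)=q(0)$, so no information about $q(0)$ is obtained. Nor can $q(0)=0$ be extracted from \eqref{eq:gen_binom} alone by a cleverer choice of scalar in full generality: taking $x=y=0$ gives $(\lambda^{2}-\lambda)\,q(0)=0$, which is vacuous over $k=\mathbb{F}_{2}$, a case the lemma (and your own remark that no hypothesis on $k$ is needed) must cover. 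The gap is easy to close, but the correct source of $q(0)=0$ is \eqref{eq:inclusion_exclusion}: evaluating it at $x=y=z=0$ gives $q(0)=0$ immediately (equivalently, once additivity $\beta(x,y+z)=\beta(x,y)+\beta(x,z)$ is obtained from \eqref{eq:inclusion_exclusion}, setting $y=z=0$ yields $\beta(x,0)=0$, i.e.\ $q(0)=0$). With that repair your homogeneity step $q(\lambda x)=\lambda^{2}q(x)$ and the cancellation giving $\beta(x,\mu y)=\mu\,\beta(x,y)$ go through, and the remainder of your argument is correct and complete.
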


A combination of the binomial formula with the Leibniz formula characterises quadratic functions whose associated bilinear functions are biderivations in characteristic unequal to two.

\begin{Proposition}\label{prop:algebraic_interlude_1}
Let $A$ be a unital commutative algebra with identity $\mathbf{1}$ over the field $k$ with characteristic not two. A function $\delta:A\to A$ is quadratic and the associated bilinear function
\begin{align*}
\Delta:A\times A\to A,\qquad (x,y)\mapsto\delta(x+y)-\delta(x)-\delta(y)
\end{align*}
is a biderivation if, and only if, $\delta$ fulfils the inclusion-exclusion principle~\eqref{eq:inclusion_exclusion} and, for all $x,y,z\in A$ and $\lambda\in k$,
\begin{equation}\label{eq:interessant_2}
\begin{aligned}
\delta(xy+\lambda z) & = xy\delta(x+y)+\lambda x\delta(y+z)+\lambda y\delta(x+z)+\lambda(\lambda\cdot\mathbf{1}-x-y)\delta(z)\\
& \quad+y(y-x-\lambda\cdot\mathbf{1})\delta(x)+x(x-y-\lambda\cdot\mathbf{1})\delta(y).
\end{aligned}
\end{equation}
\end{Proposition}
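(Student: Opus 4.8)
The plan is to prove the two implications separately. The direction ``$\Rightarrow$'' is a routine polarisation computation, while the direction ``$\Leftarrow$'' proceeds by substituting carefully chosen special values into the single identity~\eqref{eq:interessant_2}, each substitution isolating one of the three facts: that $\delta$ is homogeneous of degree two, that $\Delta$ obeys the Leibniz rule in each argument, and that $\delta$ satisfies the generalised binomial equation~\eqref{eq:gen_binom}. Throughout one must remember that $A$ need not be an integral domain, so no non-invertible element may ever be cancelled; the only divisions that occur are by $1_k+1_k$, which is legitimate because $\mathrm{char}\,k\neq 2$.

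For ``$\Rightarrow$'', suppose $\delta$ is quadratic with $\Delta$ a biderivation. Then~\eqref{eq:inclusion_exclusion} holds by the characterisation of quadratic functions recalled above, and homogeneity gives $\Delta(w,w)=\delta(2w)-2\delta(w)=2\delta(w)$, hence $\delta(w)=\tfrac12\Delta(w,w)$. Applying~\eqref{eq:gen_binom} with the pair $(z,xy)$ gives $\delta(xy+\lambda z)=\lambda^2\delta(z)+\lambda\,\Delta(xy,z)+\delta(xy)$; expanding $\Delta(xy,xy)$ and $\Delta(xy,z)$ by the Leibniz rule in both arguments yields $\delta(xy)=x^2\delta(y)+xy\,\Delta(x,y)+y^2\delta(x)$ and $\Delta(xy,z)=x\,\Delta(y,z)+y\,\Delta(x,z)$. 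Substituting these, rewriting each $\Delta(a,b)$ as $\delta(a+b)-\delta(a)-\delta(b)$, and collecting the coefficients of $\delta(x+y)$, $\delta(y+z)$, $\delta(x+z)$, $\delta(x)$, $\delta(y)$ and $\delta(z)$ reproduces~\eqref{eq:interessant_2}; this last bookkeeping step I will not write out.

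For ``$\Leftarrow$'', suppose~\eqref{eq:inclusion_exclusion} and~\eqref{eq:interessant_2} hold. First, \eqref{eq:inclusion_exclusion} is equivalent to biadditivity of $\Delta$, and setting all arguments to zero gives $\delta(0)=0$. Computing $\delta(xy+z)-\delta(xy)-\delta(z)$ from the instances $\lambda=1$ and $\lambda=0$ of~\eqref{eq:interessant_2} and regrouping via $\Delta(a,b)=\delta(a+b)-\delta(a)-\delta(b)$ yields $\Delta(xy,z)=x\,\Delta(y,z)+y\,\Delta(x,z)$; since $\Delta$ is symmetric this is the Leibniz rule in each argument, and the case $x=y=\mathbf{1}$ forces $\Delta(\mathbf{1},\cdot)=0$. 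Putting $x=0$ in~\eqref{eq:interessant_2} gives $\delta(\lambda z)=\lambda^2\delta(z)$, so $\delta$ is homogeneous of degree two; combined with $\Delta(\mathbf{1},\mathbf{1})=0$ this gives $2\delta(\mathbf{1})=\delta(2\mathbf{1})=4\delta(\mathbf{1})$, whence $\delta(\mathbf{1})=0$. Finally, setting $y=\mathbf{1}$ in~\eqref{eq:interessant_2} and using $\delta(\mathbf{1})=0$ together with $\delta(x+\mathbf{1})=\delta(x)$ and $\delta(\mathbf{1}+z)=\delta(z)$ (both consequences of $\Delta(\mathbf{1},\cdot)=0$ and $\delta(\mathbf{1})=0$), all the surplus terms cancel and what remains is exactly~\eqref{eq:gen_binom}. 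With~\eqref{eq:inclusion_exclusion} this shows $\delta$ is quadratic, hence $\Delta$ is bilinear; together with the Leibniz rule already obtained, $\Delta$ is a biderivation, completing the proof.

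I expect the main difficulty to be organisational rather than conceptual: one has to recognise which specialisations of the single relation~\eqref{eq:interessant_2} cleanly separate the homogeneity, Leibniz and binomial contents, and then verify in each case that the superfluous terms genuinely cancel. The one delicate point in the reverse direction is the order of extraction --- the $y=\mathbf{1}$ substitution only collapses to~\eqref{eq:gen_binom} after $\Delta(\mathbf{1},\cdot)=0$ and $\delta(\mathbf{1})=0$ have been established, and the latter, like the whole dictionary between quadratic functions and symmetric bilinear forms, is precisely where $\mathrm{char}\,k\neq 2$ is used.
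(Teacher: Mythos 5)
Your proposal is correct and follows essentially the same route as the paper: polarisation via $\delta=\tfrac{1}{2_k}\Delta(\cdot,\cdot)$ and the Leibniz expansion for ``$\implies$'', and for ``$\impliedby$'' the same kind of specialisations of~\eqref{eq:interessant_2} (the $\lambda=1$ minus $\lambda=0$ difference giving the Leibniz rule, the $y=\mathbf{1}$ substitution giving the generalised binomial equation) followed by the cited characterisation of quadratic functions; I checked that the omitted cancellations do work out. The only cosmetic deviation is that you obtain $\delta(\mathbf{1})=0$ from homogeneity together with invertibility of $2_k$, whereas the paper derives $\delta(\mathbf{1})=0$ and $\delta(x+\mathbf{1})=\delta(x)$ by direct substitutions that never divide by two, so in the paper's version the characteristic assumption is only needed in the forward direction.
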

\begin{proof}
\noindent ``$\implies$'' Due to~\cite[Theorem 1.1]{HelmMicaRevo14}, $\delta$ fulfils the inclusion-exclusion principle~\eqref{eq:inclusion_exclusion}. Since $2_k:=1_k+1_k\neq 0_k$, the associated bilinear function $\Delta$ fulfils $q(x) = \frac{1}{2_k}\Delta(x,x)$ for all $x\in A$. Due to bilinearity, Leibniz rule and symmetry, we have
\begin{align*}
\delta(xy+z) & = \frac{1}{2_k}\Delta(xy+z,xy+z)\\
& = \frac{1}{2_k}\big(\Delta(xy,xy)+2\Delta(xy,z)+\Delta(z,z)\big)\\
& = \frac{1}{2_k}\big(x^2\Delta(y,y)+2xy\Delta(x,y)+y^2\Delta(x,x) + 2x\Delta(y,z) + 2y\Delta(x,z) +\Delta(z,z)\big)\\
& = \frac{1}{2_k}\big(x^2\Delta(y,y)+y^2\Delta(x,x)+\Delta(z,z) + xy(\Delta(x+y,x+y)-\Delta(x,x)-\Delta(y,y))\\
&\quad + y(\Delta(x+z,x+z)-\Delta(x,x)-\Delta(z,z)) + x(\Delta(y+z,y+z)-\Delta(y,y)-\Delta(z,z))\big)\\
& = xy\delta(x+y) + y\delta(x+z)+x\delta(y+z) + (1-y-x)\delta(z) + x(x-y-1)\delta(y)\\
&\quad + y(y-x-1)\delta(x).
\end{align*}
and hence $\delta$ fulfils indeed~\eqref{eq:interessant_2}.

\noindent ``$\impliedby$'' We proceed in steps.

\noindent\textsc{Step 1:} We show that $\delta(\mathbf{1}) = \delta(\mathbf{1}+\mathbf{1}) = 0$. Putting $x = y = z = \mathbf{1}$ and $\lambda = 0_k$ in~\eqref{eq:interessant_2} yields
\begin{align*}
\delta(\mathbf{1}) = \delta(\mathbf{1}\cdot\mathbf{1} + 0_k\cdot\mathbf{1}) = \delta(\mathbf{1}+\mathbf{1}).
\end{align*}
Therefore, putting $x = y = z = \mathbf{1}$ and $\lambda = 1_k$ in~\eqref{eq:interessant_2} yields
\begin{align*}
\delta(\mathbf{1}) = \delta(\mathbf{1}+\mathbf{1}) & = \delta(\mathbf{1}+\mathbf{1}) + \delta(\mathbf{1}+\mathbf{1})+\delta(\mathbf{1}+\mathbf{1})+1_k\cdot(\mathbf{1}-\mathbf{1}-\mathbf{1})\delta(\mathbf{1})+\mathbf{1}(\mathbf{1}-\mathbf{1}-\mathbf{1})\delta(1)\\
&\quad +\mathbf{1}(\mathbf{1}-\mathbf{1}-\mathbf{1})\delta(1)\\
& = \delta(\mathbf{1})+\delta(\mathbf{1})+\delta(\mathbf{1})-\delta(\mathbf{1})-\delta(\mathbf{1})-\delta(\mathbf{1}) = 0.
\end{align*}

\noindent \textsc{Step 2:} We show that $\delta(x+\mathbf{1}) = \delta(x)$ for all $x\in A$. Let $x\in A$. Then, we have
\begin{align*}
\delta(x+\mathbf{1}) & = \delta(\mathbf{1}\cdot\mathbf{1}+1_k\cdot x)\\
& = \delta(\mathbf{1}+\mathbf{1})+\delta(x+\mathbf{1})+\delta(x+\mathbf{1})-\delta(x)-\delta(\mathbf{1})-\delta(\mathbf{1})\\
& = \delta(x+\mathbf{1})+\delta(x+\mathbf{1})-\delta(x).
\end{align*}
This implies $\delta(x) = \delta(x+\mathbf{1})$.

\noindent \textsc{Step 3:} We show that
\begin{align*}
\forall x,y\in A~\forall\lambda\in k: \delta(\lambda x+y) = \lambda^2 \delta(x)+\lambda(\delta(x+y)-\delta(x)-\delta(y)) + \delta(y).
\end{align*}
Let $x,y\in A$ and $\lambda\in k$. Then,
\begin{align*}
\delta(\lambda x + y) & = \delta(\mathbf{1}\cdot y + \lambda x)\\
& = y\delta(y+\mathbf{1})+\lambda y\delta(x+\mathbf{1})+\lambda\delta(x+y)+\lambda(\lambda\cdot\mathbf{1}-y-\mathbf{1})\delta(x)+(\mathbf{1}-y-\lambda\cdot\mathbf{1})\delta(y)\\
& \quad+y(y-x-\lambda\cdot\mathbf{1})\delta(\mathbf{1})\\
& = y\delta(y)+\lambda y\delta(x)+\lambda\delta(x+y)+\lambda^2\delta(x)-\lambda y\delta(x)-\lambda\delta(x)+\delta(y)-y\delta(y)-\lambda\delta(y)\\
& = \lambda^2 \delta(x)+\lambda(\delta(x+y)-\delta(x)-\delta(y)) + \delta(y)
\end{align*}

\noindent \textsc{Step 4:} We have verified that $\delta$ fulfils the generalised binomial formula~\eqref{eq:gen_binom}. Therefore,~\cite[Theorem 1.1]{HelmMicaRevo14} implies that $\delta$ is a quadratic function. In particular, the associated function $\Delta$ is bilinear.

\textsc{Step 5:} We show that $\Delta$ is a derivation in each argument. Due to linearity and symmetry, it suffices to show that $\Delta$ fulfils the Leibniz rule in the first argument. Let $x,x',y\in A$. Then, we have
\begin{align*}
\Delta(xx',y) & = \delta(xx'+y)-\delta(xx')-\delta(y)\\
& = xx'\delta(x+x')+x\delta(x'+y)+x'\delta(x+y)+(\mathbf{1}-x-x')\delta(y)+x'(x'-x-\mathbf{1})\delta(x)\\
& \quad +x(x-x'-\mathbf{1})\delta(x')-xx'\delta(x+x')-x\delta(x')-x'\delta(x)-(\mathbf{1}-x-x')\delta(0)\\
& \quad -x'(x'-x-\mathbf{1})\delta(x)-x(x-x'-\mathbf{1})\delta(x')-\delta(y)\\
& = x\delta(x'+y)-x\delta(x')-x\delta(y) + x'\delta(x+y)-x'\delta(x)-x'\delta(y)\\
& = x\Delta(x',y) + x'\Delta(x,y).
\end{align*}
This completes the proof of the proposition.
\end{proof}

In characteristic two, however, Proposition~\ref{prop:algebraic_interlude_1} remains no longer true, as the following proposition shows.

\begin{Proposition}\label{eq:char_two_never_disappoints}
Let $A$ be a unital commutative algebra with identity $\mathbf{1}$ over the field $k$ with characteristic two. A function $q:A\to A$ fulfils~\eqref{eq:interessant_2}, if, and only if, $q\equiv 0$.
\end{Proposition}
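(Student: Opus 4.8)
Sufficiency is immediate, so I focus on necessity. The plan is to re-run, step by step, the chain of special-value substitutions used in the proof of Proposition~\ref{prop:algebraic_interlude_1}, now over a field with $1_k+1_k=0_k$, and to watch the structure available there collapse until only $q\equiv 0$ remains.

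I would begin with $x=y=z=\mathbf{1}$. At $\lambda=0_k$, because $\mathbf{1}+\mathbf{1}=0$, equation~\eqref{eq:interessant_2} degenerates to $q(\mathbf{1})=q(0)$; substituting this back at $\lambda=1_k$ and using $\mathbf{1}-\mathbf{1}-\mathbf{1}=\mathbf{1}$ carries out the computation of Step~1 of the proof of Proposition~\ref{prop:algebraic_interlude_1} and gives $q(\mathbf{1})=0$, hence $q(0)=0$. Next, with $x=y=\mathbf{1}$ and $z$ arbitrary, I would evaluate~\eqref{eq:interessant_2} at $\lambda=1_k$ to obtain the shift invariance $q(\mathbf{1}+z)=q(z)$ (this is Step~2 of that proof, whose argument is characteristic-free), and then at general $\lambda$ to obtain $q(\mathbf{1}+\lambda z)=\lambda^2q(z)$; together these give the degree-two homogeneity $q(\lambda z)=\lambda^2q(z)$. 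Finally, putting $x=\mathbf{1}$ in~\eqref{eq:interessant_2} and simplifying — every term carrying an even coefficient drops out — should reproduce, as in Step~3, the generalized binomial identity
\[
q(y+\lambda z)=q(y)+\lambda\,\beta(y,z)+\lambda^2q(z),\quad\text{where }\beta(y,z):=q(y+z)-q(y)-q(z).
\]

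The multiplicative content then comes from two further substitutions. Setting $\lambda=0_k$ in~\eqref{eq:interessant_2} and expanding $q(x+y)=\beta(x,y)+q(x)+q(y)$, the repeated copies of $xy\,q(x)$ and $xy\,q(y)$ cancel and one is left with the Leibniz-type identity $q(xy)=xy\,\beta(x,y)+y^2q(x)+x^2q(y)$. Setting $y=x$ in~\eqref{eq:interessant_2} with $z,\lambda$ still free, and using that $x+x=0$ (so that $q(x+x)=q(0)=0$ and the remaining $\lambda$-terms cancel in pairs), one gets $q(x^2+\lambda z)=\lambda^2q(z)$; in particular $q(x^2)=0$ and $q$ is invariant under adding a square, $q(x^2+z)=q(z)$. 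Comparing $q(x^2+\lambda z)=\lambda^2q(z)$ with the binomial identity forces $\beta(x^2,z)=0$, i.e. $\beta$ vanishes the moment one of its slots is a square; observe also $\beta(x,x)=q(x+x)-2q(x)=0$, so $\beta$ is alternating.

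The main obstacle is the last step: passing from this bundle of identities to $q\equiv 0$. In characteristic not two one finishes at once by reconstructing $q$ from its polar form through division by $1_k+1_k$, and that is precisely what is unavailable here. Instead one has to force the conclusion out of the square-invariance $q(x^2+z)=q(z)$, the Leibniz-type identity and the vanishing of $\beta$ on squares — morally, that every element of $A$ can be transported into the kernel of $q$ by adding squares (and scalar multiples of squares). When every element of $A$ is a square — for instance if $A$ is a perfect field — this is immediate; in general I expect one must revisit~\eqref{eq:interessant_2} at further tailored arguments, e.g. with one slot a product $xy$ and another a scalar multiple of $\mathbf{1}$, and it is at this point — not in the earlier, routine substitutions — that the unital, commutative $k$-algebra structure of $A$ with $1_k+1_k=0_k$ has to be brought fully to bear.
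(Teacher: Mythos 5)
Your substitutions are all correct as far as they go (the values $q(0)=q(\mathbf{1})=0$, the $\mathbf{1}$-shift invariance, the homogeneity, the generalized binomial identity, the Leibniz-type identity $q(xy)=xy\,\beta(x,y)+y^2q(x)+x^2q(y)$, and $q(x^2+\lambda z)=\lambda^2q(z)$ all do follow from~\eqref{eq:interessant_2} in characteristic two), but the gap you flag at the end is not just a missing trick: it cannot be closed, because the statement is false at this level of generality. Take $A=k[X]$ over any field $k$ of characteristic two and $q(p):=(p')^2$ with $p'$ the formal derivative --- essentially the function built in the Example following the Proposition. Then $q$ is additive (squaring is additive in characteristic two), $q(\lambda p)=\lambda^2q(p)$, and $q(pr)=p^2q(r)+r^2q(p)$. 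Substituting these three facts into the right-hand side of~\eqref{eq:interessant_2}, every mixed term occurs exactly twice and cancels, leaving
\begin{align*}
xy\,q(x\!+\!y)+\lambda x\,q(y\!+\!z)+\lambda y\,q(x\!+\!z)+\lambda(\lambda\mathbf{1}\!-\!x\!-\!y)q(z)+y(y\!-\!x\!-\!\lambda\mathbf{1})q(x)+x(x\!-\!y\!-\!\lambda\mathbf{1})q(y)=x^2q(y)+y^2q(x)+\lambda^2q(z),
\end{align*}
which equals $q(xy+\lambda z)$; so this nonzero $q$ fulfils~\eqref{eq:interessant_2} for all $x,y,z\in A$, $\lambda\in k$. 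This also explains why your accumulated identities could never force $q\equiv 0$: every one of them is satisfied by this $q$. Your closing remark is exactly the right diagnosis --- the conclusion does hold when every element of $A$ is a square (then $q(x^2)=0$ finishes immediately, e.g.\ for a perfect field), and some hypothesis of that kind is genuinely needed.

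For comparison with the paper's own argument: it derives, as you do, that $\Delta(u,v)=q(u+v)-q(u)-q(v)$ obeys the Leibniz rule in the first slot, and then evaluates $0=\Delta(xy,xy)=x\Delta(y,xy)+y\Delta(x,xy)$ at $x=\mathbf{1}$, claiming $q(y)=y\big(q(y+\mathbf{1})-q(y)-q(\mathbf{1})\big)$. But at $x=\mathbf{1}$ the first summand vanishes identically (since $y+y=0$ and $q(y)+q(y)=0$), so this substitution only yields $0=y\big(q(y+\mathbf{1})-q(y)-q(\mathbf{1})\big)$, and the paper's final chain then collapses to $y^2q(\mathbf{1})=0$ rather than $q(y)=0$. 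The counterexample above confirms that the claimed intermediate identity cannot follow from~\eqref{eq:interessant_2}: for $q=(\cdot)'^2$ its right-hand side is $0$ while $q(X)=1$. So the verdict is not that you missed the paper's key step, but that your attempt and the paper's proof stall at the same point; the Proposition should either be restricted (for instance to algebras in which every element is a square) or its hypothesis strengthened, and the Example after it then no longer serves its stated purpose, since that very $q$ satisfies~\eqref{eq:interessant_2}.
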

\begin{proof}
Evidently, if $q\equiv 0$, then~\eqref{eq:interessant_2} is fulfilled. Conversely, let $q$ fulfil~\eqref{eq:interessant_2}. Putting $x = y = z = 0$ and $\lambda = 0_k$, we see that $q(0) = 0$. Using the calculation of Step 5 in the proof of Proposition~\ref{prop:algebraic_interlude_1}, we see that $(x,y)\mapsto q(x+y)-q(x)-q(y)$ is a derivation in the first argument. Therefore, we find, for all $x,y\in A$,
\begin{align*}
0 & = q(xy+xy)-q(xy)-q(xy) = x(q(xy+y)+q(y)+q(xy))+y(q(xy+x)+q(x)+q(xy)).
\end{align*}
Putting $x = 1$ and rearranging the terms using $a = -a$ for all $a\in A$, we conclude
\begin{align*}
\forall y\in A: q(y) = y(q(y+\mathbf{1})-q(y)-q(\mathbf{1})).
\end{align*}
In particular, we find
\begin{align*}
q(\mathbf{1}) = \mathbf{1}(q(\mathbf{1}+\mathbf{1})-q(\mathbf{1})-q(\mathbf{1})) = q(\mathbf{1}+\mathbf{1}) = q(0) = 0.
\end{align*}
Therefore, we have, for all $y\in A$,
\begin{align*}
q(y) = q(\mathbf{1}y) = y(q(y+\mathbf{1})-q(y)-q(\mathbf{1}))+q(y)+y^2q(\mathbf{1}) = q(y)+q(y) = 0.
\end{align*}
\end{proof}

In particular, Proposition~\eqref{eq:char_two_never_disappoints} yields that the implication ``$\impliedby$'' of Proposition~\ref{prop:algebraic_interlude_1} holds true in characteristic zero. This can also be seen from the observation that we have made no argument that requires $1_k+1_k\neq 0_k$. In the following example, we construct a non-zero quadratic function $\delta:\mathbb{F}_2[X]\to\mathbb{F}_2[X]$, which is the concatenation of a biderivation with the diagonal embedding $\mathbb{F}_2[X]\hookrightarrow F_2[X]\times\mathbb{F}_2[X]$. In particular, this shows that the implication ``$\implies$'' of Proposition~\ref{prop:algebraic_interlude_1} fails in characteristic two.

\begin{Example}
Consider the ring of polynomials $\mathbb{F}_2[X] = \set{p\in\mathbb{F}_2^{\mathbb{N}_0}\,\left\vert\,\sup\set{i\in\mathbb{N}_0\,\big\vert\,p_i\neq 0}<\infty\right.}$ as a commutative, unital algebra over $\mathbb{F}_2$. Since the monomials $X^n$, $n\in\mathbb{N}_0$ are a basis of $\mathbb{F}_2[X]$, there is a unique linear function
\begin{align*}
d: \mathbb{F}_2[X]\to\mathbb{F}_2[X],\qquad X^n\mapsto (n\!\!\mod 2)X^{n-1},\quad n\in\mathbb{N}_0.
\end{align*}
This function fulfils, for all $n,m\in\mathbb{N}_0$,
\begin{align*}
X^nd(X^m)+X^m d(X^n) & = (m\!\!\mod 2)X^{n}X^{m-1}+(n\!\!\mod 2)X^{m}X^{n-1}\\
& = ((m+n)\!\!\mod 2)X^{n+m-1} = d(X^{n+m}) = d(X^nX^m)
\end{align*}
and hence $d$ is a derivation. In particular, the function
\begin{align*}
\widetilde{d}: \mathbb{F}_2[X]\times\mathbb{F}_2[X]\to\mathbb{F}_2[X],\qquad (x,y)\mapsto d(x)d(y)
\end{align*}
is a biderivation. The associated quadratic function $q = d(\cdot)^2$ does not vanish, and the associated bilinear function $\Delta(x,y) = q(x+y)-q(x)-q(y)$ vanishes for all $x,y\in \mathbb{F}_2[X]$ and is therefore a biderivation.
\end{Example}

We are now able to characterise the biquadratic functions induced by conservative-irreversible functions. To this end, we introduce the following objects.

\begin{Definition}\label{def:SQPS}
Let $M$ be a smooth manifold. A function $\tes{\cdot,\cdot}:\mathcal{C}^\infty\pr{M}\times\mathcal{C}^\infty\pr{M}\to\mathcal{C}^\infty\pr{M}$ is a \emph{square quasi-Poisson structure} (SQPS) if, and only if, for each $h\in\mathcal{C}^\infty\pr{M}$, the functions ${}_h\delta := \tes{h,\cdot}$ and $\delta_h := \tes{\cdot,h}$ fulfil
\begin{enumerate}[(i)]
\item $\delta_h(f), {}_h\delta(f)$ are pointwise non-negative for all $f\in\mathcal{C}^\infty\pr{M}$,
\item $\delta_h(\cdot)$ and ${}_h\delta(\cdot)$ are \emph{$h$-translation invariant}, i.e. $\delta_h(\cdot+h) = \delta_h$ and ${}_h\delta(\cdot+h) = {}_h\delta(\cdot)$, for all $h\in\mathcal{C}^\infty(M)$,
\item $\delta_h$ and ${}_h\delta$ fulfil the inclusion-exclusion principle~\eqref{eq:inclusion_exclusion},
\item $\delta_h$ and ${}_h\delta$ fulfil~\eqref{eq:interessant_2}.
\end{enumerate}
If (only) $\delta_h$  (respectively ${}_h\delta$) is required to be $h$-translation invariant, then we call $\tes{\cdot,\cdot}$ \textit{left} (respectively \textit{right}) SQPS.
\end{Definition}

Now, we see that conservative-irreversible functions are uniquely given by left SQPS.

\begin{Proposition}\label{prop:equivalence_SQPS_CI}
A function $E:\mathcal{C}^\infty(M)\times\mathcal{C}^\infty(M)\times\mathcal{C}^\infty(M)$ is a conservative-irreversible function if, and only if, there is a left SQPS $\tes{\cdot,\cdot}$ so that
\begin{align}\label{eq:SQPS_representation}
\forall f,g,h\in\mathcal{C}^\infty(M): E(f,g,h) = \frac{1}{2}(\{\!\{f+g,h\}\!\}-\tes{f,h}-\tes{g,h});
\end{align}
and this SQPS is unique.
\end{Proposition}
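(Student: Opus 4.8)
The plan is to pass back and forth between $E$ — equivalently, the unique four-function $e$ attached to it by Lemma~\ref{lem:Unicity_4function} — and the biquadratic object $\tes{\cdot,\cdot}$, and to read off all the required properties from Proposition~\ref{prop:algebraic_interlude_1} and the symmetry lemmas of Section~\ref{sec:2}. The representation~\eqref{eq:SQPS_representation} simply says that $\Delta_h(f,g):=\tes{f+g,h}-\tes{f,h}-\tes{g,h}$ equals $2E(f,g,h)$, so morally $\tes{f,h}=E(f,f,h)$, and the statement amounts to matching the four axioms of Definition~\ref{def:SQPS} with the conditions (i)--(iv) of Definition~\ref{def:conservative_irreversible}.

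\emph{Necessity.} Given a conservative-irreversible $E$, I would take its unique four-function $e$, which by Lemmas~\ref{lem:first_symmetry} and~\ref{lem:nice_symmetry} is symmetric in the first pair, symmetric in the second pair, and satisfies $e(f,g,h,k)=e(h,k,f,g)$, and set $\tes{f,h}:=e(f,f,h,h)$. Then $\delta_h=\tes{\cdot,h}$ is the restriction to the diagonal of the biderivation $e(\cdot,\cdot,h,h)$, hence a quadratic function with polarisation $2e(\cdot,\cdot,h,h)$; likewise ${}_h\delta=\tes{h,\cdot}$ is quadratic with polarisation $2e(h,h,\cdot,\cdot)$, both polarisations being biderivations because $e$ is a four-derivation. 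The forward implication of Proposition~\ref{prop:algebraic_interlude_1} then hands me properties (iii) and (iv) of Definition~\ref{def:SQPS} for $\delta_h$ and for ${}_h\delta$; property~(i) is the positive semi-definiteness of Definition~\ref{def:conservative_irreversible}\,(iii); and $h$-translation invariance of $\delta_h$ comes from expanding $\tes{f+h,h}=e(f,f,h,h)+2e(f,h,h,h)+e(h,h,h,h)$ and killing the last two terms by Definition~\ref{def:conservative_irreversible}\,(iv) together with symmetry in the first pair. Thus $\tes{\cdot,\cdot}$ is a left SQPS, and $\tfrac12(\tes{f+g,h}-\tes{f,h}-\tes{g,h})=e(f,g,h,h)=E(f,g,h)$ is~\eqref{eq:SQPS_representation}.

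\emph{Sufficiency.} Given a left SQPS $\tes{\cdot,\cdot}$, define $E$ by~\eqref{eq:SQPS_representation}. The point here is that the left/right distinction concerns only axiom~(ii), while axioms (iii) and (iv) hold for $\delta_h$ and for ${}_f\delta:=\tes{f,\cdot}$; so the converse implication of Proposition~\ref{prop:algebraic_interlude_1} makes both $\delta_h$ and ${}_f\delta$ quadratic with a polarisation that is a biderivation, i.e. $\tes{\cdot,\cdot}$ is biquadratic. I would then set $\varepsilon(f,g,h,k):=\tfrac14\big(\Delta_{h+k}(f,g)-\Delta_h(f,g)-\Delta_k(f,g)\big)$ with $\Delta_h$ the polarisation of $\delta_h$ as above; double polarisation shows that $\varepsilon$ is four-linear, symmetric in the first and in the last pair, with $\varepsilon(f,g,h,h)=E(f,g,h)$ by~\eqref{eq:SQPS_representation}, and it is a four-derivation because $\Delta_h$ is a biderivation in its two entries and the polarisation of ${}_f\delta$ is a biderivation in its two entries. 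It remains to verify Definition~\ref{def:conservative_irreversible}\,(ii)--(iv) for $e:=\varepsilon$: (ii) is symmetry in the last pair; in (iii), symmetry is symmetry in the first pair and positive semi-definiteness is $\varepsilon(f,f,h,h)=\tes{f,h}\geq 0$ from axiom~(i); and (iv) is $2\varepsilon(h,f,h,h)=\tes{h+f,h}-\tes{h,h}-\tes{f,h}=0$, using the $h$-translation invariance of $\delta_h$ and $\tes{h,h}=\delta_h(h)=\delta_h(0)=0$. Hence $E$ is conservative-irreversible.

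\emph{Uniqueness and the main obstacle.} Uniqueness is then immediate: any left SQPS $\tes{\cdot,\cdot}$ with~\eqref{eq:SQPS_representation} has $\delta_h$ quadratic, so $\tes{f,h}=\delta_h(f)=\tfrac12\Delta_h(f,f)$, and $\Delta_h(f,g)=2E(f,g,h)$ by~\eqref{eq:SQPS_representation}, which forces $\tes{f,h}=E(f,f,h)$. I expect the only genuinely delicate part to be the sufficiency direction, namely checking that $E$ is generated by a \emph{four}-derivation and not merely by a biderivation in $(f,g)$ with arbitrary dependence on $h$: this is exactly where one must exploit that even a plain left SQPS keeps \emph{both} slots quadratic with biderivation polarisation, and that Definition~\ref{def:conservative_irreversible}\,(iv) corresponds precisely to the \emph{left} translation invariance of $\delta_h$ — from a right SQPS one would instead get control of $\varepsilon(h,h,\cdot,h)$, which need not coincide with $\varepsilon(h,\cdot,h,h)$ until the pair-swap symmetry of Lemma~\ref{lem:nice_symmetry} is available.
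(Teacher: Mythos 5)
Your proof is correct and takes essentially the same route as the paper: for necessity you set $\tes{f,h}=e(f,f,h,h)=E(f,f,h)$ and read off the SQPS axioms from Definition~\ref{def:conservative_irreversible} and the forward direction of Proposition~\ref{prop:algebraic_interlude_1}, and for sufficiency you build exactly the paper's double-polarisation four-function $e(f,g,h,k)=\tfrac14\big(\Delta_{h+k}(f,g)-\Delta_h(f,g)-\Delta_k(f,g)\big)$ and verify (i)--(iv) using the converse of that proposition together with $h$-translation invariance. Your explicit polarisation argument for uniqueness is a small addition that the paper leaves implicit.
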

\begin{proof}
\noindent ``$\implies$''\quad Let $E$ be a conservative-irreversible function and define
\begin{align*}
\tes{\cdot,\cdot}:\mathcal{C}^\infty(M)\times\mathcal{C}^\infty(M)\to\mathcal{C}^\infty(M),\quad (f,g)\mapsto E(f,f,g).
\end{align*}
Let $e$ be the unique representation~\eqref{eq:extension_CI} of $E$. Due to linearity and symmetry in the first two entries of $e$ and hence in the first two entries of $E$, we have, for all $f,g,h\in\mathcal{C}^\infty(M)$
\begin{align*}
E(f,g,h) = \frac{1}{2}(E(f+g,f+g,h)-E(f,f,h)-E(g,g,h)) = \frac{1}{2}(\tes{f+g,h}-\tes{f,h}-\tes{g,h})
\end{align*}
and hence the function $\tes{\cdot,\cdot}$ fulfils~\eqref{eq:SQPS_representation}. It remains to show that $\tes{\cdot,\cdot}$ is a left SQPS. From Definition~\ref{def:conservative_irreversible}\,(iii), we conclude
\begin{align*}
\forall f,g\in\mathcal{C}^\infty(M)~\forall x\in M: \tes{f,g}(x) = E(f,f,g)(x) = e(f,f,g,g)(x)\geq 0.
\end{align*}
and hence $\tes{\cdot,\cdot}$ has the property (i) of Definition~\ref{def:SQPS}. Let $f,h\in\mathcal{C}^\infty(M)$. In view of Definition~\ref{def:conservative_irreversible}\,(iv) and the symmetry of $E$ in the first two entries, we have
\begin{align*}
\tes{f+h,h} & = E(f+h,f+h,h) = E(f,f,h) + E(f,h,h) + E(h,f,h) + E(h,h,h)\\
& = E(f,f,h) = \tes{f,h}
\end{align*}
and hence $\tes{\cdot,h}$ is $h$-translation invariant. Further, Proposition~\ref{prop:algebraic_interlude_1} implies that $\tes{\cdot,h}$ fulfils Definition~\ref{def:SQPS}\,(iii) and (iv). This shows that $\tes{\cdot,\cdot}$ is indeed a left SQPS.

\noindent ``$\impliedby$'' Let $\tes{\cdot,\cdot}$ be a left SQPS with~\eqref{eq:SQPS_representation}. Define
\begin{align*}
e:\left(\mathcal{C}^\infty(M)\right)^4 & \to\mathcal{C}^\infty(M),\\
(f,g,h,k) & \mapsto \frac{1}{4}\big(\tes{f+g,h+k}+\tes{f,h}+\tes{f,k}+\tes{g,h}+\tes{g,k}\\
& \hspace{1cm}-\tes{f+g,h}-\tes{f+g,k}-\tes{f,h+k}-\tes{g,h+k}\big).
\end{align*}
Due to Proposition~\ref{prop:algebraic_interlude_1}, we have, for all $f,g,h\in \mathcal{C}^\infty(M)$,
\begin{align*}
E(f,g,h) & = \frac{1}{2}(\tes{f+g,h}-\tes{f,h}-\tes{g,h})\\
& = \frac{1}{4}\big(4\tes{f+g,h}+2\tes{f,h}+2\tes{g,h}-2\tes{f+g,h}-4\tes{f,h}-4\tes{g,h}\big)\\
& = \frac{1}{4}\big(\tes{f+g,2h}+2\tes{f,h}+2\tes{g,h}-2\tes{f+g,h}-\tes{f,2h}-\tes{g,2h}\big)\\
& = e(f,g,h,h)
\end{align*}
and hence $e$ generates $E$ with~\eqref{eq:extension_CI}. It remains to verify that $e$ fulfils Definition~\ref{def:conservative_irreversible}\,(i)--(iv). The symmetry of $e$ in the first two arguments and in the last two arguments is a direct consequence of the commutativity of the addition in $\mathcal{C}^\infty(M)$. In view of Definition~\ref{def:SQPS}\,(i) and Proposition~\ref{prop:algebraic_interlude_1}, we have, for all $f,h\in\mathcal{C}^\infty(M)$ and $x\in M$,
\begin{align*}
e(f,f,h,h)(x) & = \frac{1}{4}(\tes{2f,2h}(x)+4\tes{f,h}(x)-2\tes{2f,h}(x)-2\tes{f,2h}(x))\\
& = \frac{1}{4}(16+4-8-8)\tes{f,h}(x) = \tes{f,h}(x)\geq 0
\end{align*}
This shows that $e$ fulfils Definition~\ref{def:conservative_irreversible}\,(ii) and (iii). Since $\tes{\cdot,\cdot}$ is a left SQPS, we have, for all $f,h\in\mathcal{C}^\infty(M)$,
\begin{align*}
e(h,f,h,h) & = \frac{1}{4}(\tes{f+h,2h} + 2\tes{f,h}+2\tes{h,h}-2\tes{f+h,h}-\tes{f,2h}-\tes{h,2h})\\
& = \frac{1}{4}(\tes{f,2h}+2\tes{f,h}+0-2\tes{f,h}-\tes{2h}-0) = 0
\end{align*}
and hence $e$ fulfils Definition~\ref{def:conservative_irreversible}\,(iv). Finally, Proposition~\ref{prop:algebraic_interlude_1} implies that $e$ is a derivation in each argument. This shows that $E$ is a conservative-irreversible function.
\end{proof}

In particular, we conclude that the distinction between left and right SQPS is artificial: left SQPS are right SQPS and vice versa.

\begin{Corollary}\label{cor:cor_1}
Each left SQPS is symmetric, and therefore a SQPS.
\end{Corollary}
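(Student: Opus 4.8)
The plan is to transport the statement through the correspondence of Proposition~\ref{prop:equivalence_SQPS_CI} between left SQPS and conservative-irreversible functions, and then to deduce symmetry from the exchange symmetry of the associated four-derivation established in Lemma~\ref{lem:nice_symmetry}.

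First I would associate to a given left SQPS $\tes{\cdot,\cdot}$ the function
\begin{align*}
E:\big(\mathcal{C}^\infty(M)\big)^3\to\mathcal{C}^\infty(M),\qquad (f,g,h)\mapsto\tfrac{1}{2}\big(\tes{f+g,h}-\tes{f,h}-\tes{g,h}\big).
\end{align*}
By construction $\tes{\cdot,\cdot}$ satisfies~\eqref{eq:SQPS_representation} with this $E$, so Proposition~\ref{prop:equivalence_SQPS_CI} shows that $E$ is conservative-irreversible. Let $e$ be the unique function with~\eqref{eq:extension_CI} and the properties (i)--(iv) of Definition~\ref{def:conservative_irreversible}.

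The central intermediate claim is that $\tes{f,g}=E(f,f,g)$, i.e.\ that $\tes{\cdot,\cdot}$ is recovered from $E$ along the diagonal. Here I would use that, by Definition~\ref{def:SQPS}\,(iii)--(iv), each $\delta_h=\tes{\cdot,h}$ fulfils the inclusion-exclusion principle~\eqref{eq:inclusion_exclusion} and~\eqref{eq:interessant_2}, so Proposition~\ref{prop:algebraic_interlude_1} makes it a quadratic function; in particular $\tes{2f,h}=\delta_h(2f)=4\,\delta_h(f)=4\tes{f,h}$. Substituting $g=f$ into the definition of $E$ then gives $E(f,f,h)=\tfrac12\big(\tes{2f,h}-2\tes{f,h}\big)=\tes{f,h}$, hence $\tes{f,g}=e(f,f,g,g)$ for all $f,g\in\mathcal{C}^\infty(M)$.

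Finally, Lemma~\ref{lem:nice_symmetry} yields $e(f,f,g,g)=e(g,g,f,f)$, so $\tes{f,g}=\tes{g,f}$ and $\tes{\cdot,\cdot}$ is symmetric. Then $\delta_h={}_h\delta$ for every $h\in\mathcal{C}^\infty(M)$, so the $h$-translation invariance that is demanded of $\delta_h$ in a left SQPS automatically transfers to ${}_h\delta$; since the remaining conditions (i), (iii), (iv) of Definition~\ref{def:SQPS} are already imposed on both $\delta_h$ and ${}_h\delta$, the function $\tes{\cdot,\cdot}$ is a SQPS. I do not anticipate a genuine obstacle; the one point that wants care is the scaling identity $\tes{2f,h}=4\tes{f,h}$, which is precisely where the algebraic hypotheses~(iii)--(iv) of Definition~\ref{def:SQPS}, through Proposition~\ref{prop:algebraic_interlude_1}, enter.
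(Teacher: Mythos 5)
Your proposal is correct and follows essentially the same route as the paper: pass from the left SQPS to the associated conservative-irreversible function via Proposition~\ref{prop:equivalence_SQPS_CI}, recover $\tes{f,g}=e(f,f,g,g)$ on the diagonal, and conclude symmetry from Lemma~\ref{lem:nice_symmetry}. The only cosmetic difference is that the paper writes down the fully polarised four-argument $e$ explicitly (as in the proof of Proposition~\ref{prop:equivalence_SQPS_CI}), whereas you work with the abstract unique $e$ and justify the diagonal recovery through the scaling identity $\tes{2f,h}=4\tes{f,h}$, which amounts to the same computation.
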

\begin{proof}
Let $\tes{\cdot,\cdot}$ be a SQPS. Then, the function
\begin{align*}
e:\left(\mathcal{C}^\infty(M)\right)^4 & \to\mathcal{C}^\infty(M),\\
(f,g,h,k) & \mapsto \frac{1}{4}\big(\tes{f+g,h+k}+\tes{f,h}+\tes{f,k}+\tes{g,h}+\tes{g,k}\\
& \hspace{1cm}-\tes{f+g,h}-\tes{f+g,k}-\tes{f,h+k}-\tes{g,h+k}\big)
\end{align*}
fulfils the properties (i)--(iv) of Definition~\ref{def:conservative_irreversible}. In view of Lemma~\ref{lem:nice_symmetry}, we conclude, for all $f,g\in\mathcal{C}^\infty(M)$,
\begin{align*}
\tes{f,g} = e(f,f,g,g) = e(g,g,f,f) = \tes{g,f}.
\end{align*}
Thus, $\tes{\cdot,\cdot}$ is symmetric; and hence $\tes{\cdot,h} = \tes{h,\cdot}$ for all $h\in\mathcal{C}^\infty(M)$. This shows that $\tes{\cdot,\cdot}$ is a SQPS.
\end{proof}

In definition~\ref{def:SQPS}, SQPS are not assumed to be symmetric. Since each SQPS is a left SQPS, Corollary~\ref{cor:cor_1} implies that each SQPS is symmetric. An alternative proof, which does not use the one-to-one correspondence of SQPS and conservative-irreversible systems is the following.

\begin{Lemma}\label{lem:symmetric_SQPS}
A function $\tes{\cdot,\cdot}:\mathcal{C}^\infty(M)\times\mathcal{C}^\infty(M)\to\mathcal{C}^\infty(M)$ is a SQPS if, and only if, $\tes{\cdot,\cdot}$ is a symmetric left SQPS.
\end{Lemma}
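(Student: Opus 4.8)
The plan is to prove the nontrivial implication — that a SQPS, indeed already a left SQPS, is symmetric — by attaching to $\tes{\cdot,\cdot}$ the four-argument function into which it polarises and appealing to Lemma~\ref{lem:nice_symmetry}; the converse is essentially immediate. For the easy direction: if $\tes{\cdot,\cdot}$ is a symmetric left SQPS, then ${}_h\delta = \delta_h$ for every $h\in\mathcal{C}^\infty(M)$, so the $h$-translation invariance that Definition~\ref{def:SQPS}(ii) imposes, in a left SQPS, only on $\delta_h$ is automatically shared by ${}_h\delta$; together with properties (i), (iii), (iv), which a left SQPS already imposes on both families, this makes $\tes{\cdot,\cdot}$ a SQPS. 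Conversely a SQPS is trivially a left SQPS, so the whole content is symmetry of a left SQPS.

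So let $\tes{\cdot,\cdot}$ be a left SQPS and fix $h$. Since $\delta_h$ and ${}_h\delta$ satisfy~\eqref{eq:inclusion_exclusion} and~\eqref{eq:interessant_2}, Proposition~\ref{prop:algebraic_interlude_1} shows that each is quadratic with an associated biderivation — write $\Delta_h$ and ${}_h\Delta$ — and that $\delta_h(f) = \tfrac{1}{2}\Delta_h(f,f)$ and ${}_h\delta(f) = \tfrac{1}{2}\,{}_h\Delta(f,f)$. I would then extract the one consequence of translation invariance that is needed: expanding $\delta_h(f+h) = \delta_h(f)$ via the last identity gives $2\Delta_h(f,h) + \Delta_h(h,h) = 0$ for all $f$; putting $f = h$ and using that $\mathbb{R}$ has characteristic zero forces $\Delta_h(h,h) = 0$, hence $\Delta_h(\cdot,h)\equiv 0$. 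Next, define $\varepsilon:\big(\mathcal{C}^\infty(M)\big)^4\to\mathcal{C}^\infty(M)$ by polarising $\tes{\cdot,\cdot}$ successively in each of its two arguments — concretely, the four-argument function written down in the proof of Proposition~\ref{prop:equivalence_SQPS_CI} — so that $\varepsilon$ is symmetric in its first two and in its last two slots, $\varepsilon(f,g,h,h) = \tfrac{1}{2}\Delta_h(f,g)$, and $\varepsilon(f,f,g,k) = \tfrac{1}{2}\,{}_f\Delta(g,k)$. Linearity of polarisation together with the biderivation property of all the $\Delta_h$ shows $\varepsilon$ is a derivation in its first two slots, and the analogous computation with the $\,{}_f\Delta$ shows it is a derivation in its last two slots as well.

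It remains to verify that $\varepsilon$ satisfies (i)--(iv) of Definition~\ref{def:conservative_irreversible}: (i) and (ii) are exactly what was just recorded; (iii) holds because $\varepsilon(\cdot,\cdot,h,h) = \tfrac{1}{2}\Delta_h$ is symmetric and $\varepsilon(f,f,h,h) = \tes{f,h}\geq 0$ pointwise by Definition~\ref{def:SQPS}(i); and (iv) holds because $\varepsilon(h,g,h,h) = \tfrac{1}{2}\Delta_h(h,g) = 0$ by the vanishing established above. Lemma~\ref{lem:nice_symmetry} then yields $\varepsilon(f,g,h,k) = \varepsilon(h,k,f,g)$, so $\tes{f,g} = \varepsilon(f,f,g,g) = \varepsilon(g,g,f,f) = \tes{g,f}$; thus $\tes{\cdot,\cdot}$ is symmetric, and combined with the left-SQPS hypothesis it is a symmetric left SQPS, hence a SQPS.

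I expect the only real friction to be bookkeeping in the polarisation step — in particular confirming that $\varepsilon$ is a derivation in its third and fourth slots, where one must genuinely invoke Proposition~\ref{prop:algebraic_interlude_1} for the family $\,{}_f\delta$ and not merely for $\delta_h$; this is precisely the point at which the full left-SQPS hypothesis (the properties demanded of ${}_h\delta$) is used. No step appears to hide a genuine obstacle once Proposition~\ref{prop:algebraic_interlude_1} and Lemma~\ref{lem:nice_symmetry} are in hand.
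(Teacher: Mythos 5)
Your argument is correct, including the two delicate points: the identity $2\Delta_h(f,h)+\Delta_h(h,h)=0$ extracted from $h$-translation invariance of $\delta_h$ does force $\Delta_h(\cdot,h)\equiv 0$, and the polarised function can be written both as $e(f,g,h,k)=\tfrac14\big(\Delta_{h+k}-\Delta_h-\Delta_k\big)(f,g)$ and as $\tfrac14\big({}_{f+g}\Delta-{}_f\Delta-{}_g\Delta\big)(h,k)$, which yields the derivation property in all four slots and indeed requires properties (iii)--(iv) of Definition~\ref{def:SQPS} for the family ${}_h\delta$, exactly as you flag. However, this is not the route the paper takes for this lemma: your proof is in substance Corollary~\ref{cor:cor_1} (every left SQPS is symmetric), obtained by passing through the correspondence with conservative-irreversible functions (you re-prove inline the ``$\impliedby$'' half of Proposition~\ref{prop:equivalence_SQPS_CI}) and then invoking Lemma~\ref{lem:nice_symmetry}; the paper offers the proof of Lemma~\ref{lem:symmetric_SQPS} explicitly as an \emph{alternative} argument that avoids this correspondence, deriving symmetry by a direct computation inside the SQPS axioms (expanding $\tes{f,g}=\tes{f+g,g}$ via inclusion-exclusion, homogeneity and the two-sided translation invariance until $\tes{f,g}=2\tes{g,f}-\tes{f,g}$). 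Comparing the two: your version proves the stronger fact that left-translation invariance alone already implies symmetry, so the lemma follows at once, and it recycles the structural machinery of Sections~2 and~4; the paper's computation needs the full SQPS hypothesis (both translation invariances) but is elementary and self-contained, which is precisely its purpose as an independent derivation. Your easy direction matches the paper's. So the proposal is valid, but it duplicates the correspondence-based argument the paper deliberately set aside here rather than the intended direct one.
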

\begin{proof}
\noindent ``$\impliedby$''\quad By definition, a left SQPS $\tes{\cdot,\cdot}$ is a SQPS if, and only if, the function
\begin{align*}
[[\cdot,\cdot]]:\mathcal{C}^\infty(M)\times\mathcal{C}^\infty(M)\to\mathcal{C}^\infty(M),\quad (f,g)\mapsto \tes{g,f}
\end{align*}
is also a left SQPS. If $\tes{\cdot,\cdot}$ is a symmetric left SQPS, then $[[\cdot,\cdot]] = \tes{\cdot,\cdot}$ and hence the latter is a SQPS.

\noindent ``$\implies$''\quad Conversely, let $\tes{\cdot,\cdot}$ be a SQPS. In particular, $\tes{\cdot,\cdot}$ is a left SQPS and it remains to show symmetry. Let $f,g\in\mathcal{C}^\infty(M)$. Proposition~\ref{prop:algebraic_interlude_1} implies that $\tes{\cdot,0}\equiv 0$ and hence we conclude with property (ii) of Definition~\ref{def:SQPS}
\begin{align*}
\tes{f,g}& = \tes{f+g,g}=\tes{f+g,f+g-f}\\
& = \tes{f+g,f+g}+\tes{f+g,f-f}+\tes{f+g,g-f}-\tes{f+g,f}-\tes{f+g,g}\\
& \quad -\tes{f+g,-f}\\
& = 0+0+\tes{f+g,g-f}-\tes{f+g,f}-\tes{f+g,g}-\tes{f+g,f}\\
& = \tes{2g,g-f}-2\tes{g,f}-\tes{f,g}\\
& = 4\tes{g,g-f}-\tes{f,g}-2\tes{g,f}\\
& = 4\tes{g,f}-\tes{f,g}-2\tes{g,f}=2\tes{g,f}-\tes{f,g},
\end{align*}
from which the symmetry follows.
\end{proof}

As a consequence, we may characterise simple conservative-irreversible functions through their associated SQPS.

\begin{Lemma}
A conservative-irreversible function $E$ is simple if, and only if, the associated SQPS with~\eqref{eq:SQPS_representation} is, up to multiplication with a non-negative function, the square of an almost Poisson bracket.
\end{Lemma}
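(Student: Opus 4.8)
The plan is to read off both implications directly from the bijection of Proposition~\ref{prop:equivalence_SQPS_CI}, under which the SQPS associated with $E$ is $\tes{f,g} = E(f,f,g)$ and $E$ is recovered from it by the polarisation formula \eqref{eq:SQPS_representation}; the only algebraic input needed beyond this correspondence is the binomial identity for an $\mathbb{R}$-bilinear map.

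For necessity I would start from a simple $E$, so that $E(f,g,h) = \lambda\{f,h\}\{g,h\}$ for some almost Poisson bracket $\{\cdot,\cdot\}$ and some non-negative $\lambda:M\to\mathbb{R}_{\geq 0}$ as in \eqref{eq:simple_representation}. Setting $g=f$ gives, for the SQPS associated to $E$ via \eqref{eq:SQPS_representation}, the identity $\tes{f,h} = E(f,f,h) = \lambda\{f,h\}^2$, so the associated SQPS is $\lambda$ times the square of $\{\cdot,\cdot\}$, as claimed. For sufficiency I would start from the assumption that the associated SQPS has the form $\tes{f,g} = \mu\{f,g\}^2$ with $\mu\geq 0$ and $\{\cdot,\cdot\}$ an almost Poisson bracket, substitute it into \eqref{eq:SQPS_representation}, and use linearity of $\{\cdot,h\}$ in the first slot to compute
\begin{align*}
E(f,g,h) = \tfrac{1}{2}\mu\big(\{f+g,h\}^2-\{f,h\}^2-\{g,h\}^2\big) = \mu\{f,h\}\{g,h\},
\end{align*}
which is exactly a representation of the form \eqref{eq:simple_representation}; hence $E$ is simple.

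I do not expect any real obstacle here: the statement is a corollary of Proposition~\ref{prop:equivalence_SQPS_CI} once one observes that squaring the bracket and polarising are mutually inverse operations on the biquadratic data. The two small points worth spelling out in the write-up are that the SQPS in \eqref{eq:SQPS_representation} is uniquely determined, so that speaking of ``the associated SQPS'' is unambiguous, and that the identity $\{f+g,h\}^2-\{f,h\}^2-\{g,h\}^2 = 2\{f,h\}\{g,h\}$ uses only the $\mathbb{R}$-bilinearity of $\{\cdot,\cdot\}$ and not its skew-symmetry, which is consistent with $\{\cdot,\cdot\}$ being merely an almost Poisson bracket and with the non-negative weight $\mu$ (equivalently $\lambda$) being otherwise unconstrained.
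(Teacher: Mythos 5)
Your proposal is correct and follows essentially the same route as the paper: the paper also treats the direction ``$E$ simple $\implies$ SQPS is $\lambda\{\cdot,\cdot\}^2$'' as immediate (you simply spell out the substitution $g=f$), and proves the converse exactly as you do, by inserting $\tes{f,g}=\lambda\{f,g\}^2$ into the polarisation formula~\eqref{eq:SQPS_representation} and using bilinearity to obtain $E(f,g,h)=\lambda\{f,h\}\{g,h\}$. Your added remarks on the uniqueness of the associated SQPS and on not needing skew-symmetry are consistent with the paper and harmless.
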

\begin{proof}
Necessity is evident; we show sufficiency. Let $\lambda:M\to\mathbb{R}$ and let $\{\cdot,\cdot\}$ be an almost Poisson bracket so that
\begin{align*}
\tes{\cdot,\cdot}:\mathcal{C}^\infty(M)\times\mathcal{C}^\infty(M)\to\mathcal{C}^\infty(M),\quad (f,g)\mapsto \lambda\{f,g\}^2
\end{align*}
is an SQPS. Then, the associated conservative-irreversible function $E$ fulfils, for all $f,g,h\in\mathcal{C}^\infty(M)$,
\begin{align*}
E(f,g,h) & = \frac{1}{2}(\tes{f+g,h}-\tes{f,h}-\tes{g,h})\\
& = \lambda\frac{1}{2}(\{f+g,h\}^2-\{f,h\}^2-\{g,h\}^2) = \lambda\{f,h\}\{g,h\},
\end{align*}
and hence $E$ is indeed simple.
\end{proof}

\section{Dynamical systems of conservative-irreversibly functions}\label{sec:5}

Let $M$ be a smooth manifold and $E$ a conservative-irreversible function on $M$. Each pair of generating functions $(s,h)\in\mathcal{C}^\infty(M)\times\mathcal{C}^\infty(M)$ induces the dynamical system
\begin{align}\label{eq:cons_irrev_system}
\tfrac{\mathrm{d}}{\mathrm{d}t} f\circ x = E(f,s,h)\circ x\qquad \text{for all}~f\in\mathcal{C}^\infty(M),
\end{align}
i.e. a smooth curve $x\in\mathcal{C}^\infty(M)$ is a trajectory of the dynamical system~\eqref{eq:cons_irrev_system} if, and only if, for each \textit{observable} $f\in\mathcal{C}^\infty(M)$ the curve $f\circ x$ fulfils the equation~\eqref{eq:cons_irrev_system}. Using the associated tensor field $\varepsilon\in\Gamma(T^4(M))$, we may coin the following definition.

\begin{Definition}
Let $E$ be a conservative-irreversible function on a manifold $M$. A vector field $X$ is called \textit{conservative-irreversible} with generating functions $(s,h)\in\mathcal{C}^\infty(M)\times\mathcal{C}^\infty(M)$ if, and only if, $X = E(\cdot,s,h)$; we write $X_{(s,h)} := E(\cdot,s,h)$.
\end{Definition}

Analogously to Hamiltonian systems defined on Poisson manifolds, the dynamical system~\eqref{eq:cons_irrev_system}, written implicitly via the infinitesimal action of the solutions on all observables, can then be recognised as the dynamical system given by the vector field $X_{(s,h)}$.

\begin{Lemma}
Let $s,h\in\mathcal{C}^\infty$. A curve $x\in\mathcal{C}^\infty(I,M)$ is a trajectory of~\eqref{eq:cons_irrev_system} if, and only if, $x$ is an integral curve of $X_{(s,h)}$.
\end{Lemma}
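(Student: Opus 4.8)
The plan is to unwind the definitions on both sides and observe that the equation~\eqref{eq:cons_irrev_system} is precisely the coordinate-free statement that $\tfrac{\mathrm{d}}{\mathrm{d}t}(f\circ x)$ equals the action of the vector field $X_{(s,h)} = E(\cdot,s,h)$ on $f$ along $x$. First I would recall that for any vector field $X\in\mathfrak{X}(M)$ and any $f\in\mathcal{C}^\infty(M)$, the contraction $\iota_f X = X(f) \in\mathcal{C}^\infty(M)$; and that $x\in\mathcal{C}^\infty(I,M)$ is an integral curve of $X$ if and only if $\dot{x}(t) = X(x(t))$ for all $t\in I$, which by the chain rule is equivalent to $\tfrac{\mathrm{d}}{\mathrm{d}t}(f\circ x)(t) = \mathrm{d}f_{x(t)}(\dot x(t)) = \mathrm{d}f_{x(t)}\big(X(x(t))\big) = (X(f))(x(t)) = (X(f)\circ x)(t)$ for all $f\in\mathcal{C}^\infty(M)$ and all $t\in I$.

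The only thing left is to identify $X(f)$ with $E(f,s,h)$ when $X = X_{(s,h)}$. Since $E$ is conservative-irreversible, Lemma~\ref{lem:tensors} gives a unique tensor field $\varepsilon\in\Gamma(T^4(M))$ with $E(f,g,h)(x) = \varepsilon_x(\mathrm{d}f_x,\mathrm{d}g_x,\mathrm{d}h_x,\mathrm{d}h_x)$; in particular, for fixed $s,h$, the assignment $f\mapsto E(f,s,h)$ is a derivation on $\mathcal{C}^\infty(M)$ (property (i) of Definition~\ref{def:conservative_irreversible}), hence a genuine vector field, and this is exactly $X_{(s,h)}$ by its definition $X_{(s,h)} := E(\cdot,s,h)$. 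Therefore $X_{(s,h)}(f) = E(f,s,h)$ for all $f$, so $(X_{(s,h)}(f))\circ x = E(f,s,h)\circ x$, and the equivalence follows immediately from the chain-rule computation above.

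The proof is essentially bookkeeping; there is no real obstacle. The only point requiring a word of care is the quantifier over all observables: one must note that it suffices to test the integral-curve condition against the coordinate functions of any chart, so that the implicit system~\eqref{eq:cons_irrev_system} holding for every $f\in\mathcal{C}^\infty(M)$ is genuinely equivalent to the pointwise ODE $\dot x(t) = X_{(s,h)}(x(t))$, rather than a strictly weaker condition. I would state this briefly and then conclude.
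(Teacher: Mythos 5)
Your proposal is correct and follows essentially the same route as the paper: the chain rule gives the direction from integral curve to trajectory, and the converse is obtained by testing~\eqref{eq:cons_irrev_system} against coordinate functions of a chart around each point of the curve. The only cosmetic difference is that, since~\eqref{eq:cons_irrev_system} quantifies over $f\in\mathcal{C}^\infty(M)$, the paper takes globally defined functions $\varphi_1,\dots,\varphi_n\in\mathcal{C}^\infty(M)$ restricting to a chart near $x(t_0)$, a standard extension you should make explicit when invoking ``coordinate functions of any chart''.
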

\begin{proof}
``$\impliedby$''\quad Let $x$ be an integral curve of $X_{(s,h)}$ and $f\in\mathcal{C}^\infty(M)$. Then, we have
\begin{align*}
\tfrac{\mathrm{d}}{\mathrm{d}t} f\circ x = \left\langle\mathrm{d}f\circ x,\tfrac{\mathrm{d}}{\mathrm{d}t} x\right\rangle = \left\langle \mathrm{d}f\circ x,X\circ x\right\rangle = \varepsilon(\mathrm{d}f,\mathrm{d}s,\mathrm{d}h,\mathrm{d}h)\circ x = E(f,s,h)\circ x
\end{align*}
and thus $x$ is a trajectory of the dynamical system~\eqref{eq:cons_irrev_system}.

``$\implies$''\quad Conversely, let $x\in\mathcal{C}^\infty(I,M)$, $I\subseteq\mathbb{R}$ a nonempty open interval, be a trajectory of the dynamical system~\eqref{eq:cons_irrev_system} and let $t_0\in I$. Let $\varphi_1,\ldots,\varphi_n\in\mathcal{C}^\infty(M)$ so that $(\varphi_1,\ldots,\varphi_n)$ is a diffeomorphism in a neighbourhood of $x(t_0)$. Then, we have in a neighbourhood $J\subseteq I$ of $t_0$
\begin{align*}
\forall i\in\mathbb{N}^*_{\leq n}: \tfrac{\mathrm{d}}{\mathrm{d}t}\varphi_i\circ x = E(\varphi_i,s,h)\circ x = (X_{(s,h)})_i\circ x,
\end{align*}
where $(X_{(s,h)})_i$ is the $i$-th coordinate with respect to the local coordinate system $(\partial_{\varphi_1},\ldots,\partial_{\varphi_n})$. This shows that $x$ is indeed an integral curve of $X_{(s,h)}$.
\end{proof}

In the following lemma, we verify that the dynamical system~\eqref{eq:cons_irrev_system} fulfils the laws of thermodynamics. Additionally, we show that the entropy function $s$ is, in general, not preserved.

\begin{Lemma}
Let $E$ be a conservative-irreversible function on $M$, $s,h\in\mathcal{C}^\infty(M)$ and let $x$ be a solution of~\eqref{eq:cons_irrev_system}. Then, we have
\begin{enumerate}[(i)]
\item $\tfrac{\mathrm{d}}{\mathrm{d}t} h\circ x\equiv 0$, i.e. (generalised) energy is preserved along trajectories.
\item $\tfrac{\mathrm{d}}{\mathrm{d}t} s\circ x\geq 0$, i.e. (generalised) entropy is irreversibly produced along trajectories.
\end{enumerate}
Furthermore, we see
\begin{enumerate}
\item[(iii)] If $E\neq 0$, then there exists generating function $s_0,h_0\in\mathcal{C}^\infty(M)$ so that $\tfrac{\mathrm{d}}{\mathrm{d}t} s_0\circ x\neq 0$ for some solution $x$ of~\eqref{eq:cons_irrev_system}.
\end{enumerate}
\end{Lemma}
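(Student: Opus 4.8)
The plan is to deduce (i) and (ii) directly from the defining properties of the function $e$ associated with $E$ through~\eqref{eq:extension_CI}, and to obtain (iii) from a polarisation argument combined with the local existence of integral curves. First I would fix a solution $x\in\mathcal{C}^\infty(I,M)$ of~\eqref{eq:cons_irrev_system}; by the preceding lemma $x$ is an integral curve of $X_{(s,h)}$, so $\tfrac{\mathrm{d}}{\mathrm{d}t}f\circ x = E(f,s,h)\circ x = e(f,s,h,h)\circ x$ for every $f\in\mathcal{C}^\infty(M)$. Choosing $f = h$ and invoking property~(iv) of Definition~\ref{def:conservative_irreversible}, by which $e(h,\cdot,h,h)$ vanishes identically, yields $\tfrac{\mathrm{d}}{\mathrm{d}t}h\circ x = e(h,s,h,h)\circ x = 0$, which is~(i). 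Choosing $f = s$ and invoking property~(iii), by which $e(\cdot,\cdot,h,h)$ is pointwise positive semi-definite, yields $\tfrac{\mathrm{d}}{\mathrm{d}t}s\circ x = e(s,s,h,h)\circ x\geq 0$ pointwise, which is~(ii).

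For~(iii) I would first observe that, since $e$ is $\mathbb{R}$-linear in each argument by property~(i) and $E(\cdot,\cdot,h) = e(\cdot,\cdot,h,h)$ is symmetric by property~(iii), the bilinear map $E(\cdot,\cdot,h)$ is recovered by polarisation from $f\mapsto E(f,f,h)$; hence $E\equiv 0$ whenever $E(f,f,h) = 0$ for all $f,h\in\mathcal{C}^\infty(M)$. Taking the contrapositive, $E\neq 0$ furnishes $s_0,h_0\in\mathcal{C}^\infty(M)$ and a point $p\in M$ with $E(s_0,s_0,h_0)(p)\neq 0$. Since the vector field $X_{(s_0,h_0)} = E(\cdot,s_0,h_0)$ is smooth, the standard existence theorem for ordinary differential equations provides an integral curve $x$ of $X_{(s_0,h_0)}$ defined on a nonempty open interval with $x(t_0) = p$ for some interior $t_0$. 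By the preceding lemma $x$ is a solution of~\eqref{eq:cons_irrev_system} with generating functions $(s_0,h_0)$, and $\tfrac{\mathrm{d}}{\mathrm{d}t}s_0\circ x$ takes the value $E(s_0,s_0,h_0)(p)\neq 0$ at $t_0$, so it is not identically zero; this proves~(iii).

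The argument is largely bookkeeping. The only two points that require a little care are the polarisation step, which rests solely on the $\mathbb{R}$-bilinearity and symmetry of $E$ in its first two slots and not on any positivity, and the appeal to local solvability of the ODE defined by $X_{(s_0,h_0)}$, which is needed to turn a point where the entropy production is nonzero into an honest trajectory along which $s_0$ fails to be preserved. I do not anticipate a genuine obstacle.
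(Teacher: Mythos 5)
Your proposal is correct and follows essentially the same route as the paper: parts (i) and (ii) follow by plugging $f=h$ and $f=s$ into the evolution equation and invoking properties (iv) and (iii) of Definition~\ref{def:conservative_irreversible}, and part (iii) uses the symmetry of $E$ in its first two arguments (your explicit polarisation) to produce $s_0,h_0$ and a point with $E(s_0,s_0,h_0)\neq 0$, followed by an integral curve through that point. You merely spell out the polarisation and the ODE existence step, which the paper leaves implicit.
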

\begin{proof}
Let $E$ be a conservative-irreversible function and $s,h\in\mathcal{C}^\infty(M)$. The properties (i) and (ii) are direct consequences of Definition~\ref{def:conservative_irreversible}\,(iii) and (iv):
\begin{align*}
\tfrac{\mathrm{d}}{\mathrm{d}t} h\circ x = E(h,s,h,h)\circ x \equiv 0
\end{align*}
and
\begin{align*}
\tfrac{\mathrm{d}}{\mathrm{d}t} s\circ x = E(s,s,h)\circ x \geq 0.
\end{align*}
Now, let $E\neq 0$. Due to symmetry of $E$ in the first two arguments, there exists $s_0,h_0\in \mathcal{C}^\infty(M)$ and $x_0\in M$ so that $E(s_0,s_0,h_0)(x_0)\neq 0$. Consider an integral curve $x$ of the vector field $X_{(s_0,h_0)}$ with $x(0) = x_0$. Then, we have 
\begin{align*}
\tfrac{\mathrm{d}}{\mathrm{d}t}(s\circ x)(0) = E(s_0,s_0,h_0)(x_0)\neq 0.
\end{align*}
This completes the proof.
\end{proof}

Using the conservative irreversible functions, an irreversible Hamiltonian system may be defined in an alternative and slightly more general way to~\cite{RamiMascSbar13,RamiLeGo22} as follows.

\begin{Definition}
Let $M$ be a smooth manifold with Poisson bracket $\{\cdot,\cdot\}$ and conservative-irreversible function $E$. An \textit{irreversible Hamiltonian system} with generating functions $(s,h)\in\mathcal{C}^\infty(M)\times\mathcal{C}^\infty(M)$ is the dynamical system
\begin{align}\label{eq:irr_Ham}
\tfrac{\mathrm{d}}{\mathrm{d}t}f\circ x = \{f,h\} + E(f,s,h)\qquad \text{for all}~f\in\mathcal{C}^\infty(M),
\end{align}
with either the \textit{weak noninteraction condition} $\{s,h\} = 0$ or the \textit{strong noninteraction condition} $\{s,\cdot\}\equiv 0$. The vector field $\widehat{X}_{(s,h)} := X_h+X_{(s,h)}$, where~$X_h$ denotes the Hamiltonian vector field $X_h = \{\cdot,h\}$ and $X_{(s,h)}$ is the conservative-irreversible vector field generated by $(s,h)$, is called \textit{irreversible Hamiltonian vector field}.
\end{Definition}

Lastly, we demonstrate how irreversible Hamiltonian systems may be viewed as particular classes of metriplectic systems as well as the general systems of Beris and Edwards, thereby unifying, in some sense, these approaches.

Let $M$ be a smooth manifold with Poisson bracket $\{\cdot,\cdot\}$ and conservative-irreversible function $E$. We show how the irreversible Hamiltonian vector fields can be viewed simultaneously as metriplectic systems and as systems of Beris and Edwards. Let $s,h\in\mathcal{C}^\infty(M)$ so that the noninteraction condition $\{s,h\} = 0$ is fulfilled. Define
\begin{align*}
[\cdot,\cdot]_{BE}:=E(\cdot,s,\cdot)
\end{align*}
Then, $[\cdot,\cdot]_{BE}$ is a derivation in the first argument and fulfils, by construction,
\begin{align*}
[h,h]_{BE} = 0 \leq [s,h]_{BE}
\end{align*}
so that the system~\eqref{eq:irr_Ham} written as
\begin{align*}
\tfrac{\mathrm{d}}{\mathrm{d}t}f\circ x = \{f,h\} + [s,h]_{BE}\qquad \text{for all}~f\in\mathcal{C}^\infty(M),
\end{align*}
has the form which was considered by~\cite{BeriEdwa94}. Secondly, consider the brackets
\begin{align*}
[\cdot,\cdot]_M:=E(\cdot,\cdot,h).
\end{align*}
Definition~\ref{def:conservative_irreversible}\,(iii) guarantees that this is a non-negative bracket and
\begin{align*}
[h,\cdot]_M = E(h,\cdot,h) \equiv 0
\end{align*}
shows that $[\cdot,\cdot]_M$ fulfils the strong noninteraction condition of metriplectic systems. Since the (weak or strong) noninteraction condition on the almost Poisson bracket are the same as for irreversible Hamiltonian systems,~\eqref{eq:irr_Ham} written as
\begin{align*}
\tfrac{\mathrm{d}}{\mathrm{d}t}f\circ x = \{f,h\} + [f,s]_M\qquad \text{for all}~f\in\mathcal{C}^\infty(M),
\end{align*}
is a metriplectic system. 

Insofar, irreversible Hamiltonian systems are indeed simultaneously particular  metriplectic systems and Beris-Edwards-systems. However, this view is a bit flawed. The brackets $[\cdot,\cdot]_{BE}$ and $[\cdot,\cdot]_M$ depend explicitly on $s$ and $h$, respectively. Therefore, a perturbation of the generating functions must automatically be a perturbation of the underlying brackets in this point of view, which is not the case for general systems.

\section{Conclusion}\label{sec:6}

We have studied conservative-irreversible functions on smooth, finite-dimensional manifolds. These functions are uniquely associated to a class of four-tensor-fields and a class of bi-quadratic functions, which were fully characterised by intrinsic properties. As an interesting subclass, we recognised simple conservative-irreversible systems, which are induced by (almost) Poisson brackets and which were the original motivation of the study of these systems. Using local representations, we were able to show that each conservative-irreversible function is a locally finite, but in general not trivial, sum of simple conservative-irreversible functions, which revealed the relation to the recently suggested class of metriplectic four-tensors.

\bibliographystyle{plain}   
\bibliography{Literatur}

\end{document}